\renewcommand{\vec}[1]{\boldsymbol{#1}}
\let\oldhat\hat
\renewcommand{\hat}[1]{\oldhat{\boldsymbol{#1}}}
\newcommand{\de}{\ensuremath{\partial}}						
\newcommand{\dee}{\ensuremath{\textrm{d}}}
\newcommand{\pdf}[1]{\ensuremath{ \frac{\de}{\de #1}}}
\newcommand{\inty}[4]{\ensuremath{ \int_{#1}^{#2} \! #3 \, \dee#4 }}
\newcommand{\field}[1]{\mathbb{#1}}
\newcommand{\ip}[2]{\ensuremath{ \left< #1 \left| #2 \right. \right> } }
\DeclareMathOperator{\diag}{diag}
\newtheorem{assumption}{Assumption}
\newtheorem{remark}{Remark}
\newtheorem{theorem}{Theorem}
\newtheorem{lemma}{Lemma}
\newtheorem{proposition}{Proposition}
\newtheorem{corollary}{Corollary}
\newtheorem{example}{Example}
\numberwithin{lemma}{section}
\numberwithin{example}{section}
\numberwithin{proposition}{section}
\numberwithin{equation}{section}
\numberwithin{theorem}{section}
\numberwithin{corollary}{section}
\numberwithin{remark}{section}
\numberwithin{definition}{section}
\numberwithin{assumption}{section}
\numberwithin{figure}{section}
\newtheorem*{definition*}{Definition}		
\newtheorem*{assumption*}{Assumption}
\newtheorem*{remark*}{Remark}
\newtheorem*{theorem*}{Theorem}
\newtheorem*{lemma*}{Lemma}
\newtheorem*{proposition*}{Proposition}
\newtheorem*{corollary*}{Corollary}
\newtheorem*{example*}{Example}
\newtheorem*{conjecture*}{Conjecture}
\let\Re\relax
\DeclareMathOperator{\Re}{Re}
\title{Bistritzer-MacDonald dynamics in twisted bilayer graphene}
\author[A. B. Watson]{Alexander B. Watson}
\author[T. Kong]{Tianyu Kong}
\author[A. H. MacDonald]{Allan H. MacDonald}
\author[M. Luskin]{Mitchell Luskin}
\thanks{AW's research was supported in part by NSF DMREF Award No. 1922165, ML's research was supported in part by NSF Award DMS-1906129, and ML's, AM's and TK's research was supported in part by Simons Targeted Grant Award No. 896630. The authors thank Eric Canc\`es, Diyi Liu, and Max Engelstein for stimulating discussions.
}
\begin{document}
\begin{abstract}
    The Bistritzer-MacDonald (BM) model, introduced in \cite{Bistritzer2011}, attempts to capture the electronic properties of twisted bilayer graphene (TBG),
    even at incommensurate twist angles, by an effective periodic model over the bilayer moir\'e pattern.
    Starting from a tight-binding model, we identify a regime where the BM model emerges as the effective dynamics for electrons modeled as wave-packets spectrally concentrated at the monolayer Dirac points, up to error that can be rigorously estimated. Using measured values of relevant physical constants, we argue that this regime is realized in TBG at the first ``magic'' angle.
\end{abstract}

\maketitle

\section{Introduction}

\subsection{Overview}

Twisted bilayer graphene (TBG) consists of two graphene monolayers deposited on top of each other with a relative twist. Although, for generic twist angles, twisted bilayer graphene has no \emph{exact} periodic cell, the atomic structure of twisted bilayer graphene has an \emph{approximate} twist angle-dependent periodic structure known as the bilayer moir\'e pattern. Bistritzer and MacDonald \cite{Bistritzer2011,Bistritzer2010} proposed a model in 2011 for the electronic properties of TBG which is periodic over the bilayer moir\'e pattern, and used it to predict a series of ``magic'' angles where the Fermi velocity would vanish and electron-electron interaction effects might be enhanced. These predictions were dramatically validated by the experimental observations of Mott insulation and superconductivity at the first (largest) magic angle $\theta \approx 1^\circ$ in 2018 \cite{Cao2018a,Cao2018}. Since these discoveries, superconductivity and other strongly correlated electronic phases have been observed in many other twisted multilayer systems, now often referred to collectively as moir\'e materials. The ubiquity of moir\'e scale models in the field of moir\'e materials makes it important to understand their range of validity. Although compelling formal arguments for the validity of such models exist \cite{Bistritzer2011,Bistritzer2010,Catarina2019,CancesGarrigueGontier2022}, to our knowledge, no mathematically rigorous justification of any of these models exists in the literature. 

The present work identifies a parameter regime where the Bistritzer-MacDonald model \cite{Bistritzer2011} can be used to approximate the electronic properties of TBG up to error which can be \emph{rigorously estimated}. More specifically,
it considers the single-particle time-dependent Schr\"odinger equation, in the tight-binding approximation, for electrons in TBG, making a two-center approximation for the interlayer hopping function (see \eqref{eq:form_of_mathfrak_h} and Assumption \ref{as:h_regularity}). We non-dimensionalize the model, identifying its fundamental dimensionless parameters: $\ell > 0$, the ratio of interlayer distance to the monolayer lattice constant, and $\theta > 0$, the twist angle. We introduce sufficiently regular (Assumption \ref{as:f_regularity}) wave-packet initial data \eqref{eq:WP_0}, spectrally concentrated at the monolayer Dirac points, whose spread in momentum space is controlled by a new dimensionless parameter $\gamma > 0$\footnote{Equivalently, the scale of variation of the wave-packet envelope in real space is proportional to $\gamma^{-1}$.}. We consider the parameter regime where 
\begin{equation} \label{eq:balance__0}
    \oldhat{h}(|\vec{\kappa}|;\ell) \propto \gamma, \text{ and } \theta \lesssim \gamma, \text{ as } \gamma \rightarrow 0,
\end{equation}
where $\oldhat{h}(|\vec{\kappa}|;\ell)$ denotes the Fourier transform of the (non-dimensionalized) interlayer hopping function evaluated at the (non-dimensionalized) monolayer $\vec{K}$ point $\vec{\kappa} := a \vec{K}$, where $a$ is the monolayer lattice constant. We then prove that, in the regime \eqref{eq:balance__0}, the solution evolves as a wave-packet \eqref{eq:WP} over timescales $\propto \gamma^{-(1 + \nu)}$, for any $0 \leq \nu < c$, for some constant $c > 0$ depending on the interlayer hopping function (Assumption \ref{as:h_regularity}), up to error which $\rightarrow 0$ as $\gamma \rightarrow 0$, \emph{if and only if\footnote{For the ``only if'' part of this statement, see equation \eqref{eq:residual_decomp} of Lemma \ref{lem:residual_lemma}.} the wave-packet envelopes evolve according to a (scaled) Bistritzer-MacDonald model} (Lemma \ref{lem:residual_lemma} and Theorem \ref{thm:final_result}). That BM model is, specifically, 
\begin{equation} \label{eq:BM_model_0}
    i \de_\tau f = \begin{pmatrix} \vec{\sigma} \cdot (- i \nabla) & \oldhat{h}(|\vec{\kappa}|;\ell) \sum_{n = 1}^3 T_n e^{- i \mathfrak{s}_n \cdot \vec{r}} \\ \oldhat{h}(|\vec{\kappa}|;\ell) \sum_{n = 1}^3 T_n^\dagger e^{i \mathfrak{s}_n \cdot \vec{r}} & \vec{\sigma} \cdot (- i \nabla) \end{pmatrix} f,
\end{equation}
$\vec{\sigma} = ({\sigma}_1,{\sigma}_2)$ denotes the vector of Pauli matrices, $f = \left( f_1^A, f_1^B, f_2^A, f_2^B \right)^\top$ denotes the vector of envelope functions on layers $1, 2$ and monolayer sublattices $A, B$, and $T_n$ and $\mathfrak{s}_n := a \vec{s}_n$, $n \in \{1,2,3\}$, denote the hopping matrices \eqref{eq:hopping_matrices} and (non-dimensionalized) momentum shifts \eqref{eq:momentum_space_hops}. It is straightforward to check that the model \eqref{eq:BM_model_0} is invariant (modulo unitary transformations) under translations in the moir\'e lattice (Lemma \ref{lem:moire_periodicity}), and under arbitrary interlayer displacements (Lemma \ref{lem:interlayer_displacement}). All of the $\theta$-dependence of the model \eqref{eq:BM_model_0} enters through the length of the momentum shifts $\vec{s}_n$, which sets the moir\'e scale.

Restoring physical units\footnote{Note that we abuse notation between \eqref{eq:BM_model_0} and \eqref{eq:physical_BM} by using $\tau, \vec{r}$, and $f$ both for dimensionless quantities and their physical counterparts.} in \eqref{eq:BM_model_0}, recovers the model proposed by Bistritzer-MacDonald's \cite{Bistritzer2011}
\begin{equation} \label{eq:physical_BM}
    i \hbar \de_\tau f = \begin{pmatrix} v \vec{\sigma} \cdot (- i \nabla) & w \sum_{n = 1}^3 T_n e^{- i \vec{s}_n \cdot \vec{r}} \\ w \sum_{n = 1}^3 T_n^\dagger e^{i \vec{s}_n \cdot \vec{r}} & v \vec{\sigma}\cdot (- i \nabla) \end{pmatrix} f, \quad  v := \hbar v_D, \quad w := \frac{ \oldhat{\mathfrak{h}}(|\vec{K}|;L) }{ |\Gamma| },
\end{equation}
apart from a $\theta$-dependence of the monolayer Dirac cones \cite{Bistritzer2011}, which we prove can be neglected in the regime \eqref{eq:balance__0}\footnote{That this $\theta$-dependence is small was already noted in \cite{Bistritzer2011}, and neglecting it does not affect the prediction of the first magic angle.}. In \eqref{eq:physical_BM}, $\hbar$ denotes Planck's constant, $v_D$ is the Dirac velocity of monolayer graphene, and $\oldhat{\mathfrak{h}}(|\vec{K}|;L)$ denotes the Fourier transform of the interlayer hopping function evaluated at the monolayer $\vec{K}$ point and interlayer distance $L > 0$. Using physically realistic values for $v$ and $w$, Bistritzer and MacDonald numerically computed the band structure of the Hamiltonian \eqref{eq:physical_BM}, finding a sequence of ``magic'' twist angles where the Fermi velocity vanishes. They also gave an \emph{analytical} prediction of the first (largest) magic angle $\theta \approx 1^\circ$ by a formal perturbation theory, and we review this calculation (Section \ref{sec:magic_angles}). 

We emphasize that, although the analysis of this work is specialized to time-propagation of electronic wave-packets, we view our basic ansatz \eqref{eq:WP} as probing the single-particle electronic properties of TBG more generally. It is natural to ask, then, whether the regime \eqref{eq:balance__0} is really the relevant regime for understanding TBG's most interesting physical properties. In particular, one can ask: \emph{given the estimated physical value of $\oldhat{h}(|\vec{\kappa}|;\ell)$, and $\theta$ near to the first magic angle, does there exist a range of $\gamma$ values, with $\gamma \ll 1$, such that both conditions of \eqref{eq:balance__0} hold, with constants on the order of $1$?} We verify this as follows. Using the values given in \cite{Bistritzer2011} for the interlayer hopping energy and monolayer graphene $\pi$-band energy scale
\begin{equation} \label{eq:interlayer_energy}
    \frac{\oldhat{\mathfrak{h}}(|\vec{K}|;L)}{|\Gamma|} \approx 110 \text{ meV}, \quad \frac{\hbar v_D}{a} \approx 2.6 \text{ eV},
\end{equation}
we find that
\begin{equation} \label{eq:energy_ratio}
    \oldhat{h}(|\vec{\kappa}|;\ell) = \frac{\oldhat{\mathfrak{h}}(|\vec{K}|;L)}{|\Gamma| \left(\frac{ \hbar v_D }{ a }\right) } \approx \frac{110 \text{ meV}}{2.6 \text{ eV}} \approx 0.042.
\end{equation}
On the other hand, the first magic angle is at
\begin{equation}
    \theta \approx 1^\circ \approx 0.017 \text{ radians}.
\end{equation}
The statement now follows immediately: take any $\gamma$ such that $\oldhat{h}(|\vec{\kappa}|;\ell) = \lambda_0 \gamma$, with $\lambda_0$ on the order of $1$. Then, we have both $\gamma \ll 1$, and $\theta \leq \lambda_1 \gamma$, for all $\theta$ up to and including the magic angle, with $\lambda_1$ also on the order of $1$. For such $\gamma$, our analysis shows that \eqref{eq:physical_BM} will be a valid model of the \emph{single-particle} electronic properties of TBG over the energy scale $\gamma \frac{\hbar v_D}{a} \sim \frac{\mathfrak{h}(|\vec{K}|;L)}{|\Gamma|}$, and the length scale $\gamma^{-1} a$. 

We comment finally on the relevance of the BM model \eqref{eq:physical_BM} to modeling of strongly-correlated (many-body) electronic phases in TBG. At the first magic angle, the number of atoms per moir\'e cell can be estimated by squaring the ratio of the moir\'e lattice constant $\approx \frac{a}{\theta}$ to the monolayer lattice constant $a$ and multiplying by 4 (to capture the layer and sublattice degrees of freedom), giving
\begin{equation}
    \frac{1}{\theta^2} \approx 14000.
\end{equation}
The largeness of this number makes it imperative to develop many-body models of TBG at the moir\'e scale. We now argue, by a simple comparison of energy scales, that the BM model \eqref{eq:physical_BM} accurately approximates the 
single-particle part of many-body models (we do not attempt any rigorous analysis in this direction). Fixing $\theta$ again at the magic angle, the Coulomb e-e interaction in hexagonal Boron nitride (hBN, which typically encapsulates TBG in experiments) at the moir\'e scale defines an energy via
\begin{equation}
    \frac{ e^2 }{ 4 \pi \epsilon_{\text{hBN}} \left( \frac{a}{\theta} \right) } \approx 23 \text{ meV},
\end{equation}
where $\epsilon_{\text{hBN}}$ is the permittivity of hBN $\approx 5 \epsilon_0$ \cite{Laturia2018}, $\epsilon_0$ is the vaccuum permittivity, and $e$ is the elementary charge. The smallness of this number relative to the interlayer hopping energy \eqref{eq:interlayer_energy} suggests that e-e interaction terms couple only states that are described accurately by
the BM model \eqref{eq:physical_BM}. Note that this is not the same as saying that interactions are insignificant. 
Indeed they are known to significantly affect the dynamics in TBG's flat moir\'e bands, whose width can be as small as $\approx 10$ meV \cite{Carr2020}, leading to the wealth of strongly-correlated phases seen in experiments to date. Similar considerations apply to lattice-mediated interactions between electrons, which are characterized by a smaller energy scale $\sim 1$ meV \cite{wu2018theory,chou2022acoustic}.


\subsection{Related work} We comment on related mathematical work. Our results are closely related to those of Fefferman-Weinstein \cite{fefferman_weinstein}, who considered wave-packets spectrally localized at the Dirac points of a graphene \emph{monolayer}, starting from a PDE model rather than the tight-binding model we consider here. Under additional assumptions, we recover the same timescale of validity ($\sim \gamma^{- 2 + \epsilon}$ for any $\epsilon > 0$) as they prove (see Remark \ref{rem:Feff_Wein}). 

During the preparation of this work, we were made aware of parallel work deriving Bistritzer-MacDonald-type models for twisted bilayer graphene by Canc\`es-Garrigue-Gontier \cite{CancesGarrigueGontier2022}. Our works developed independently, and indeed work in different settings and derive different results. While \cite{CancesGarrigueGontier2022} work in the setting of a continuum model, we work in the setting of a discrete tight-binding model. While \cite{CancesGarrigueGontier2022} derive a moir\'e scale model by a formal variational approximation, we identify a parameter scaling limit where the Bistritzer-MacDonald model emerges up to higher-order terms which can be rigorously estimated. Finally, in contrast to \cite{CancesGarrigueGontier2022}, the Bistritzer-MacDonald model we derive has only \emph{nearest-neighbor} momentum space hopping, consistent with \cite{Bistritzer2011} (see Figure \ref{fig:TBG_momentum_lattice} for a discussion of the meaning of ``nearest-neighbor'' here).

We finally comment on related mathematical work on continuum PDE models of moir\'e materials. Becker-Embree-Wittsten-Zworski have introduced a spectral formulation of magic angles in \cite{becker2020mathematics,Becker2020}, while Becker-Kim-Zhu \cite{Becker2022} have obtained expansions of the density of states of TBG in the presence of strong magnetic fields using semiclassical analysis. Becker-Wittsten \cite{Becker2022_1} have proved a semiclassical quantization rule for the one-dimensional moir\'e-scale model introduced by Timmel-Mele \cite{Timmel2020}. Becker-Ge-Wittsten have also studied the transport and spectral properties of this model in \cite{Becker2022_2}. Two of the authors of the present work gave a computer-assisted proof of existence of the first magic angle of the ``chiral'' approximation to the Bistritzer-MacDonald model in \cite{Watson2021}. Bal-Cazeaux-Massatt-Quinn \cite{Quinn2022} have studied topological protection of edge states propagating along domain wall interfaces in relaxed twisted bilayer graphene. Numerical methods for computing electronic properties of twisted multilayer materials which exploit similar approximations to those made in the Bistritzer-MacDonald model have been developed \cite{Carr2017,Cances2017a, Massatt2017,2018CarrMassattTorrisiCazeauxLuskinKaxiras,Massatt2020,doi:10.1137/17M1141035}. 



\subsection{Conclusions and Perspectives}

The main results of the present work are (1) the identification of the parameter regime \eqref{eq:balance__0}, and (2) the identification of sufficient assumptions, most importantly on the interlayer hopping function (Assumption \ref{as:h_regularity}), so that the BM model \eqref{eq:physical_BM} can be rigorously justified as an effective model of wave-packet propagation in TBG. As alluded to above, however, the BM model \eqref{eq:physical_BM} is by no means only a model of wave-packet propagation. Rigorously justifying the use of the model \eqref{eq:physical_BM}, for example, to calculate the density of states or conductivity of TBG in the parameter regime identified here and under similar assumptions will be the subject of future works. 

Another important potential development of the present work would justify the various proposed models of TBG which account for effects of mechanical relaxation (see Remark \ref{rem:relaxing}). Here, it would be especially interesting to identify whether there exists a parameter regime where TBG can be rigorously approximated by the ``chiral'' model \cite{Tarnopolsky2019}, a simpler and more symmetric variant of the BM model. A further question would then be how close this regime is to that realized in real TBG. We expect that these analyses could be generalized to essentially arbitrary moir\'e materials such as twisted multilayer transition metal dichalcogenides (TMDs) or even twisted heterostructures consisting of layers of distinct 2D materials. Here an important question is how the distinct dispersion relations of these 2D materials affects the analysis; TMDs, for example, exhibit quadratic dispersion, quite unlike the linear dispersion exhibited by graphene. Finally, it would be very interesting to generalize these analyses to fundamental PDE models of moir\'e materials, like the model of TBG studied in \cite{CancesGarrigueGontier2022}.

\subsection{Structure of paper}

The structure of this work is as follows. We will first review the tight-binding models of graphene and twisted bilayer graphene in Section \ref{sec:tight-binding_model}. We will then introduce the wave-packet {\it ansatz} we will use to generate approximate solutions of the tight-binding model Schr\"odinger equation, and explain the strategy of the proof of convergence of these solutions, in Section \ref{sec:multiple_scales_analysis}. The strategy reduces the proof to a series of technical lemmas and a single main lemma, which identifies the parameter regime where the BM model represents the dominant effective dynamics, whose proof we give in Section \ref{sec:proof_key_lemma}. In Section \ref{sec:magic_angles}, we review Bistritzer-MacDonald's formal derivation of magic angles. We postpone various detailed calculations and proofs to the Appendices.

\section{The tight-binding model of twisted bilayer graphene} \label{sec:tight-binding_model}

The main object of this section is to introduce the tight-binding model of twisted bilayer graphene, which we will take as fundamental in what follows. Before we can do this, we must first recall the tight-binding model of monolayer graphene.

\subsection{The tight-binding model of graphene}

Graphene is a sheet of carbon atoms arranged in a honeycomb lattice; that is, a triangular lattice with two atoms per unit cell. We introduce Bravais lattice vectors as
\begin{equation}
    \vec{a}_1 := \frac{a}{2} \left( 1, \sqrt{3} \right)^\top, \quad \vec{a}_2 := \frac{a}{2} \left( -1 , \sqrt{3} \right)^\top, \quad A := \begin{pmatrix} \vec{a}_1, \vec{a}_2 \end{pmatrix},
\end{equation}
where $a > 0$ is the lattice constant (see Figure~\ref{fig:graphene_lattice}). The graphene Bravais lattice $\Lambda$ and a fundamental cell $\Gamma$ can then be written succinctly as
\begin{equation}
    \Lambda := \left\{ \vec{R} = A \vec{m} : \vec{m} \in \mathbb{Z}^2 \right\}, \quad \Gamma := \left\{ A \alpha : \alpha \in \left[0,1\right)^2 \right\}.
\end{equation}
It is straightforward to check that the area of a fundamental cell $|\Gamma| = \frac{\sqrt{3}}{2} a^2$. Within the $\vec{R}$th fundamental cell of the lattice, there are two atoms, at positions $\vec{R} + \vec{\tau}^A$ and $\vec{R} + \vec{\tau}^B$, which we will take as 
\begin{equation}
    \vec{\tau}^A := \vec{0}, \quad \vec{\tau}^B := \left(0, d\right)^\top, \quad d := \frac{a}{\sqrt{3}}.
\end{equation}
\begin{figure}
\centering
    \begin{subfigure}[b]{.49\columnwidth}
        \centering
        \includegraphics[scale=.49]{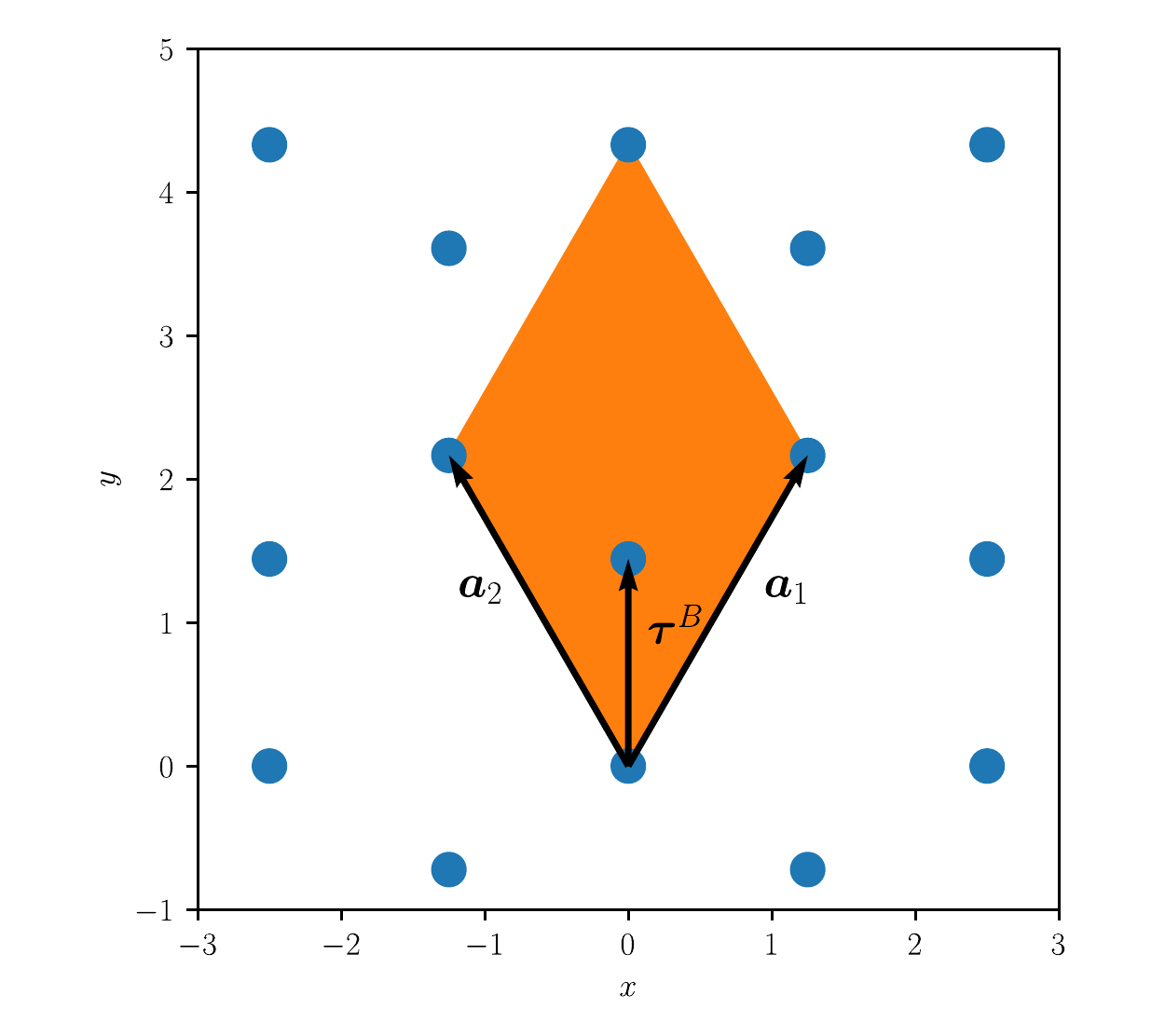}
        \caption{\label{fig:graphene_lattice}}
    \end{subfigure}
    \begin{subfigure}[b]{.49\columnwidth}
        \centering
        \includegraphics[scale=.49]{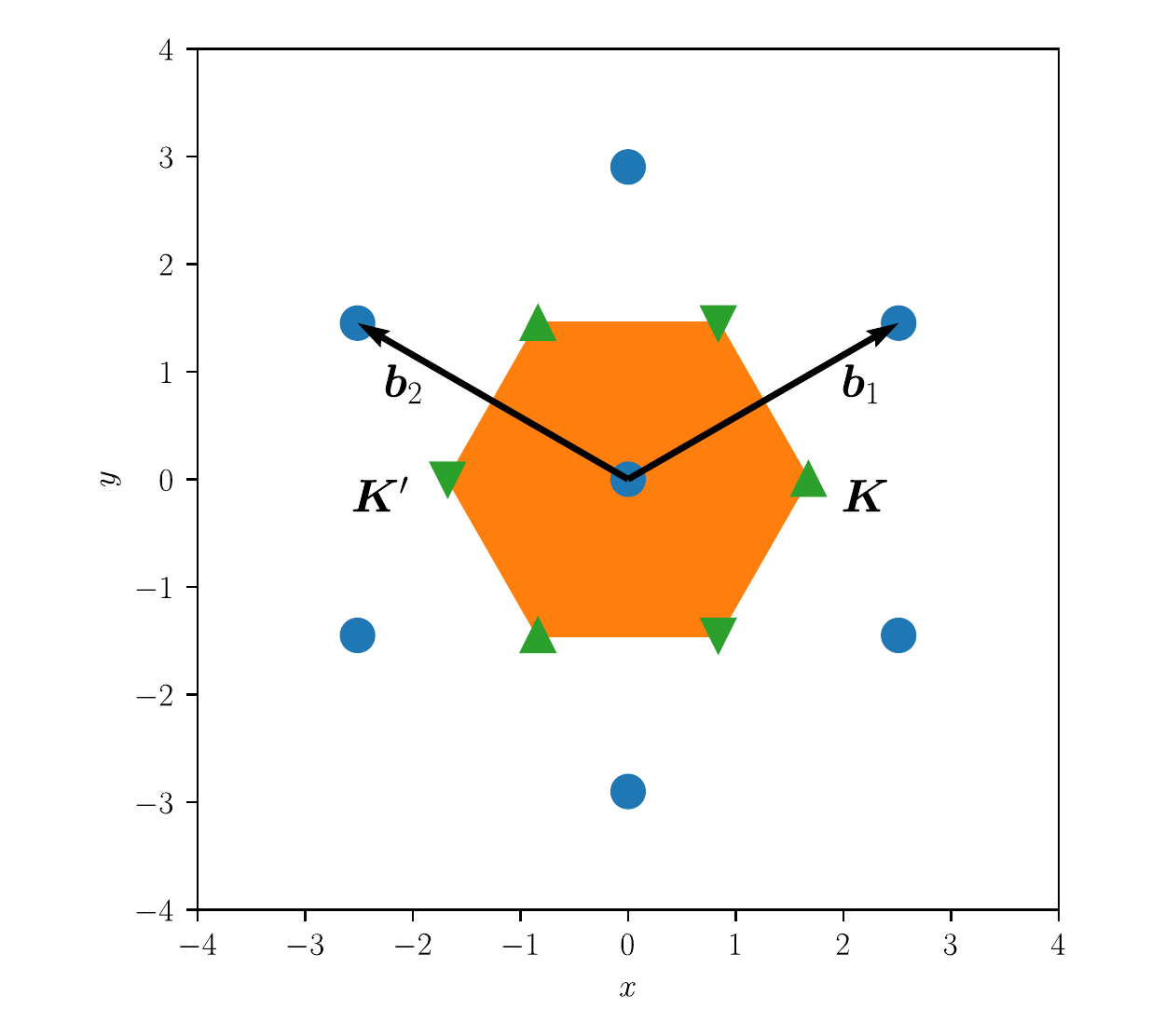}
        \caption{\label{fig:graphene_recip_lattice}}
    \end{subfigure}
    \begin{subfigure}[b]{.9\columnwidth}
    \centering
    \includegraphics[scale=.3]{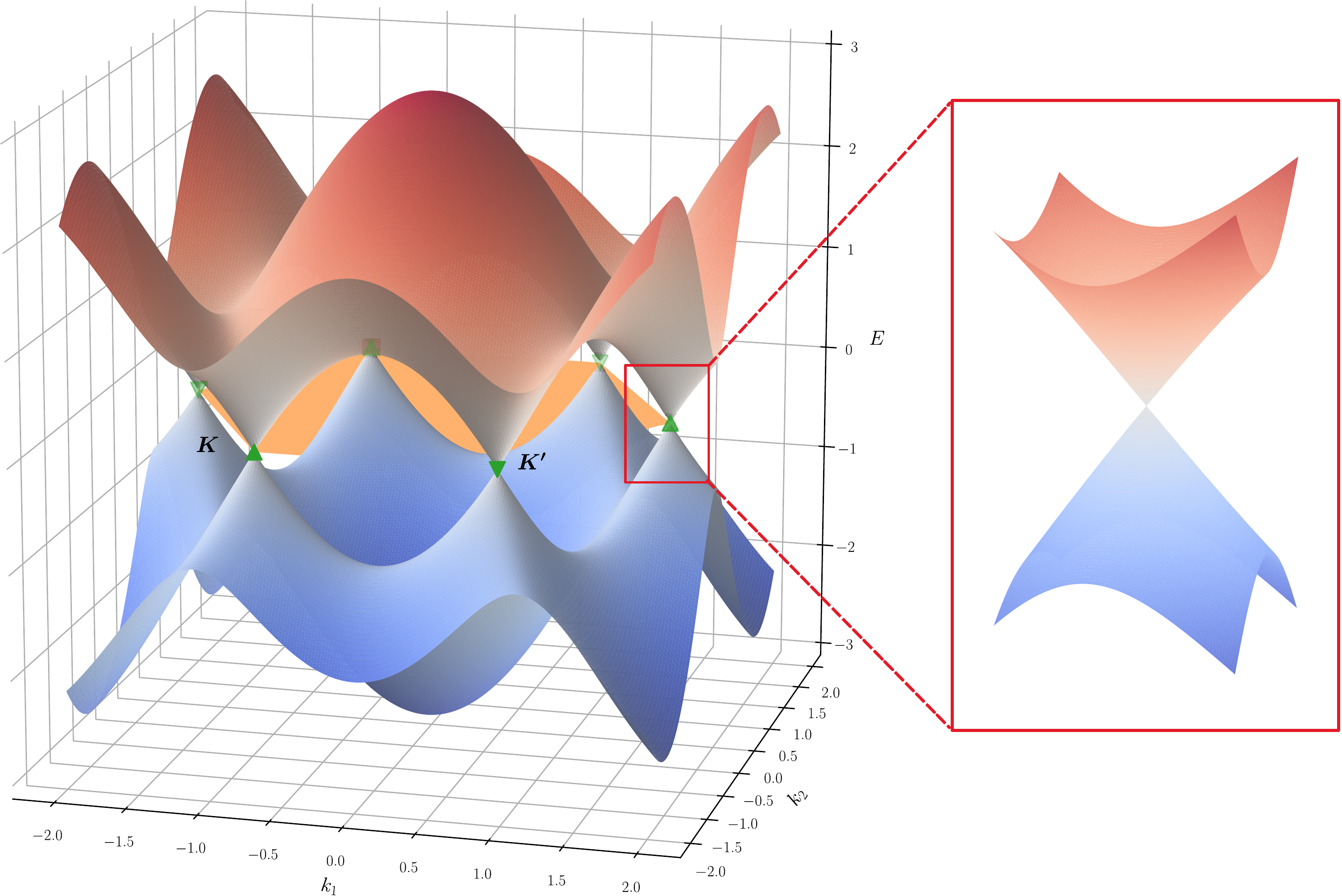}
    \caption{\label{fig:graphene_bands}}
    
        \end{subfigure}

    \caption{\subref{fig:graphene_lattice} Plot of the graphene monolayer atomic structure (blue dots), showing the lattice vectors $\vec{a}_1, \vec{a}_2$, $B$ site shift $\vec{\tau}^B$, and a unit cell $\Gamma$ (shaded in orange). Here, and elsewhere, we set the monolayer lattice constant $a = 2.5$, consistent with its experimental value in Angstroms. The $A$ site shift is trivial ($\vec{\tau}^A = \vec{0}$) because of our choice of origin. \subref{fig:graphene_recip_lattice} Plot of the graphene monolayer reciprocal lattice (blue dots), showing the reciprocal lattice vectors $\vec{b}_1, \vec{b}_2$, a fundamental cell (Brillouin zone) $\Gamma^*$ (shaded in orange), and the $\vec{K}$ and $\vec{K}'$ points (marked by green up and down arrows respectively). \subref{fig:graphene_bands} Plot of the monolayer graphene eigenvalue bands $E_\pm$ over the Brillouin zone $\Gamma^*$ shown in (B), showing their conical structure at the Dirac points $\vec{K}$ and $\vec{K}'$.}
    
\end{figure}

We model wave functions of electrons in graphene as elements $\psi$ of the Hilbert space $\mathcal{H}_{\text{mono}} := \ell^2(\mathbb{Z}^2;\mathbb{C}^2)$, writing $\psi = \left( \psi_{\vec{R}} \right)_{\vec{R} \in \Lambda} = \left( \psi^A_{\vec{R}}, \psi^B_{\vec{R}} \right)^\top_{\vec{R} \in \Lambda}$, where $|\psi^\sigma_{\vec{R}}|^2$ represents the electron density on sublattice $\sigma \in \{A,B\}$ in the $\vec{R}$th cell. The graphene tight-binding Hamiltonian with nearest-neighbor hopping acts as
\begin{equation} \label{eq:nearest_neighbor}
    \left( H \psi \right)_{\vec{R}} = - t \begin{pmatrix} \psi_{\vec{R}}^B + \psi_{\vec{R}-\vec{a}_1}^B + \psi_{\vec{R}-\vec{a}_2}^B \\ \psi_{\vec{R}}^A + \psi^A_{\vec{R} + \vec{a}_1} + \psi^A_{\vec{R} + \vec{a}_2} \end{pmatrix}, \quad \vec{R} \in \Lambda,
\end{equation}
where $t > 0$ is the nearest-neighbor hopping energy. 

The Hamiltonian is invariant under lattice translations so it is natural to pass to the Bloch domain. The reciprocal lattice vectors are defined by the equation $\vec{b}_i \cdot \vec{a}_j = 2 \pi \delta_{ij}$ for $1 \leq i,j \leq 2$. They are explicitly
(see Figure~\ref{fig:graphene_recip_lattice})
\begin{equation}
    \vec{b}_1 := \frac{4 \pi}{3 d} \left( \frac{\sqrt{3}}{2},\frac{1}{2} \right)^\top, \quad \vec{b}_2 := \frac{4 \pi}{3 d} \left( - \frac{ \sqrt{3} }{2},\frac{1}{2} \right)^\top, \quad B := \begin{pmatrix} \vec{b}_1, \vec{b}_2 \end{pmatrix}.
\end{equation}
We then define the reciprocal lattice $\Lambda^*$, and a fundamental cell of the reciprocal lattice $\Gamma^*$ (known in this context as the Brillouin zone), by
\begin{equation} \label{eq:BZ}
    \Lambda^* := \left\{ \vec{G} = B \vec{n} : \vec{n} \in \mathbb{Z}^2 \right\}, \quad \Gamma^* := \left\{ B \beta : \beta \in \left[0,1\right)^2 \right\}.
\end{equation}

We write wave-functions in the Bloch domain $L^2(\Gamma^*;\mathbb{C}^2)$ as $\tilde{\psi} = \left( \tilde{\psi}(\vec{k}) \right)_{\vec{k} \in \Gamma^*} = \left( \tilde{\psi}^A(\vec{k}), \tilde{\psi}^B(\vec{k}) \right)^\top_{\vec{k} \in \Gamma^*}$\footnote{Note that we use tildes here to reserve $\oldhat{f}$ for the Fourier transform of a continuous function $f$.}, and the unitary Bloch transform $\mathcal{H}_{\text{mono}} \rightarrow L^2(\Gamma^*;\mathbb{C}^2)$ and its inverse by 
\begin{equation}
    \begin{split}
        [ \mathcal{G} \psi ]^\sigma(\vec{k}) &:= \frac{1}{|\Gamma^*|^{\frac12}}\sum_{\vec{R} \in \Lambda} e^{- i \vec{k} \cdot (\vec{R} + \vec{\tau}^\sigma)} \psi^\sigma_{\vec{R}}, \\ 
        \left[ \mathcal{G}^{-1} \tilde{\psi}\right]^\sigma_{\vec{R}} &:= \frac{1}{|\Gamma^*|^{\frac12}} \inty{\Gamma^*}{}{ e^{i \vec{k} \cdot (\vec{R} + \vec{\tau}^\sigma)} \tilde{\psi}^\sigma(\vec{k}) }{\vec{k}}, \quad \sigma \in \{A,B\},
    \end{split}
\end{equation}
where $|\Gamma^*|$ denotes the Brillouin zone area. Note that with this convention, Bloch transforms are quasi-periodic with respect to $\Lambda^*$ in the sense that 
\begin{equation}
    [ \mathcal{G} \psi ]^\sigma(\vec{k} + \vec{G}) = e^{- i \vec{G} \cdot \vec{\tau}^\sigma} [ \mathcal{G} \psi ]^\sigma(\vec{k}), \quad \vec{G} \in \Lambda^*.
\end{equation}

The Hamiltonian is block diagonal in the Bloch domain, taking the form
\begin{equation}
    \left( \mathcal{G} H \mathcal{G}^{-1} \tilde{\psi} \right)(\vec{k}) = H(\vec{k}) \tilde{\psi}(\vec{k}),
\end{equation}
\begin{equation} \label{eq:graphene_block}
    H(\vec{k}) := - t \begin{pmatrix} 0 & F(\vec{k}) \\ \overline{F(\vec{k})} & 0 \end{pmatrix}, \quad F(\vec{k}) := e^{i \vec{k} \cdot ( \vec{\tau}^B - \vec{\tau}^A )} ( 1 + e^{- i \vec{k} \cdot \vec{a}_1} + e^{- i \vec{k} \cdot \vec{a}_2} ).
\end{equation}
The Bloch Hamiltonian $H(\vec{k})$ is easily diagonalized, with eigenpairs 
\begin{equation} \label{eq:monolayer_bands}
    E_\pm(\vec{k}) := \pm t |F(\vec{k})|, \quad \Phi_\pm(\vec{k}) := \frac{1}{\sqrt{2}} \left( 1 , \mp \frac{\overline{F(\vec{k})}}{ | F(\vec{k}) | } \right)^\top.
\end{equation}
The functions $E_\pm : \Gamma^* \rightarrow \mathbb{R}$ are known as the Bloch band functions. The Bloch band functions are degenerate at the Dirac points (see Figure~
\ref{fig:graphene_bands}), defined (up to $\vec{G} \in \Lambda^*$) by
\begin{equation} \label{eq:Dirac_pts}
    \vec{K} := \frac{ 4 \pi }{ 3 a } ( 1, 0 )^\top, \quad \vec{K}' := - \vec{K}.
\end{equation}

Taylor-expansion around Dirac point $\vec{K}$ in terms of $\vec{q} = \vec{k} - \vec{K}$ shows that\footnote{Here we are intentionally vague about the Taylor-expansion remainder for the sake of readability. We will make the error term precise when necessary for our proofs.}
\begin{equation} \label{eq:Taylor_f}
    F(\vec{K} + \vec{q}) = - \frac{\sqrt{3} a}{2} ( q_1 - i q_2 ) + O(|\vec{q}|^2),
\end{equation}
and hence
\begin{equation} \label{eq:monolayer_Dirac}
    H(\vec{K} + \vec{q}) = \hbar v_D \vec{\sigma} \cdot \vec{q} + O(|\vec{q}|^2), \quad v_D := \frac{\sqrt{3} a t}{2 \hbar}, \quad \vec{\sigma} = (\sigma_1,\sigma_2)^\top,
\end{equation}
where $\vec{\sigma} = (\sigma_1,\sigma_2)^\top$ is the vector of Pauli matrices, and $v_D$ is the Fermi velocity. Taylor-expansion around Dirac point $\vec{K}'$ yields
\begin{equation}
    F(\vec{K}' + \vec{q}) = - \frac{\sqrt{3} a}{2}( - q_1 - i q_2 ) + O(|\vec{q}|^2),
\end{equation}
leading to an identical expansion to \eqref{eq:monolayer_Dirac} other than the sign change in $q_1$. We finally record the transformation of $H(\vec{k})$ under translation of $\vec{k}$ in the reciprocal lattice 
\begin{equation} \label{eq:H_translation}
    H^{\sigma \sigma'}(\vec{k} + \vec{G}) = e^{i \vec{G} \cdot (\vec{\tau}^{\sigma'} - \vec{\tau}^\sigma)} H^{\sigma \sigma'}(\vec{k}), \quad \sigma, \sigma' \in \{A,B\}, \quad \vec{G} \in \Lambda^*. 
\end{equation}

\subsection{The tight-binding model of twisted bilayer graphene}

Twisted bilayer graphene (TBG) consists of two graphene monolayers with a relative twist. Let $R(\zeta)$ denote the matrix which rotates counterclockwise by angle $\zeta$
\begin{equation}
    R(\zeta) := \begin{pmatrix} \cos \zeta & - \sin \zeta \\ \sin \zeta & \cos \zeta \end{pmatrix}.
\end{equation}
Then, denoting the twist angle by $\theta > 0$, the lattice vectors of each layer are defined by
\begin{equation}
    \vec{a}_{1,i} := R\left(- \frac{\theta}{2}\right) \vec{a}_i, \quad \vec{a}_{2,i} := R\left(\frac{\theta}{2}\right) \vec{a}_i, \quad A_i := \begin{pmatrix} \vec{a}_{i,1}, \vec{a}_{i,2} \end{pmatrix}, \quad i \in \{1,2\},
\end{equation}
where the first subscript denotes the layer. We can then define lattices and fundamental cells just as in the monolayer
\begin{equation}
    \Lambda_i := \left\{ \vec{R}_i = A_i \vec{m} : \vec{m} \in \mathbb{Z}^2 \right\}, \quad \Gamma_i := \left\{ A_i \alpha : \alpha \in \left[0,1\right)^2 \right\}, \quad i \in \{1,2\}.
\end{equation}
The site shifts are similarly rotated, and we allow for a uniform interlayer displacement $\vec{\mathfrak{d}}$ so that
\begin{equation} \label{eq:interlayer_displacement}
    \vec{\tau}_1^\sigma := R\left(- \frac{\theta}{2}\right) \left( \vec{\tau}^\sigma - \frac{\vec{\mathfrak{d}}}{2} \right), \quad \vec{\tau}^\sigma_{2} := R\left(\frac{\theta}{2}\right) \left( \vec{\tau}^\sigma + \frac{\vec{\mathfrak{d}}}{2} \right), \quad \sigma \in \{A,B\}.
\end{equation}
\begin{figure}
    \centering
    \includegraphics[scale=.4]{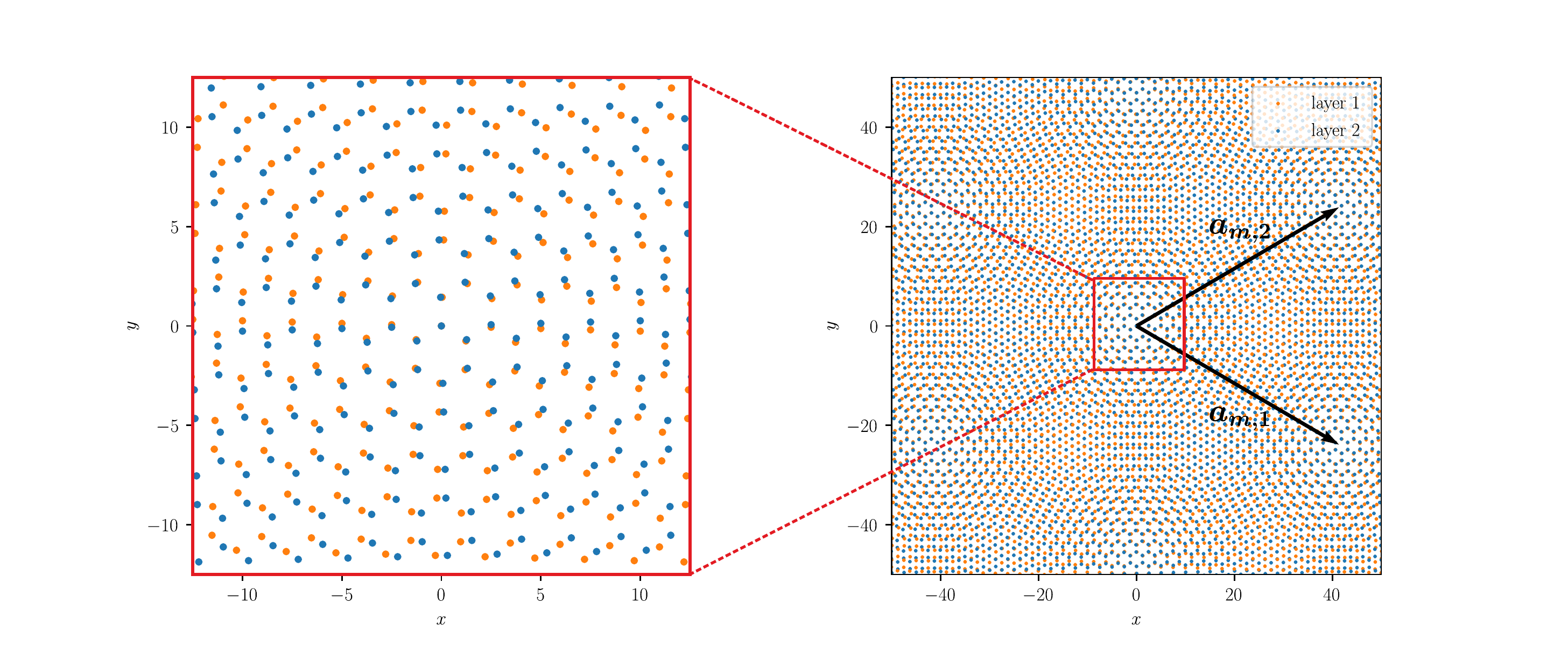} 
    \includegraphics[scale=.4]{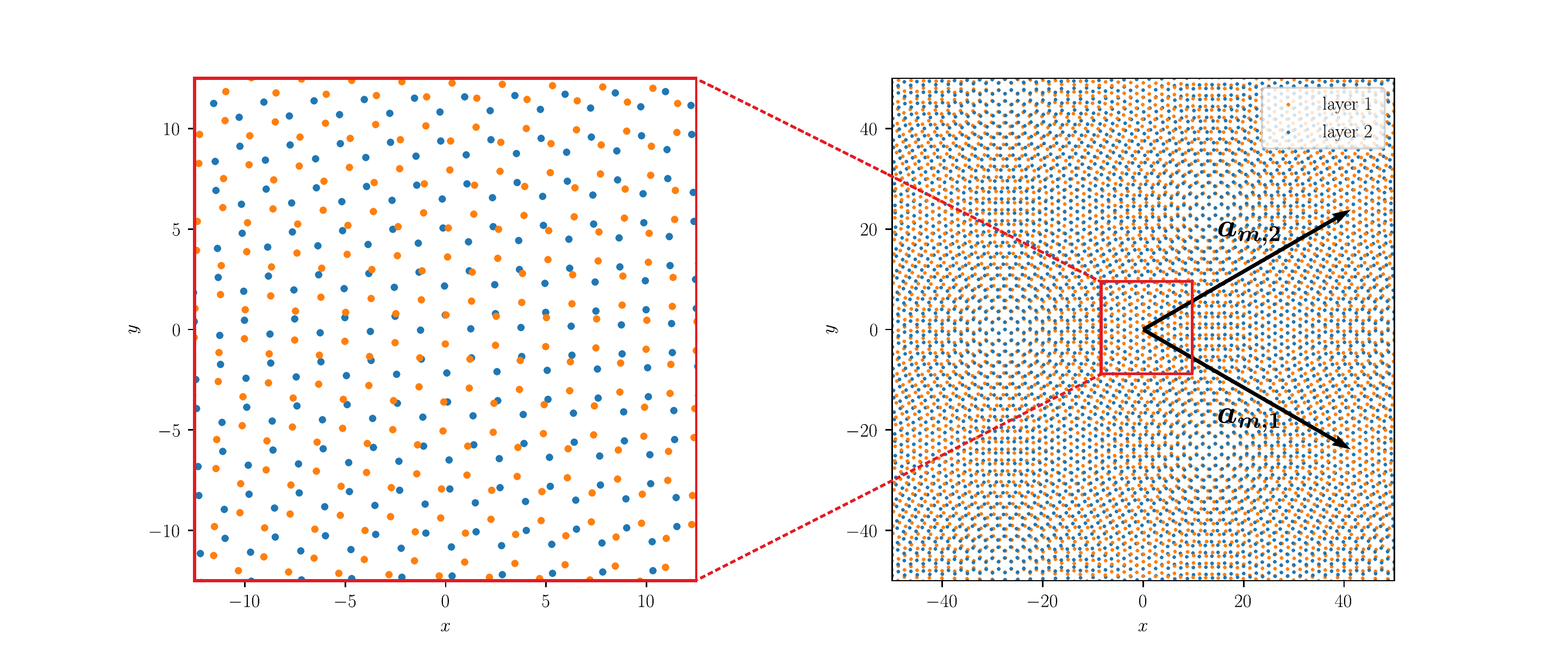}
    \caption{Plots of the atomic structures of twisted bilayer graphene, with interlayer displacement $\mathfrak{d} = 0$ (above), and $\mathfrak{d} = \vec{\tau}^B$ (below), at twist angle $\theta = 3^\circ$, near to the origin. Both structures generate the same moir\'e pattern, up to a shift, and BM models with identical spectral properties (Lemma \ref{lem:interlayer_displacement}).} 
    \label{fig:TBG_atomic_structure}
\end{figure}

There are two especially natural choices of $\vec{\mathfrak{d}}$ (see Figure~
\ref{fig:TBG_atomic_structure}). When $\vec{\mathfrak{d}} = \vec{0} \mod \Lambda$, the two untwisted graphene sheets are exactly aligned on top of each other. When $\vec{\mathfrak{d}} = \pm \vec{\tau}^B \mod \Lambda$, the untwisted graphene bilayer is said to be in ``Bernal stacking'' configuration. This is, mechanically speaking, the most energetically favorable configuration for the untwisted system \cite{2018CarrMassattTorrisiCazeauxLuskinKaxiras}. We will see that the spectral properties of the effective model for the twisted bilayer ($\theta > 0$) we derive are independent of $\vec{\mathfrak{d}}$; see Lemma \ref{lem:interlayer_displacement}. Note that this was already observed in \cite{Bistritzer2011}.

We model wave functions of electrons in TBG as elements $\psi$ of the Hilbert space $\mathcal{H} := \ell^2(\mathbb{Z}^2;\mathbb{C}^2) \oplus \ell^2(\mathbb{Z}^2;\mathbb{C}^2) = \left( \ell^2(\mathbb{Z}^2;\mathbb{C}^2) \right)^2$, writing $\psi = \left( \psi_1, \psi_2 \right)^\top$, where $\psi_i$ $= \left( \psi_{\vec{R}_i} \right)_{\vec{R}_i \in \Lambda_i}$ $= \left( \psi^A_{\vec{R}_i}, \psi^B_{\vec{R}_i} \right)^\top_{\vec{R}_i \in \Lambda_i}$, $i \in \{1,2\}$. It will sometimes be convenient to write instead $\psi = \left( \psi_{\vec{R}_1,\vec{R}_2} \right)_{\vec{R}_1,\vec{R}_2 \in \Lambda_1 \times \Lambda_2}$. The twisted bilayer graphene tight-binding Hamiltonian acts as
\begin{equation} \label{eq:block_H}
    H \psi = \begin{pmatrix} H_{11} & H_{12} \\ H_{12}^\dagger & H_{22} \end{pmatrix} \begin{pmatrix} \psi_1 \\ \psi_2 \end{pmatrix},
\end{equation}
where the diagonal (intralayer) part of the Hamiltonian is the monolayer graphene Hamiltonian
\begin{equation}
    \left( H_{ii} \psi_i \right)_{\vec{R}_i} := - t \begin{pmatrix} \psi_{\vec{R}_i}^B + \psi_{\vec{R}_i-\vec{a}_{i,1}}^B + \psi_{\vec{R}_i-\vec{a}_{i,2}}^B \\ \psi_{\vec{R}_i}^A + \psi^A_{\vec{R}_i + \vec{a}_{i,1}} + \psi^A_{\vec{R}_i + \vec{a}_{i,2}} \end{pmatrix}, \quad i \in \{1,2\},
\end{equation}
and the off-diagonal (interlayer) part is defined by (the $21$ term is similar so we omit it)
\begin{equation}
    \left( H_{12} \psi_2 \right)^\sigma_{\vec{R}_1} := \sum_{\vec{R}_2 \in \Lambda_2} \sum_{\sigma' \in \{A,B\}} \mathfrak{h}\left( \vec{R}_{12} + \vec{\tau}_{12}^{\sigma \sigma'} ; L \right) \psi^{\sigma'}_{\vec{R}_2}, \quad \sigma \in \{A,B\},
\end{equation}
where $\vec{R}_{12} := \vec{R}_1 - \vec{R}_2$ and $\vec{\tau}_{12}^{\sigma \sigma'} := \vec{\tau}_{1}^\sigma - \vec{\tau}_2^{\sigma'}$. Here, $L > 0$ denotes the interlayer distance, while $\mathfrak{h}$ is the interlayer hopping function, which arises from the overlap of ionic (or Wannier) orbitals on each layer \cite{ashcroft_mermin,Fefferman2018}. 
In this work, we will always assume that $\mathfrak{h}$ is radial, i.e., a function of $r := |\vec{r}|$ only, and make the standard abuse of notation to write $\mathfrak{h}(\vec{r};L) = \mathfrak{h}(r;L)$. More specifically, we will always assume that $h$ is a function of the \emph{three-dimensional} distance between sites so that
\begin{equation} \label{eq:form_of_mathfrak_h}
    \mathfrak{h}(r;L) := \mathfrak{h}^\sharp\left( \sqrt{ r^2 + L^2 } \right),
\end{equation}
for some decaying function $\mathfrak{h}^\sharp$ (see Figure~\ref{fig:interlayer_hopping_function}). Roughly speaking, the fact that the interlayer hopping function has the form \eqref{eq:form_of_mathfrak_h} allows us to assume that the two-dimensional Fourier transform of $\mathfrak{h}$ decays exponentially with a rate proportional to $L$. 
We postpone a discussion of the precise properties we require of $\mathfrak{h}$ (Assumption \ref{as:h_regularity}), and of the validity of these assumptions (see Proposition \ref{prop:Paley-Wiener}, and Examples \ref{ex:example_1}-\ref{ex:example_3}) until Section \ref{sec:nondim}, when we will nondimensionalize our model. There we will also discuss possible relaxations of our assumptions on the form of the interlayer hopping function (Remark \ref{rem:relaxing}). 
\begin{figure}
    \centering
    \begin{subfigure}[b]{.49\columnwidth}
        \centering
        \includegraphics[scale=.4]{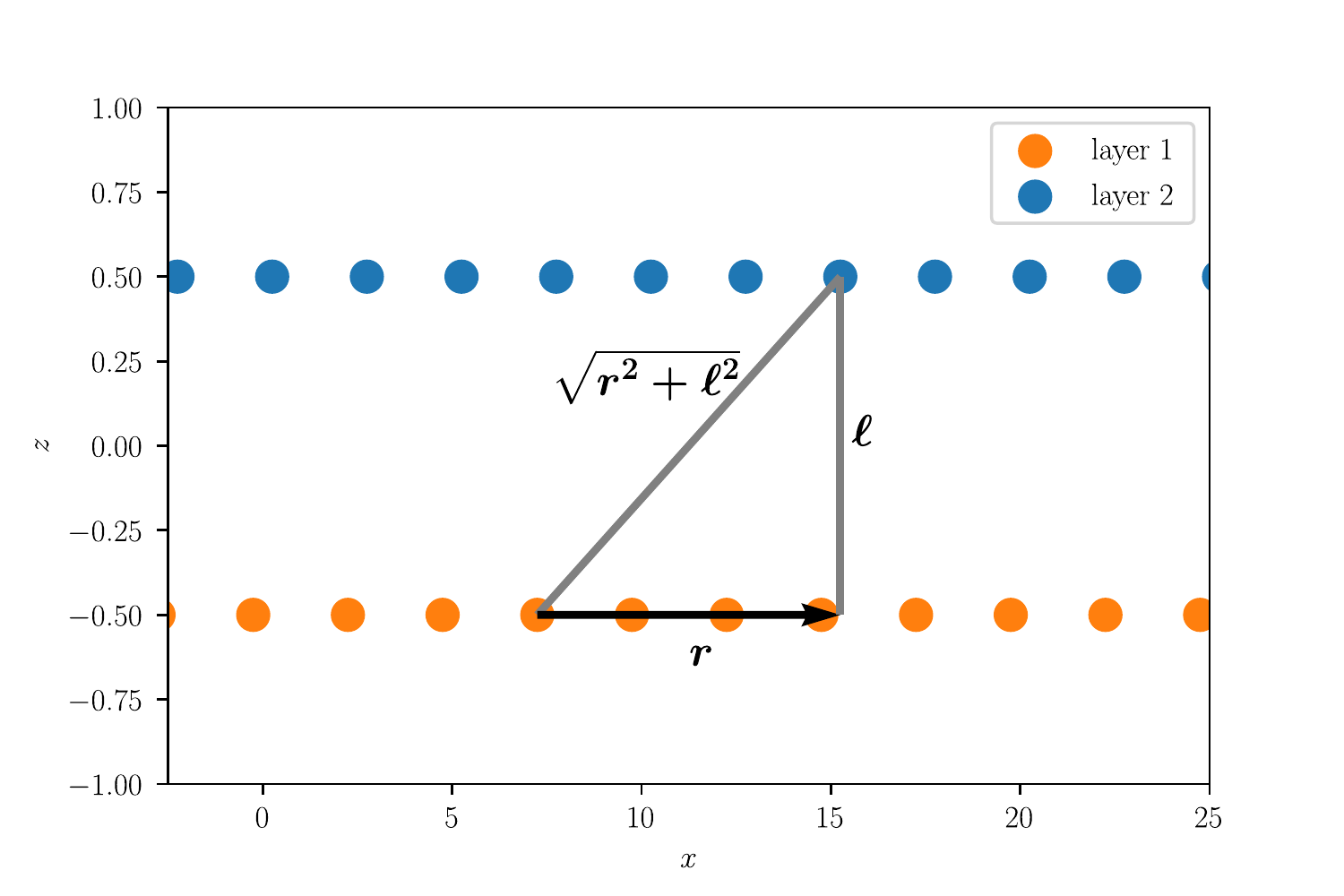}
        \caption{\label{fig:TBG_schematic}}
    \end{subfigure}
    \begin{subfigure}[b]{.49\columnwidth}
        \centering
        \includegraphics[scale=.35]{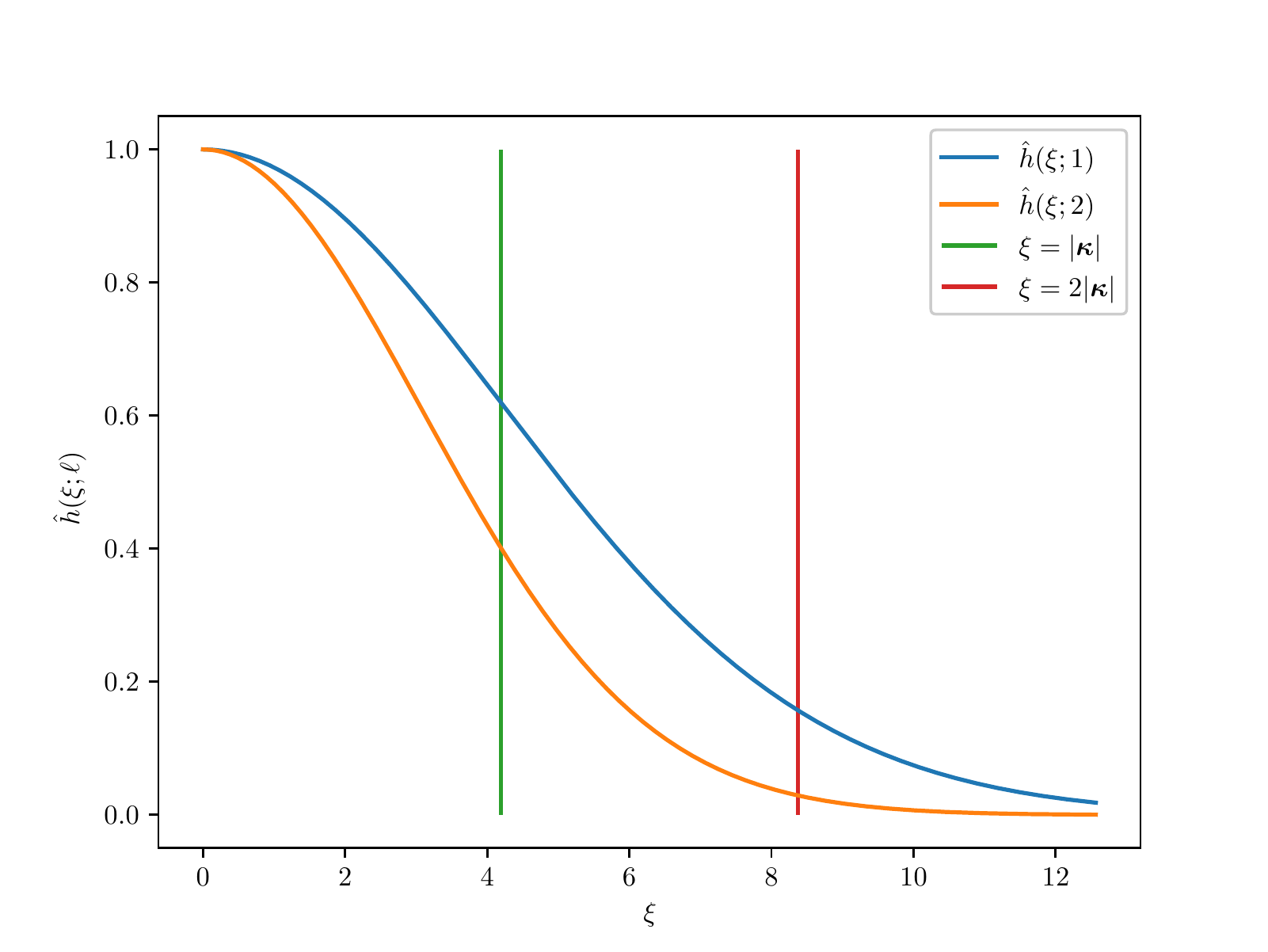}
        \caption{\label{fig:hhat_ell}}
    \end{subfigure}
    \caption{(A) Schematic illustrating how, in our tight-binding model, the $z$ direction perpendicular to the bilayer is suppressed, and we view the hopping as a function of the two-dimensional vector $\vec{r}$, but through on the three-dimensional distance $\sqrt{r^2 + \ell^2}$, where $\ell$ is the (non-dimensionalized) interlayer distance. This leads to the (two-dimensional) Fourier transform of the interlayer hopping function $h(\vec{r};\ell)$ decaying exponentially with a rate proportional to $\ell$. (B) Plot of the Fourier transform of the non-dimensionalized, radial, interlayer hopping function, $\oldhat{h}(\xi;\ell)$, for different values of the (non-dimensionalized) interlayer distance $\ell$. Specifically, we take $\oldhat{h}$ as in Example \ref{ex:example_2}, with $\alpha = 20$. The decay is much more rapid for larger values of $\ell$. Also shown are the key values of $\xi$ at $|\vec{\kappa}|$ and $2 |\vec{\kappa}|$. Our results rely crucially on the assumption that ``nearest-neighbor'' momentum space hopping dominates all other momentum space hopping, which amounts to assuming that $\frac{\oldhat{h}(2 |\vec{\kappa}|;\ell)}{\oldhat{h}(|\vec{\kappa}|;\ell)} \rightarrow 0$ as $\ell \rightarrow \infty$; see Assumption \ref{as:h_regularity} and the discussion below.}
    \label{fig:interlayer_hopping_function}
\end{figure}

Even though the interlayer hopping breaks translation symmetry\footnote{Here we refer to the translation symmetry of the individual layers, which is generally broken by the interlayer hopping. For specific ``rational'' twist angles, $H$ will retain exact translation symmetry with respect to ``supercell'' lattice vectors which are distinct from the monolayer lattice vectors. Note that we do not assume rationality of the twist angle anywhere in the present work. BM models are periodic with respect to the moir\'e lattice, which is well-defined for \emph{generic} twist angles.} of $H$, it is still useful to transform both layers to the Bloch domain. The reciprocal lattice vectors of each layer are defined by
\begin{equation}
    \vec{b}_{1,i} := R\left(- \frac{\theta}{2}\right) \vec{b}_i, \quad \vec{b}_{2,i} := R\left(\frac{\theta}{2}\right) \vec{b}_i, \quad B_i := \begin{pmatrix} \vec{b}_{1,i}, \vec{b}_{2,i} \end{pmatrix} \quad i \in \{1,2\},
\end{equation}
and the reciprocal lattices and Brillouin zones similarly
\begin{equation}
    \Lambda^*_i := \{ \vec{G}_i = B_i \vec{n} : \vec{n} \in \mathbb{Z}^2 \}, \quad \Gamma_i^* := \left\{ B_i \beta : \beta \in \left[0,1\right)^2 \right\}, \quad i \in \{1,2\}.
\end{equation}
The $\vec{K}$ and $\vec{K}'$ points of each layer are simply
\begin{equation}
    \vec{K}_1 := R\left(-\frac{\theta}{2}\right) \vec{K}, \quad \vec{K}_2 := R\left(\frac{\theta}{2}\right) \vec{K}, \quad \vec{K}_i' := - \vec{K}_i, \quad i \in \{1,2\}.
\end{equation}

We denote wave functions in the Bloch domain $L^2(\Gamma_1^*;\mathbb{C}^2) \oplus L^2(\Gamma_2^*;\mathbb{C}^2)$ by $\tilde{\psi} = \left( \tilde{\psi}_1, \tilde{\psi}_2 \right)^\top$, where $\tilde{ \psi }_i$ $= \left( \tilde{\psi}_i(\vec{k}_i) \right)_{\vec{k}_i \in \Gamma^*_i}$ $= \left( \tilde{\psi}_i^A(\vec{k}_i), \tilde{\psi}_i^B(\vec{k}_i) \right)^\top_{\vec{k}_i \in \Gamma^*_i}$, $i \in \{1,2\}$. It will sometimes be convenient to write instead $\tilde{\psi}$ $= \left( \tilde{\psi}(\vec{k}_1,\vec{k}_2) \right)_{(\vec{k}_1,\vec{k}_2) \in \Gamma^*_1 \times \Gamma^*_2}$. We then define a unitary Bloch transform $\mathcal{G} \rightarrow L^2(\Gamma_1^*;\mathbb{C}^2) \oplus L^2(\Gamma_2^*;\mathbb{C}^2)$, and its inverse, componentwise, by 
\begin{equation}
    \left[ \mathcal{G} \psi \right](\vec{k}_1,\vec{k}_2) := \begin{pmatrix} \left[ \mathcal{G}_1 \psi_1 \right](\vec{k}_1) \\ \left[ \mathcal{G}_2 \psi_2 \right](\vec{k}_2) \end{pmatrix}, \quad \left[ \mathcal{G}^{-1} \tilde{\psi} \right]_{\vec{R}_1,\vec{R}_2} := \begin{pmatrix} \left[ \mathcal{G}_1^{-1} \tilde{\psi}_1 \right]_{\vec{R}_1} \\ \left[ \mathcal{G}_2^{-1} \tilde{\psi}_2 \right]_{\vec{R}_2} \end{pmatrix},
\end{equation}
where
\begin{equation}
    \begin{split}
        [ \mathcal{G}_i \psi_i ]^\sigma(\vec{k}_i) &:= \frac{1}{|\Gamma^*|^{\frac12}}\sum_{\vec{R}_i \in \Lambda_i} e^{- i \vec{k}_i \cdot (\vec{R}_i + \vec{\tau}_i^\sigma)} \psi^\sigma_{\vec{R}_i}, \\
        \left[ \mathcal{G}_i^{-1} \tilde{\psi}_i\right]^\sigma(\vec{k}_i) &:= \frac{1}{|\Gamma^*|^{\frac12}} \inty{\Gamma^*_i}{}{ e^{i \vec{k}_i \cdot (\vec{R}_i + \vec{\tau}_i^\sigma)} \tilde{\psi}_i^\sigma(\vec{k}_i) }{\vec{k}_i}, \quad \sigma \in \{A,B\}, i \in \{1,2\}.
    \end{split}
\end{equation}

The Hamiltonian now acts by 
\begin{equation} \label{eq:Bloch_block_H}
    \left( \mathcal{G} H \mathcal{G}^{-1} \tilde{\psi} \right)(\vec{k}_1,\vec{k}_2) = \begin{pmatrix} \left( \mathcal{G}_1 H_{11} \mathcal{G}_1^{-1} \tilde{\psi}_1 \right)(\vec{k}_1) + \left( \mathcal{G}_1 H_{12} \mathcal{G}_2^{-1} \tilde{\psi}_2 \right)(\vec{k}_1) \\ \left( \mathcal{G}_2 H_{21} \mathcal{G}_1^{-1} \tilde{\psi}_1 \right)(\vec{k}_2) + \left( \mathcal{G}_2 H_{22} \mathcal{G}_2^{-1} \tilde{\psi}_2 \right)(\vec{k}_2) \end{pmatrix},
\end{equation}
where the diagonal terms are as before
\begin{equation} \label{eq:diag_Bloch_block_H}
    \left( \mathcal{G}_i H_{ii} \mathcal{G}_i^{-1} \tilde{\psi}_i \right)(\vec{k}_i) = H_i(\vec{k}_i) \tilde{\psi}_i(\vec{k}_i), \quad i \in \{1,2\},
\end{equation}
\begin{equation}
    H_{i}(\vec{k}_i) := - t \begin{pmatrix} 0 & F(\vec{k}_i) \\ \overline{F(\vec{k}_i)} & 0 \end{pmatrix}, \quad F(\vec{k}_i) := e^{i \vec{k}_i \cdot ( \vec{\tau}_i^B - \vec{\tau}_i^A )} ( 1 + e^{- i \vec{k}_i \cdot \vec{a}_{i,1}} + e^{- i \vec{k}_i \cdot \vec{a}_{i,2}} ).
\end{equation}

Just as in the monolayer, formal Taylor-expansion of $H_i(\vec{k}_i)$ around Dirac points $\vec{K}_i$ in terms of $\vec{q}_i = \vec{k}_i - \vec{K}_i$ yields effective Dirac operators. Specifically, we have
\begin{equation}
    F_i(\vec{K}_i + \vec{q}_i) = - \frac{\sqrt{3} a}{2} e^{i \theta_i/2} ( q_{i,1} - i q_{i,2} ) + O( |\vec{q}_i|^2 ), \quad \vec{q}_i = ( q_{i,1}, q_{i,2} )^\top,
\end{equation}
where
\begin{equation} \label{eq:theta_cases}
    \theta_i := \begin{cases} - \theta & i = 1, \\ \ \ \theta & i = 2, \end{cases}
\end{equation}
and hence
\begin{equation} \label{eq:monolayer_Dirac_twisted}
    \begin{split}
        &H_i(\vec{K}_i + \vec{q}_i) = \hbar v_D \vec{\sigma}_{\theta_i/2} \cdot \vec{q}_i + O(|\vec{q}_i|^2),   \\
        &\vec{\sigma}_{\theta_i/2} \cdot \vec{q}_i := \begin{pmatrix} 0 & e^{i \theta_i/2} ( q_{i,1} - i q_{i,2} ) \\ e^{- i \theta_i/2} ( q_{i,1} + i q_{i,2} ) & 0 \end{pmatrix}, \quad i \in \{1,2\},
    \end{split}
\end{equation}
where $v_D$ is the monolayer Fermi velocity \eqref{eq:monolayer_Dirac}.
The analogous calculation at the $\vec{K}'$ points of layer $i$ leads to the same results but with $q_{i,1}$ replaced everywhere by $- q_{i,1}$. 

A short calculation, presented in Appendix \ref{sec:off_diagonal_Bloch}, shows that the off-diagonal terms take the form
\begin{equation} \label{eq:off_diagonal_Bloch}
    \begin{split}
        &\left( \mathcal{G}_1 H_{12} \mathcal{G}_2^{-1} \tilde{\psi}_2 \right)^\sigma(\vec{k}_1) =  \\
        &\frac{1}{|\Gamma|} \sum_{\sigma' \in \{A,B\}} \inty{\Gamma^*_2}{}{ \sum_{\vec{G}_1 \in \Lambda_1^*} \sum_{\vec{G}_2 \in \Lambda_2^*} e^{i [\vec{G}_1 \cdot \vec{\tau}_1^\sigma - \vec{G}_2 \cdot \vec{\tau}_2^{\sigma'}]} \oldhat{\mathfrak{h}}( \vec{k}_1 + \vec{G}_1 ; L ) \delta( \vec{k}_1 + \vec{G}_1 - \vec{k}_2 - \vec{G}_2 ) \tilde{\psi}^{\sigma'}_2(\vec{k}_2) }{\vec{k}_2},
    \end{split}
\end{equation}
where we define the Fourier transform pair\footnote{Note that, with this definition, $\mathfrak{h}$ has units of energy, but $\oldhat{\mathfrak{h}}$ has units of energy times area. Since $|\Gamma|$ has units of area, the previous equation has units of energy as expected.}
\begin{equation}
    \oldhat{\mathfrak{h}}(\vec{\xi};L) := \inty{\field{R}^2}{}{ e^{- i \vec{\xi} \cdot \vec{r}} \mathfrak{h}(\vec{r};L) }{\vec{r}}, \quad \mathfrak{h}(\vec{r};L) = \frac{1}{(2 \pi)^2} \inty{\field{R}^2}{}{ e^{i \vec{\xi} \cdot \vec{r}} \oldhat{\mathfrak{h}}(\vec{\xi};L) }{\vec{k}}.
\end{equation}
Note that the off-diagonal part of $H$ is not block diagonal with respect to quasimomentum after Bloch transformation, reflecting the model's lack of translation symmetry. The assumption that $\oldhat{\mathfrak{h}}$ exponentially decays with rate proportional to $L$ (Assumption \ref{as:h_regularity}), because of the form \eqref{eq:form_of_mathfrak_h} of $\mathfrak{h}$, will allow us to considerably simplify \eqref{eq:off_diagonal_Bloch} in what follows. 

\subsection{Moir\'e lattice, moir\'e reciprocal lattice, and moir\'e potential of twisted bilayer graphene}

The moir\'e lattice, moir\'e reciprocal lattice, and moir\'e potential of TBG will play a fundamental role in the analysis of the following sections. The moir\'e reciprocal lattice vectors are (see Figure~\ref{fig:TBG_twisted_BZs})
\begin{equation}
    \vec{b}_{m,1} := \vec{b}_{1,1} - \vec{b}_{2,1}, \quad \vec{b}_{m,2} := \vec{b}_{1,2} - \vec{b}_{2,2}.
\end{equation}

In terms of the distance between the Dirac points of the layers
\begin{equation} \label{eq:Delta_K}
    | \Delta \vec{K} | := 2 | \vec{K} | \sin\left( \frac{\theta}{2} \right),
\end{equation}
the moir\'e reciprocal lattice vectors have the explicit forms
\begin{equation} \label{eq:moire_reciprocal_lattice_vecs}
    \vec{b}_{m,1} = \sqrt{3} | \Delta \vec{K} | \left( \frac{1}{2}, -\frac{\sqrt{3}}{2} \right)^\top, \quad \vec{b}_{m,2} = \sqrt{3} | \Delta \vec{K} | \left( \frac{1}{2} , \frac{\sqrt{3}}{2} \right)^\top.
\end{equation}
We define the moir\'e reciprocal lattice and Brillouin zone in terms of $B_m := (\vec{b}_{m,1},\vec{b}_{m,2})$ as (see Figure~\ref{fig:TBG_momentum_lattice})
\begin{equation}
    \Lambda_m^* := \{ \vec{G}_m = B_m \vec{n} : \vec{n} \in \mathbb{Z}^2 \}, \quad \Gamma_m^* := \left\{ B_m \beta : \beta \in \left[ 0 , 1 \right)^2 \right\}.
\end{equation}
\begin{figure}
    \centering
    \begin{subfigure}[b]{.56\columnwidth}
    \centering
    \includegraphics[scale=.4]{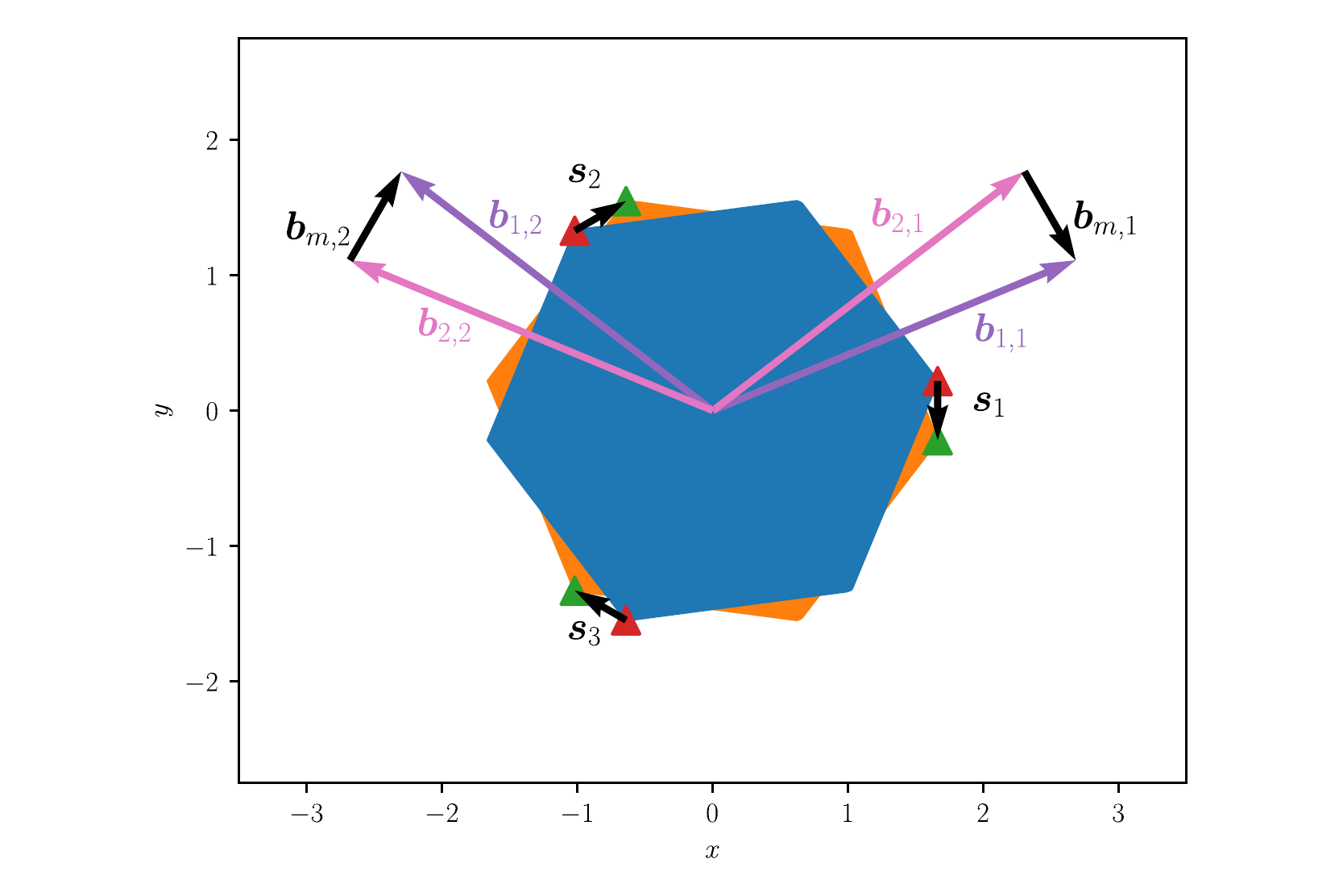}
    \caption{\label{fig:TBG_twisted_BZs}}
    \end{subfigure}
    \begin{subfigure}[b]{.42\columnwidth}
    \centering
    \includegraphics[scale=.4]{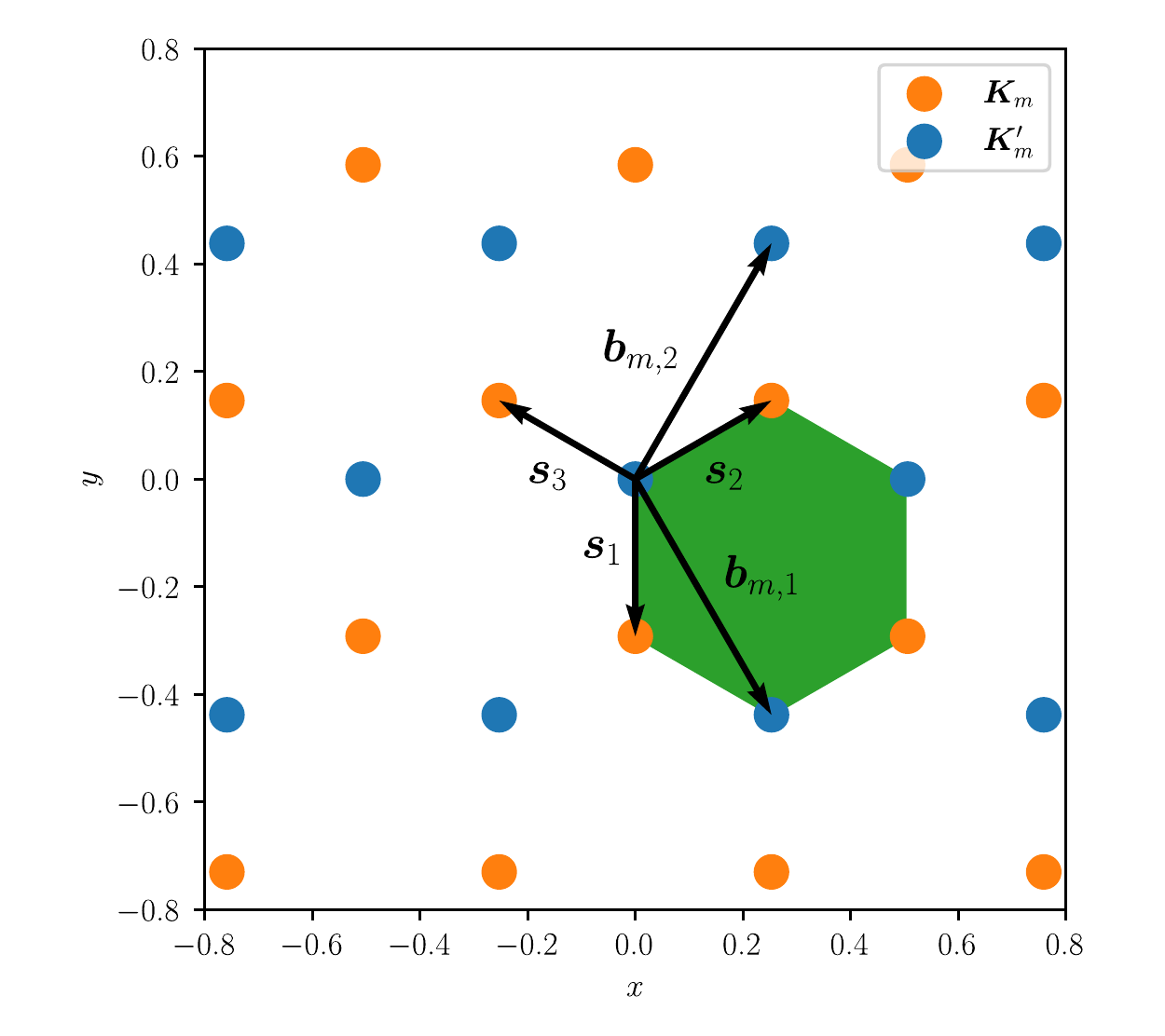}
    \caption{\label{fig:TBG_momentum_lattice}}
    \end{subfigure}
    \caption{ \subref{fig:TBG_twisted_BZs} Plot of the rotated monolayer reciprocal lattice vectors, the moir\'e reciprocal lattice vectors, the rotated Brillouin zones $\Gamma_1^*$ (orange) and $\Gamma_2^*$ (blue) and the momentum hops $\vec{s}_i$, $i \in \{1,2,3\}$. The $\vec{K}$ points of layers $1$ and $2$ are marked with green and red triangles, respectively. \subref{fig:TBG_momentum_lattice} Plot of the moir\'e momentum lattice, formed by translating the monolayer Dirac points $\vec{K}_1$ and $\vec{K}_2$ (labeled here as the moir\'e Dirac points $\vec{K}_m$ and $\vec{K}_m'$, respectively) by moir\'e reciprocal lattice vectors. The Bistritzer-MacDonald model we derive keeps only the nearest-neighbor hops in this lattice, corresponding to $\vec{s}_i$, $i \in \{1,2,3\}$. A choice of moir\'e Brillouin zone is shown shaded in green.}
    
\end{figure}

The associated moir\'e lattice vectors, defined by the equation $\vec{b}_{m,i} \cdot \vec{a}_{m,j} = 2 \pi \delta_{ij}$ for $1 \leq i,j \leq 2$, are
\begin{equation} \label{eq:moire_lattice_vecs}
    \vec{a}_{m,1} := \frac{4 \pi}{3 | \Delta \vec{K} |} \left( \frac{\sqrt{3}}{2}, -\frac{1}{2} \right)^\top, \quad \vec{a}_{m,2} := \frac{4 \pi}{3 | \Delta \vec{K} |} \left( \frac{\sqrt{3}}{2} , \frac{1}{2} \right)^\top.
\end{equation}
For small $\theta$, we see that the moir\'e reciprocal lattice vectors are much longer than the single layer lattice vectors: $|\vec{a}_{m,i}| \gg |\vec{a}_i|, i \in \{1,2\}$. The moir\'e lattice and unit cell are then
\begin{equation} \label{eq:moire_lattice}
    \Lambda_m := \{ \vec{R}_m = A_m \vec{m} : \vec{m} \in \mathbb{Z}^2 \}, \quad \Gamma_m := \left\{ A_m \alpha : \alpha \in \left[ 0 , 1 \right)^2 \right\},
\end{equation}
where $A_m := ( \vec{a}_{m,1} , \vec{a}_{m,2} )$. The monolayer $\vec{K}$ points are known as the moir\'e $\vec{K}$ and $\vec{K}'$ points
\begin{equation}
    \vec{K}_m := \vec{K}_1, \quad \vec{K}'_m := \vec{K}_2.
\end{equation}
We introduce notation for the momentum difference between the monolayer Dirac points and its rotations\footnote{The rotations by $\frac{2 \pi}{3}$ are, equivalently, the momentum differences between the Dirac points measured with respect to the equivalent monolayer Dirac points obtained by rotating the Dirac points \eqref{eq:Dirac_pts} by $\frac{2 \pi}{3}$.} by $\frac{2 \pi}{3}$ (see Figure~\ref{fig:TBG_twisted_BZs})
\begin{equation} \label{eq:momentum_space_hops}
    \begin{split}
        &\vec{s}_1 := \vec{K}_1 - \vec{K}_2, \quad \vec{s}_2 := \vec{K}_1 - \vec{K}_2 - \vec{b}_{2,2} + \vec{b}_{1,2} = \vec{s}_1 + \vec{b}_{m,2}, \\
        &\vec{s}_3 := \vec{K}_1 - \vec{K}_2 + \vec{b}_{2,1} - \vec{b}_{1,1} = \vec{s}_1 - \vec{b}_{m,1}.
    \end{split}
\end{equation}
\begin{figure}
    
\end{figure}
These vectors all have length $|\Delta \vec{K}|$, and can be written explicitly as
\begin{equation} 
    \vec{s}_1 = |\Delta \vec{K}| \left( 0 , -1 \right)^\top, \quad \vec{s}_2 = |\Delta \vec{K}| \left( \frac{\sqrt{3}}{2} , \frac{1}{2} \right)^\top, \quad \vec{s}_3 = |\Delta \vec{K}| \left( - \frac{\sqrt{3}}{2} , \frac{1}{2} \right)^\top.
\end{equation}
We emphasize that the moir\'e reciprocal lattice, moir\'e lattice, and momentum hops \eqref{eq:momentum_space_hops} all depend on $\theta$ through $|\Delta \vec{K}|$.

We finally introduce the continuum (matrix-valued) moir\'e potential, defined for $\sigma, \sigma' \in \{A,B\}$ and $\vec{r} \in \mathbb{R}^2$ by
\begin{equation} \label{eq:moire_potential}
    T_\theta^{\sigma \sigma'}(\vec{r}) := \sum_{j = 1}^3 T_j^{\sigma \sigma'} e^{- i \vec{s}_j \cdot \vec{r}},
\end{equation}
where
\begin{equation} \label{eq:hopping_matrices}
    \begin{split}
        &T_1^{\sigma \sigma'} := 1, \quad T_2^{\sigma \sigma'} := e^{i \left[ \vec{b}_{1,2} \cdot \vec{\tau}_1^\sigma - \vec{b}_{2,2} \cdot \vec{\tau}_2^{\sigma'} \right]}, \\ 
        &T_3^{\sigma \sigma'} := e^{i \left[ - \vec{b}_{1,1} \cdot \vec{\tau}_1^\sigma + \vec{b}_{2,1} \cdot \vec{\tau}_2^{\sigma'}\right]},
    \end{split}
\end{equation}
are the hopping matrices. Written out in full, the hopping matrices are
\begin{equation} \label{eq:hopping_matrices_full}
    \begin{split}
        &T_1 = \begin{pmatrix} 1 & 1 \\ 1 & 1 \end{pmatrix}, \quad T_2 = e^{- i \vec{b}_2 \cdot \vec{\mathfrak{d}}} \begin{pmatrix} 1 & e^{- i \phi} \\ e^{i \phi} & 1 \end{pmatrix},  \\ 
        &T_3 = e^{i \vec{b}_1 \cdot \vec{\mathfrak{d}}} \begin{pmatrix} 1 & e^{i \phi} \\ e^{- i \phi} & 1 \end{pmatrix}.
    \end{split}
\end{equation}
Note that when $\vec{\mathfrak{d}} = - \vec{\tau}^B$ we have $\vec{b}_1 \cdot \vec{\mathfrak{d}} = \vec{b}_2 \cdot \vec{\mathfrak{d}} = - \phi$ and we recover the hopping matrices appearing in equation (7) of \cite{Bistritzer2011}. We provide a detailed motivation and derivation of the moir\'e potential in the proof of Lemma \ref{lem:off_diag_lemma}. We will see that the $\mathfrak{d}$-dependent shifts in \eqref{eq:hopping_matrices_full} can always be removed by conjugating by a unitary (Lemma \ref{lem:interlayer_displacement}).

The moir\'e potential has two important and easily verifiable properties. It is periodic (up to a phase) with respect to the moir\'e lattice
\begin{equation} \label{eq:moire_potential_periodic}
    T_\theta^{\sigma \sigma'}(\vec{r} + \vec{R}_m) = e^{- i \vec{s}_j \cdot \vec{R}_m } T_\theta^{\sigma \sigma'}(\vec{r}), \quad \forall \vec{R}_m \in \Lambda_m, \quad \sigma, \sigma' \in \{A,B\},\quad j=1,2,3,
\end{equation}
and its rate of oscillation is controlled by $\theta$. The specific property we will require is the existence of a constant $C_T > 0$ such that for all $\sigma, \sigma' \in \{A,B\}$,
\begin{equation} \label{eq:moire_potential_theta_bound}
    \sup_{1 \leq a \leq 2} \left| \pdf{r_a} T_\theta^{\sigma \sigma'}(\vec{r}) \right| \leq C_T \theta, \quad \forall \vec{r} \in \mathbb{R}^2.
\end{equation}
This is, of course, a reflection of the fact that the moir\'e cell becomes larger for smaller angles.

\subsection{Non-dimensionalization and assumptions on interlayer hopping function} \label{sec:nondim}

We close this section by non-dimensionalizing the tight-binding model of twisted bilayer graphene in order to identify fundamental dimensionless parameters. In the following sections, we will identify the parameter regime where the Bistritzer-MacDonald model represents the dominant effective dynamics. We start from the time-dependent Schr\"odinger equation for $\psi(\tau) : [0,\infty) \rightarrow \left( \ell^2(\mathbb{Z}^2;\mathbb{C}^2) \right)^2$
\begin{equation} \label{eq:TD_Schro}
    i \hbar \de_\tau \psi = H \psi, \quad \psi(0) = \psi_0.
\end{equation}
It is natural to non-dimensionalize so that $v_D = a = \hbar = 1$. To this end, we introduce dimensionless time and position variables
\begin{equation} \label{eq:tauprimerprime}
    \tau' := \frac{ \sqrt{3} t }{ 2 \hbar } \tau, \quad \vec{r}' := \frac{\vec{r}}{a}, \quad \psi'(\vec{r}',\tau') := \psi(\vec{r},\tau).
\end{equation}
With this change, dividing \eqref{eq:TD_Schro} through by the energy scale 
\begin{equation} \label{eq:energy_scale}
    \mathcal{E} := \frac{ \sqrt{3} t }{ 2 } = \frac{ v_D \hbar }{ a }
\end{equation}
leaves both sides dimensionless. Recall the definition of the interlayer hopping function \eqref{eq:form_of_mathfrak_h}. We introduce a dimensionless interlayer hopping function $h$ by\footnote{At first sight, it would appear natural to define $h$ so that $h(\vec{r}';\ell) = \frac{2}{\sqrt{3} t} \mathfrak{h}(\vec{r};L)$. However, the choice \eqref{eq:h_def} is more convenient because, using $|\Gamma| = \frac{\sqrt{3}}{2} a^2$, it ensures that $\oldhat{h}(\vec{\xi}';\ell) = \frac{2}{\sqrt{3} t} \frac{ \oldhat{\mathfrak{h}}(\vec{\xi};L) }{ |\Gamma| }$, where $\vec{\xi}' = a \vec{\xi}$ (we use this in \eqref{eq:energy_ratio}).}
\begin{equation} \label{eq:h_def}
    h( \vec{r}' ; \ell ) := h^\sharp( \sqrt{ |\vec{r}'|^2 + \ell^2 } ) := \frac{ 4 }{ 3 t } \mathfrak{h}^\sharp\left( a \sqrt{ |\vec{r}'|^2 + \ell^2 } \right) = \frac{ 4 }{ 3 t } \mathfrak{h}( a \vec{r}' ; a \ell ),
\end{equation}
where $\ell := \frac{L}{a}$ is the dimensionless ratio of the interlayer distance to the monolayer lattice constant, and $h^\sharp$ is a fixed, dimensionless, decaying function. We finally drop the $'$s and just write $\vec{r}$ and $\tau$ for the dimensionless spatial and temporal variables. Just as for $\mathfrak{h}$, we define the Fourier transform pair
\begin{equation} \label{eq:h_FT}
    \oldhat{{h}}(\vec{\xi};\ell) := \inty{\field{R}^2}{}{ e^{- i \vec{\xi} \cdot \vec{r}} {h}(\vec{r};\ell) }{\vec{r}}, \quad {h}(\vec{r};\ell) = \frac{1}{(2 \pi)^2} \inty{\field{R}^2}{}{ e^{i \vec{\xi} \cdot \vec{r}} \oldhat{{h}}(\vec{\xi};\ell) }{\vec{k}}.
\end{equation}

In what follows, we will often use notation defined in the previous sections for dimensionful quantities, such as vectors in the monolayer lattice $\vec{R} \in \Lambda$, for the \emph{dimensionless} quantities obtained by setting $a = 1$. We do this wherever it is unlikely to cause confusion. The exceptions are our notations for the non-dimensionalized Dirac points $\vec{\kappa} := \vec{K} a$ and non-dimensionalized momentum shifts $\mathfrak{s}_n := \vec{s}_n a$, where we introduce separate notation so that we can write expressions such as \eqref{eq:energy_ratio}, and compare \eqref{eq:BM_model_0} and \eqref{eq:physical_BM}, without ambiguity. 

The non-dimensionalized model in real space is, then, 
\begin{equation} \label{eq:nondimensional_schro}
    i \de_\tau \psi = H \psi, \quad \psi(0) = \psi_0,
\end{equation}
where $H$ is as in \eqref{eq:block_H}, with 
\begin{equation} \label{eq:block_H_diag}
    \left( H_{ii} \psi_i \right)_{\vec{R}_i} = - \frac{2}{\sqrt{3}} \begin{pmatrix} \psi_{\vec{R}_i}^B + \psi_{\vec{R}_i-\vec{a}_{i,1}}^B + \psi_{\vec{R}_i-\vec{a}_{i,2}}^B \\ \psi_{\vec{R}_i}^A + \psi^A_{\vec{R}_i + \vec{a}_{i,1}} + \psi^A_{\vec{R}_i + \vec{a}_{i,2}} \end{pmatrix}, \quad i \in \{1,2\},
\end{equation}
\begin{equation} \label{eq:block_H_offdiag}
    \left( H_{12} \psi_2 \right)^\sigma_{\vec{R}_1} = \frac{\sqrt{3}}{2} \sum_{\vec{R}_2 \in \Lambda_2} \sum_{\sigma' \in \{A,B\}} {h}\left( \vec{R}_{ij} + \vec{\tau}_{ij}^{\sigma \sigma'} ; \ell \right) \psi^{\sigma'}_{\vec{R}_j}, \quad \vec{R}_{ij} = \vec{R}_i - \vec{R}_j, \quad \vec{\tau}_{ij}^{\sigma \sigma'} = \vec{\tau}_{i}^\sigma - \vec{\tau}_j^{\sigma'}.
\end{equation}
In momentum space, the model is as in \eqref{eq:Bloch_block_H}, where the diagonal terms are
\begin{equation}
    \left( \mathcal{G}_i H_{ii} \mathcal{G}_i^{-1} \tilde{\psi}_i \right)(\vec{k}_i) = H_i(\vec{k}_i) \tilde{\psi}_i(\vec{k}_i), \quad i \in \{1,2\},
\end{equation}
\begin{equation}
    H_{i}(\vec{k}_i) = - \frac{2}{\sqrt{3}} \begin{pmatrix} 0 & F(\vec{k}_i) \\ \overline{F(\vec{k}_i)} & 0 \end{pmatrix}, \quad F(\vec{k}_i) = e^{i \vec{k}_i \cdot ( \vec{\tau}_i^B - \vec{\tau}_i^A )} ( 1 + e^{- i \vec{k}_i \cdot \vec{a}_{i,1}} + e^{- i \vec{k}_i \cdot \vec{a}_{i,2}} ),
\end{equation}
and the off-diagonal terms are
\begin{equation}  \label{eq:off_diag_1}
    \begin{split}
        &\left( \mathcal{G}_1 H_{12} \mathcal{G}_2^{-1} \tilde{\psi}_2 \right)^\sigma(\vec{k}_1) =  \\
        &\sum_{\sigma' \in \{A,B\}} \inty{\Gamma^*_2}{}{ \sum_{\vec{G}_1 \in \Lambda_1^*} \sum_{\vec{G}_2 \in \Lambda_2^*} e^{i [\vec{G}_1 \cdot \vec{\tau}_1^\sigma - \vec{G}_2 \cdot \vec{\tau}_2^{\sigma'}]} \oldhat{h}( \vec{k}_1 + \vec{G}_1 ; \ell ) \delta( \vec{k}_1 + \vec{G}_1 - \vec{k}_2 - \vec{G}_2 ) \tilde{\psi}^{\sigma'}_2(\vec{k}_2) }{\vec{k}_2}.
    \end{split}
\end{equation}
Note that in the final model, there are just two dimensionless parameters:
\begin{itemize}
    \item $\ell$, the ratio of interlayer distance to the monolayer lattice constant,
    \item $\theta$, the twist angle.
\end{itemize}
We will assume in what follows that these parameters are positive, i.e.,
\begin{equation} \label{eq:nonnegative}
    \ell > 0, \quad \theta > 0.
\end{equation}
This assumption on $\ell$ is natural. We make the assumption on $\theta$ without loss of generality since it is straightforward to generalize all our results to the case where $\theta < 0$. We avoid the case $\theta = 0$, where the moir\'e lattice is not well-defined. In this case, the system is straightforward to analyze since it is periodic with respect to the monolayer graphene lattice (see, e.g. \cite{RevModPhys.81.109}).

We now discuss the assumptions we require on the interlayer hopping function $h$, starting with the following proposition, which follows immediately from \eqref{eq:h_def}.
\begin{proposition} \label{prop:hankel}
    Let $r := |\vec{r}|$ and $\xi := |\vec{\xi}|$, and the nondimensionalized interlayer hopping function $h$ have the form \eqref{eq:h_def}. Then the Fourier transform of $h$ is radial, and can be calculated through the Hankel transform 
    \begin{equation}
        \oldhat{h}(\vec{\xi};\ell) = \oldhat{h}(\xi;\ell) = 2 \pi \inty{0}{\infty}{ J_0(\xi r) r h^\sharp\left( \sqrt{ r^2 + \ell^2 } \right) }{r},
    \end{equation}
    where $J_0(\zeta)$ is the Bessel function of the first kind
    \begin{equation}
        J_0(\zeta) := \frac{1}{2 \pi} \inty{0}{2 \pi}{ e^{i \zeta \cos \theta} }{\theta}.
    \end{equation}
\end{proposition}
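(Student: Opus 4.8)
The plan is to reduce the two-dimensional Fourier integral defining $\oldhat{h}$ to a one-dimensional integral by passing to polar coordinates and carrying out the angular integration explicitly; this simultaneously exhibits the radial symmetry of $\oldhat{h}$ and produces the Bessel kernel $J_0$.

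First I would record that, by \eqref{eq:h_def}, the map $\vec{r} \mapsto h(\vec{r};\ell)$ depends on $\vec{r}$ only through $r = |\vec{r}|$, so that $h(\vec{r};\ell) = h^\sharp(\sqrt{r^2 + \ell^2})$; then, writing $\vec{r} = r(\cos\phi, \sin\phi)^\top$ and $\vec{\xi} = \xi(\cos\psi,\sin\psi)^\top$, one has $\vec{\xi} \cdot \vec{r} = \xi r \cos(\phi - \psi)$ and, by Fubini,
\begin{equation}
    \oldhat{h}(\vec{\xi};\ell) = \int_0^\infty \left( \int_0^{2\pi} e^{-i \xi r \cos(\phi - \psi)} \, \dee\phi \right) r \, h^\sharp\!\left( \sqrt{r^2 + \ell^2} \right) \dee r.
\end{equation}

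Next I would observe that the substitution $\phi \mapsto \phi + \psi$ leaves the $2\pi$-periodic inner integrand invariant and removes all dependence on $\psi$, which already proves that $\oldhat{h}(\vec{\xi};\ell)$ depends only on $\xi = |\vec{\xi}|$, so that we may write $\oldhat{h}(\xi;\ell)$; and that a further substitution $\phi \mapsto \phi + \pi$ (harmless because $J_0$ is even) converts $e^{-i\xi r \cos\phi}$ into $e^{i\xi r\cos\phi}$, so that the inner integral equals $2\pi J_0(\xi r)$ by the integral representation of $J_0$ stated in the proposition. Substituting back yields the asserted Hankel-transform formula. As an alternative to the parity argument, one could expand the exponential in its power series and integrate term by term, directly recovering the standard series for $J_0$.

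This computation involves no serious obstacle; the only point deserving attention is the appeal to Fubini's theorem to interchange the radial and angular integrations, which is legitimate as soon as $r \mapsto r \, h^\sharp(\sqrt{r^2 + \ell^2})$ is absolutely integrable on $(0,\infty)$ — a condition guaranteed by the decay hypotheses imposed on $h^\sharp$ in Assumption \ref{as:h_regularity}, which at the same time ensure that the resulting Hankel integral converges absolutely.
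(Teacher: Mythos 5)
Your proof is correct and supplies exactly the standard polar-coordinate/Fubini computation that the paper leaves implicit, merely asserting that the proposition ``follows immediately from \eqref{eq:h_def}.'' Passing to polar coordinates, shifting the angular variable to kill the $\psi$-dependence, recognizing the angular integral as $2\pi J_0(\xi r)$, and invoking the decay of $h^\sharp$ to justify Fubini is precisely the argument that was intended; there is no gap and no genuine divergence from the paper's (unstated) route.
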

The two-dimensional Fourier transform of the interlayer hopping function can alternatively be calculated as the one-dimensional inverse Fourier transform of the three-dimensional Fourier transform of the interlayer hopping function.
\begin{proposition} \label{prop:alternative_formulation}
    Let $r, \xi$, and $h$ be as in Proposition \ref{prop:hankel}. It is straightforward to see that the three-dimensional (with respect to $\vec{r}$ and $\ell$) Fourier transform of $h^\sharp\left(\sqrt{r^2 + \ell^2}\right)$ is
    \begin{equation} \label{eq:3d_FT}
        \mathring{h}^\sharp\left(\sqrt{ \xi^2 + \omega^2 }\right) = 4 \pi \inty{0}{\infty}{ \frac{ \sin\left( \sqrt{ \xi^2 + \omega^2 } R \right) }{ \sqrt{ \xi^2 + \omega^2 } R } R^2 h^\sharp(R) }{R}.
    \end{equation}
    We then have that
    \begin{equation}
        \oldhat{h}(\vec{\xi};\ell) = \oldhat{h}(\xi;\ell) = \frac{1}{2 \pi} \inty{\mathbb{R}}{}{ e^{ i \omega \ell } \mathring{h}^\sharp\left( \sqrt{ \xi^2 + \omega^2 } \right) }{\omega}.
    \end{equation}

\end{proposition}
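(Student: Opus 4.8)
The plan is to exploit the observation that $h(\vec r;\ell)=h^\sharp\big(\sqrt{|\vec r|^2+\ell^2}\big)$ is nothing but the restriction to the plane $\{z=\ell\}$ of the genuinely three-dimensional radial function $G(\vec r,z):=h^\sharp\big(\sqrt{|\vec r|^2+z^2}\big)$ of $(\vec r,z)\in\mathbb{R}^3$, and then to relate the two-dimensional Fourier transform (in $\vec r$) of this restriction to the full three-dimensional Fourier transform of $G$ through a one-dimensional Fourier inversion in the last variable.

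First I would establish \eqref{eq:3d_FT} by a direct computation. Since $G$ is radial and Lebesgue measure is rotation invariant, its three-dimensional Fourier transform
\[
    \mathring{G}(\vec\xi,\omega):=\inty{\mathbb{R}^3}{}{e^{-i(\vec\xi\cdot\vec r+\omega z)}G(\vec r,z)}{\vec r\,\dee z}
\]
depends only on $\rho:=\sqrt{\xi^2+\omega^2}$. Aligning the polar axis with the vector $(\vec\xi,\omega)$, passing to spherical coordinates $(R,\vartheta,\varphi)$, and carrying out the elementary angular integral via $\inty{0}{\pi}{e^{-i\rho R\cos\vartheta}\sin\vartheta}{\vartheta}=\frac{2\sin(\rho R)}{\rho R}$ produces exactly $\mathring{G}(\vec\xi,\omega)=\mathring{h}^\sharp(\rho)=4\pi\inty{0}{\infty}{\frac{\sin(\rho R)}{\rho R}R^2 h^\sharp(R)}{R}$.

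Next, fixing $\vec\xi$ and writing $\phi(\ell):=\oldhat{h}(\vec\xi;\ell)=\inty{\mathbb{R}^2}{}{e^{-i\vec\xi\cdot\vec r}G(\vec r,\ell)}{\vec r}$, I would apply Fubini's theorem to the defining integral for $\mathring{G}$ to identify $\mathring{G}(\vec\xi,\omega)$ with the one-dimensional Fourier transform $\inty{\mathbb{R}}{}{e^{-i\omega z}\phi(z)}{z}$ of $\phi$ in the interlayer-distance variable. One-dimensional Fourier inversion applied to $\phi$ then gives
\[
    \oldhat{h}(\xi;\ell)=\phi(\ell)=\frac{1}{2\pi}\inty{\mathbb{R}}{}{e^{i\omega\ell}\mathring{G}(\vec\xi,\omega)}{\omega}=\frac{1}{2\pi}\inty{\mathbb{R}}{}{e^{i\omega\ell}\mathring{h}^\sharp\!\left(\sqrt{\xi^2+\omega^2}\right)}{\omega},
\]
where the last equality inserts the first step together with $\mathring{G}(\vec\xi,\omega)=\mathring{h}^\sharp(\sqrt{\xi^2+\omega^2})$; this is the claimed identity.

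The only ingredients that are not purely mechanical are the justification of the interchange of integration in the identification of $\mathring{G}$ with the transform of $\phi$, and the validity of the pointwise one-dimensional inversion formula for $\phi$; both hold once $h^\sharp$ (hence $G$ and $\phi$) and $\mathring{h}^\sharp$ are integrable, which is guaranteed by the decay and smoothness we impose in Assumption \ref{as:h_regularity}, and is immediate for the Schwartz-type profiles of Examples \ref{ex:example_1}--\ref{ex:example_3}. I therefore do not anticipate any genuine obstacle: the substance of the proof is careful bookkeeping of Fourier conventions together with one standard radial Fourier-transform computation.
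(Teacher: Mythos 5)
Your proof is correct and fills in exactly the details that the paper's terse ``It is straightforward to see'' suppresses. You correctly lift $h(\vec r;\ell)$ to the genuinely radial three-dimensional function $G(\vec r,z)=h^\sharp(\sqrt{|\vec r|^2+z^2})$, compute its three-dimensional Fourier transform by a standard spherical-coordinate reduction (the angular integral $\int_0^\pi e^{-i\rho R\cos\vartheta}\sin\vartheta\,\dee\vartheta=2\sin(\rho R)/(\rho R)$ is evaluated correctly, giving the $4\pi$ prefactor), and then identify $\mathring{G}(\vec\xi,\omega)$ via Fubini as the one-dimensional Fourier transform of $\ell\mapsto\oldhat{h}(\vec\xi;\ell)$; one-dimensional inversion then closes the argument with the paper's $\frac{1}{2\pi}$ convention intact. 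This is the approach the paper has in mind, so there is nothing to add beyond the sanity check you already flag on integrability, which is indeed guaranteed under Assumption \ref{as:h_regularity}.
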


We now make our main assumptions on $h$.
\begin{assumption} \label{as:h_regularity}
    Let $r := |\vec{r}|$ and $\xi := |\vec{\xi}|$. 
    We assume that $h$ has the form \eqref{eq:h_def} for some $h^\sharp$, and that the Fourier transform \eqref{eq:h_FT} exists. We assume that $\oldhat{h}$ can be bounded above and below by decaying exponentials, more specifically, that there exist constants $C_1$, $C_2$, $D_1$, $D_2$, $\ell_0 > 0$, with $D_1 \geq D_2$, such that
    \begin{equation} \label{eq:D_def}
        C_1 e^{- D_1 \ell \xi} \leq \oldhat{h}(\vec{\xi};\ell) \leq C_2 e^{- D_2 \ell \xi}, \quad \forall \vec{\xi} \in \mathbb{R}^2, \forall \ell \geq \ell_0.
    \end{equation}
    We assume further that the constants $D_1$ and $D_2$ satisfy\footnote{Note that the lower bound already follows from $D_1 \geq D_2$, so the upper bound is the non-trivial assumption here.}
    \begin{equation} \label{eq:D1D2}
        1 \leq \frac{D_1}{D_2} < 2.
    \end{equation}
    We finally assume that $\oldhat{h}$ is Lipschitz continuous with a Lipschitz constant which exponentially decays, more specifically, that there exists a constant $C_3 > 0$ such that 
    \begin{equation} \label{eq:Lipschitz}
        | \oldhat{h}(\vec{\xi};\ell) - \oldhat{h}(\vec{\xi}';\ell) | \leq C_3 e^{- D_2 \ell d\left(\vec{0},\left[\vec{\xi},\vec{\xi}'\right]\right)} | \vec{\xi} - \vec{\xi}' |, \quad \forall \vec{\xi}, \vec{\xi}' \in \mathbb{R}^2, \forall \ell \geq \ell_0,
    \end{equation}
    where $d\left(\vec{0},\left[\vec{\xi},\vec{\xi}'\right]\right)$ denotes the minimum distance between the straight line connecting $\vec{\xi}$ and $\vec{\xi}'$ and the origin (see Remark \ref{rem:lipschitz}).
\end{assumption}
We discuss briefly the importance of each part of Assumption \ref{as:h_regularity}. Radial symmetry of $\oldhat{h}$ allows us to simplify the moir\'e potential \eqref{eq:moire_potential} (see \eqref{eq:simplify_moire_potential}), and simplifies the proof of Lemma \ref{lem:off_diag_lemma}. We expect that this assumption can be relaxed; see Remark \ref{rem:relaxing}. 

The lower bound in \eqref{eq:D_def}, together with the upper bound in \eqref{eq:D_def} and the inequality \eqref{eq:D1D2}, guarantee that the interlayer terms corresponding to the momentum space hops \eqref{eq:momentum_space_hops} are rigorously larger, in the regime \eqref{eq:balance__0}, than all other momentum space hops arising from \eqref{eq:off_diag_1}. To see this, note that the interlayer terms corresponding to the momentum space hops \eqref{eq:momentum_space_hops} are proportional to $\oldhat{h}(|\vec{\kappa}|;\ell)$, while the next largest interlayer terms are bounded by $C_2 e^{- 2 D_2 \ell |\vec{\kappa}|}$. Combining \eqref{eq:D_def} with \eqref{eq:D1D2} ensures that 
\begin{equation}
    \frac{ e^{- 2 D_2 \ell |\vec{\kappa}|} }{ \oldhat{h}(|\vec{\kappa}|;\ell) } \lesssim e^{ (D_1 - 2 D_2) \ell |\vec{\kappa}| } \rightarrow 0 \quad \text{ as } \ell \rightarrow \infty. 
\end{equation}
For more detail, see Section \ref{sec:proof_key_lemma}, in particular the proof of Lemma \ref{lem:off_diag_lemma}. The lower bound in \eqref{eq:D_def} may fail in some more realistic models; see Remark \ref{rem:relaxing}. 

Finally, the estimate \eqref{eq:Lipschitz} allows for the bound \eqref{eq:r2_bound} in the proof of Lemma \ref{lem:off_diag_lemma}. This bound justifies an approximation where the momentum space interlayer hopping strength $\oldhat{h}$, which is, in general, wavenumber-dependent, is treated as wavenumber-independent. The wavenumber-dependence of the interlayer hopping strength can be seen clearly in \eqref{eq:off_diagonal_Bloch}, where $\oldhat{h}$ depends on $\vec{k}_1$. The approximation can be seen clearly in \eqref{eq:local_approx}, where the un-approximated interlayer hopping is the left-hand side, and the approximated interlayer hopping is the first term on the right-hand side. This approximation is known in the physics literature as the ``local'' approximation; see, for example, \cite{Xie2021}.



Using a table of Hankel transform pairs \cite{Bateman1954a}, we find that Assumption \ref{as:h_regularity} can be verified for some specific $h$. Note that exponential decay of the interlayer hopping function in real space is to be expected when the tight-binding model has been defined through a basis of exponentially localized Wannier functions \cite{MarzariVanderbilt1997,MarzariMostofiYatesSouzaVanderbilt2012} (see also Remark \ref{rem:orbital_decay}).
\begin{example} \label{ex:example_1}
The function
    \begin{equation} \label{eq:example_1}
        h(\vec{r};\ell) = \frac{1}{(r^2 + \ell^2)^{3/2}}
    \end{equation}
    satisfies Assumption \ref{as:h_regularity}, with $D_2 = 1$, any $\ell_0 > 0$ and any $1 < D_1 \leq 2$. Its Fourier transform is
    \begin{equation} \label{eq:example_FT_1}
        \oldhat{h}(\vec{\xi};\ell) = 2 \pi \frac{ e^{- \ell \xi} }{ \ell }.
    \end{equation}
\end{example}
\begin{example} \label{ex:example_2}
For any $\alpha > 0$, the function
    \begin{equation} \label{eq:example_2}
        h(\vec{r};\ell) = e^{- \alpha \sqrt{ r^2 + \ell^2 }}
    \end{equation}
    satisfies Assumption \ref{as:h_regularity}, for any $\ell_0 > 0$, and any pair $D_1 > 1$ and $0 < D_2 < 1$ such that \eqref{eq:D1D2} holds. Its Fourier transform is
    \begin{equation} \label{eq:example_FT_2}
        \oldhat{h}(\vec{\xi};\ell) = 2 \pi \frac{\alpha e^{- \ell \sqrt{ \xi^2 + \alpha^2 } } \left( 1 + \ell \sqrt{ \xi^2 + \alpha^2 } \right) }{ ( \xi^2 + \alpha^2 )^{3/2} }.
    \end{equation}
\end{example}
\begin{example} \label{ex:example_3}
For any $\alpha > 0$, the function
    \begin{equation} \label{eq:example_3}
        h(\vec{r};\ell) = \frac{ e^{- \alpha \sqrt{ r^2 + \ell^2 }} }{ \sqrt{ r^2 + \ell^2 } }
    \end{equation}
    satisfies Assumption \ref{as:h_regularity}, for any $\ell_0 > 0$, and any pair $D_1 > 1$ and $0 < D_2 < 1$ such that \eqref{eq:D1D2} holds. Its Fourier transform is
    \begin{equation} \label{eq:example_FT_3}
        \oldhat{h}(\vec{\xi};\ell) = 2 \pi \frac{e^{- \ell \sqrt{ \xi^2 + \alpha^2 } }}{ \sqrt{ \xi^2 + \alpha^2 } }.
    \end{equation}
\end{example}
Note that in all of Examples \ref{ex:example_1}-\ref{ex:example_3}, we can take $D_1 = 1 + \epsilon$, and $D_2 = 1 - \epsilon$ for \emph{any} $\epsilon > 0$, so that \eqref{eq:D1D2} holds comfortably.

Examples \ref{ex:example_1}-\ref{ex:example_3} suggest that Assumption \ref{as:h_regularity} may hold for fairly general $h^\sharp$, with $D_1 = 1 + \epsilon$ and $D_2 = 1 - \epsilon$ for any $\epsilon > 0$. However, we can prove this only for the upper bound in \eqref{eq:D_def}, by a straightforward application of Cauchy's theorem (although see Remark \ref{rem:lower_bounds}). Although the following Proposition is surely well-known, we include a proof for completeness, especially since it is so simple.
\begin{proposition} \label{prop:Paley-Wiener}
    Assume that $h^\sharp(\zeta)$ is analytic in $\zeta$, except possibly at the origin, and satisfies the bound
    \begin{equation} \label{eq:hsharp_decay}
        \left| h^\sharp(\zeta) \right| \leq C |\text{\emph{Re} }\zeta|^{-2-\delta}, \quad \Re \zeta \geq R,
    \end{equation}
    for some fixed $C, R, \delta > 0$. Then, for any fixed $0 < \epsilon < 1$, there exists a constant $C_\epsilon > 0$ such that
    \begin{equation} \label{eq:bound}
        \left| \oldhat{h}(\xi;\ell) \right| \leq C_\epsilon e^{- \ell (1 - \epsilon) \xi}.
    \end{equation}
\end{proposition}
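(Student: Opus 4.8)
The plan is to prove the estimate by a Paley--Wiener contour-deformation argument applied directly to the two-dimensional Fourier integral; the representations in Propositions~\ref{prop:hankel} and~\ref{prop:alternative_formulation} will not be needed. Since $\oldhat{h}(\vec{\xi};\ell)$ is radial by Proposition~\ref{prop:hankel}, it suffices to bound $\oldhat{h}(\xi;\ell)$ along $\vec{\xi}=(\xi,0)$ with $\xi>0$ (the case $\xi=0$ is trivial, since $\oldhat{h}(0;\ell)=\int_{\mathbb{R}^2}h(\vec{r};\ell)\,d\vec{r}$ is finite). Writing
\[
\oldhat{h}(\xi;\ell)=\int_{\mathbb{R}^2}e^{-i\xi x}\,h^\sharp\!\left(\sqrt{x^2+y^2+\ell^2}\right)\,dx\,dy,
\]
I fix $\epsilon\in(0,1)$, set $\mu:=(1-\epsilon)\ell$, and deform, for each fixed $y$, the inner $x$-contour from $\mathbb{R}$ to $\mathbb{R}-i\mu$. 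Since $\left|e^{-i\xi x}\right|=e^{\xi\,\Im x}$, this produces exactly the prefactor $e^{-\mu\xi}=e^{-(1-\epsilon)\ell\xi}$.

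The first step is to check that $x\mapsto h^\sharp(\sqrt{x^2+y^2+\ell^2})$ extends holomorphically to the closed strip $\{-\mu\le\Im x\le 0\}$. For $x=a-is$ with $0\le s\le\mu$ one computes $(a-is)^2+y^2+\ell^2=(a^2-s^2+y^2+\ell^2)-2ias$, whose imaginary part vanishes only at $a=0$, where the real part equals $y^2+\ell^2-s^2\ge\ell^2-\mu^2=\epsilon(2-\epsilon)\ell^2>0$; here the strict inequality $\mu<\ell$ (i.e.\ $\epsilon>0$) is essential. Hence the argument of $h^\sharp$ stays in $\mathbb{C}\setminus(-\infty,0]$, so $\sqrt{\,\cdot\,}$ is single-valued and analytic there and $h^\sharp(\sqrt{\,\cdot\,})$ is analytic on the strip (using analyticity of $h^\sharp$ away from the origin). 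The second step is the contour shift itself: apply Cauchy's theorem on the rectangle with corners $\pm R_0$, $\pm R_0-i\mu$ and let $R_0\to\infty$; the two vertical sides contribute nothing because on them $\Re\bigl(\sqrt{(a-is)^2+y^2+\ell^2}\bigr)\to\infty$ as $a=\pm R_0\to\infty$, so \eqref{eq:hsharp_decay} drives $|h^\sharp|\to 0$ while $|e^{-i\xi x}|=e^{-\xi s}\le 1$ stays bounded. After integrating in $y$ (justified by the uniform bound of the next step) this gives
\[
\oldhat{h}(\xi;\ell)=e^{-(1-\epsilon)\ell\xi}\int_{\mathbb{R}^2}e^{-i\xi x}\,h^\sharp\!\left(\sqrt{(x-i\mu)^2+y^2+\ell^2}\right)\,dx\,dy .
\]

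The remaining step is to bound the integral on the right by a constant $C_\epsilon$ independent of $\xi$ (and, for the application to Assumption~\ref{as:h_regularity}, uniform in $\ell\ge\ell_0$). Setting $w:=\sqrt{(x-i\mu)^2+y^2+\ell^2}$, one has $w^2=P+iQ$ with $P=x^2+y^2+\ell^2-\mu^2\ge\epsilon(2-\epsilon)\ell^2>0$, so $\Re(w^2)>0$; hence $w$ lies in the sector $|\arg w|\le\pi/4$ and $\Re w\ge\sqrt{P}\ge\sqrt{x^2+y^2+\epsilon(2-\epsilon)\ell_0^2}$. On the region $\{P\ge R^2\}$, \eqref{eq:hsharp_decay} gives $|h^\sharp(w)|\le C(\Re w)^{-2-\delta}\le C(x^2+y^2+\epsilon(2-\epsilon)\ell_0^2)^{-(2+\delta)/2}$, which is integrable over $\mathbb{R}^2$ uniformly in $\ell\ge\ell_0$ because $2+\delta>2$. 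On the complementary region $\{P<R^2\}$ — which is empty unless $\ell<R/\sqrt{\epsilon(2-\epsilon)}$, and is otherwise contained in the disk $\{x^2+y^2<R^2\}$ — the point $w$ ranges over the compact set $\{\,\sqrt{\epsilon(2-\epsilon)}\,\ell_0\le\Re\zeta,\ |\Im\zeta|\le\Re\zeta,\ |\zeta|\le\sqrt{2}\,R\,\}\subset\mathbb{C}\setminus\{0\}$, on which the analytic function $h^\sharp$ is bounded. Adding the two contributions produces $C_\epsilon$.

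I expect the main point requiring care to be the verification in the first two steps that the argument of $h^\sharp$ never leaves its domain of analyticity during the deformation (together with the vanishing of the side contributions). This is precisely where the constraint $\mu<\ell$, i.e.\ $\epsilon>0$, is used, and it is genuinely unavoidable: at $\mu=\ell$ the argument $\sqrt{(x-i\ell)^2+y^2+\ell^2}$ passes through the origin at $x=y=0$, where $h^\sharp$ may be singular. A secondary bookkeeping point is keeping $C_\epsilon$ uniform in $\ell\ge\ell_0$, which works out because the ``bad'' region $\{P<R^2\}$ above is nonempty only for $\ell$ in the bounded range $[\ell_0,\,R/\sqrt{\epsilon(2-\epsilon)}\,)$.
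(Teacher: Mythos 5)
Your proof is correct and follows essentially the same route as the paper's: deform the $x$-integral in \eqref{eq:F_integral} down to $\mathbb{R}-i(1-\epsilon)\ell$ by Cauchy's theorem, using \eqref{eq:hsharp_decay} to control the vertical sides of the rectangle and to bound the shifted integral, which is exactly the paper's \eqref{eq:contour}. You simply fill in details the paper leaves implicit (single-valuedness of the square root on the strip, uniformity in $\ell\ge\ell_0$); the only nitpick is that the compact set in your final step need not satisfy $|\zeta|\le\sqrt{2}\,R$ when $\epsilon$ is small, but it is still bounded in terms of $\epsilon$, $R$, $\ell_0$ and stays away from the origin, so the argument is unaffected.
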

\begin{proof}
    Since $\oldhat{h}(\vec{\xi};\ell) = \oldhat{h}(\xi;\ell)$ is radial, it suffices to prove \eqref{eq:bound} by setting $\vec{\xi} = ( \xi , 0 )^\top$, with $\xi > 0$, in the Fourier transform formula \eqref{eq:h_FT}, resulting in
    \begin{equation} \label{eq:F_integral}
        \oldhat{h}(\xi;\ell) = \inty{\mathbb{R}}{}{ \inty{\mathbb{R}}{}{ e^{- i \xi x} h^\sharp\left( \sqrt{ x^2 + y^2 + \ell^2 } \right) }{x} }{y}.
    \end{equation}
    Pushing the $x$ integral below the real axis using Cauchy's theorem (using \eqref{eq:hsharp_decay} to control the ``ends'' of the rectangle) we have that
    \begin{equation} \label{eq:contour}
        \oldhat{h}(\xi;\ell) = e^{- \ell (1 - \epsilon) \xi} \inty{\mathbb{R}}{}{ \inty{\mathbb{R}}{}{ e^{- i \xi x} h^\sharp\left( \sqrt{ x^2 - 2 i \ell (1 - \epsilon) x - \ell^2 (1 - \epsilon)^2 + y^2 + \ell^2 } \right) }{x} }{y}.
    \end{equation}
    Using decay of $h^\sharp$ \eqref{eq:hsharp_decay}, the double integral converges and we are done.
\end{proof}
The alternative formulation given in Proposition \ref{prop:alternative_formulation} allows for the following
\begin{proposition} \label{prop:Paley-Wiener_alt}
    Assume that $\mathring{h}^\sharp(\zeta)$ is analytic in $\zeta$, except possibly at the origin, and satisfies the bound
    \begin{equation}
        | \mathring{h}^\sharp(\zeta) | \leq C | \Re \zeta |^{-1-\delta}, \quad \Re \zeta \geq R,
    \end{equation}
    for some fixed $C, R, \delta > 0$. Then, for any fixed $0 < \epsilon < 1$, there exists a constant $C_\epsilon > 0$ such that
    \begin{equation}
        \left| \oldhat{h}(\xi;\ell) \right| \leq C_\epsilon e^{- \ell ( 1 - \epsilon ) \xi}.
    \end{equation}
\end{proposition}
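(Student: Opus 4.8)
The plan is to repeat the contour-shifting (Paley--Wiener) argument of the proof of Proposition~\ref{prop:Paley-Wiener}, but starting from the one-dimensional representation supplied by Proposition~\ref{prop:alternative_formulation},
\[
    \oldhat{h}(\xi;\ell) = \frac{1}{2\pi} \inty{\mathbb{R}}{}{ e^{i\omega\ell}\, \mathring{h}^\sharp\!\left( \sqrt{\xi^2 + \omega^2} \right) }{\omega},
\]
instead of from the two-dimensional integral \eqref{eq:F_integral}. Since $|e^{i\omega\ell}| = e^{-\ell\,\Im\omega}$, shifting the $\omega$-contour \emph{upward} manufactures exponential decay in $\ell$; to obtain the rate $(1-\epsilon)\xi$ one moves the contour from $\mathbb{R}$ to $\mathbb{R} + i(1-\epsilon)\xi$ and pulls out the factor $e^{-\ell(1-\epsilon)\xi}$.

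First I would verify that the integrand extends analytically across the open strip $S_\xi := \{\omega : 0 < \Im\omega < \xi\}$. For $\omega = u + iv \in S_\xi$ one has $\Re(\xi^2 + \omega^2) = \xi^2 + u^2 - v^2 > \xi^2 - v^2 > 0$, so the principal branch of $\sqrt{\xi^2+\omega^2}$ is well defined, analytic and nonvanishing on $S_\xi$; it takes values in the right half-plane and stays away from the origin, which is the only point where $\mathring{h}^\sharp$ is allowed to be singular. Hence $\omega \mapsto e^{i\omega\ell}\,\mathring{h}^\sharp(\sqrt{\xi^2+\omega^2})$ is analytic on $S_\xi$ and continuous on its closure up to height $(1-\epsilon)\xi$.

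Next I would apply Cauchy's theorem to the rectangle with corners $\pm M$ and $\pm M + i(1-\epsilon)\xi$. On the vertical edges $\omega = \pm M + iv$ with $0 \le v \le (1-\epsilon)\xi$, one has $|e^{i\omega\ell}| \le 1$ and $\Re\sqrt{\xi^2+\omega^2} = M + O(1)$ as $M \to \infty$, so the hypothesis $|\mathring{h}^\sharp(\zeta)| \le C|\Re\zeta|^{-1-\delta}$ (valid once $M$ is large) makes the contributions of these edges $O(M^{-1-\delta}) \to 0$. Letting $M \to \infty$ gives
\[
    \oldhat{h}(\xi;\ell) = \frac{e^{-\ell(1-\epsilon)\xi}}{2\pi} \inty{\mathbb{R}}{}{ e^{iu\ell}\, \mathring{h}^\sharp\!\left( \sqrt{\xi^2 + (u + i(1-\epsilon)\xi)^2} \right) }{u},
\]
and the remaining integral converges absolutely: its integrand is bounded on compact $u$-sets (again using that $\sqrt{\xi^2 + (u+i(1-\epsilon)\xi)^2}$ never hits the origin when $\xi>0$) and decays like $|u|^{-1-\delta}$ as $|u| \to \infty$, since $\Re\sqrt{\xi^2+(u+i(1-\epsilon)\xi)^2} \to \infty$. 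Estimating this integral by a constant $C_\epsilon$ finishes the argument.

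The only delicate point, and hence the main obstacle, is the branch point of $\sqrt{\xi^2+\omega^2}$ at $\omega = i\xi$: it is exactly what forbids pushing the contour all the way to $\mathbb{R}+i\xi$ (which would give the sharp rate $e^{-\ell\xi}$) and forces the loss of the arbitrarily small $\epsilon$. A secondary bookkeeping issue is uniformity in $\xi$: the estimate above a priori has a constant depending on $\xi$, but examining the $|u|^{-1-\delta}$ tail, together with the trivial bound $|\oldhat{h}(\xi;\ell)| \le \|h(\cdot;\ell)\|_{L^1}$ and $e^{-\ell(1-\epsilon)\xi} \approx 1$ for small $\xi$, lets one absorb this into a single $C_\epsilon$, as required for Assumption~\ref{as:h_regularity}.
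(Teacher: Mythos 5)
Your proposal is correct and is exactly the argument the paper intends: it takes the one-dimensional representation from Proposition \ref{prop:alternative_formulation} and pushes the $\omega$-contour up to $\mathbb{R} + i(1-\epsilon)\xi$, just as in the paper's proof (which simply declares this step ``exactly analogous'' to Proposition \ref{prop:Paley-Wiener}). Your added details — analyticity of $\sqrt{\xi^2+\omega^2}$ in the strip $0<\Im\omega<\xi$, vanishing of the vertical edges via the $|\Re\zeta|^{-1-\delta}$ bound, and the role of the branch point at $\omega=i\xi$ in forcing the loss of $\epsilon$ — are all consistent with that argument.
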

\begin{proof}
    The proof is exactly analogous to that of Proposition \ref{prop:Paley-Wiener}, except that we push the $\omega$ integral up to $\mathbb{R} + i (1 - \epsilon) \xi$.
\end{proof}
\begin{remark} \label{rem:lower_bounds}
    The argument given in the proof of Proposition \ref{prop:Paley-Wiener} could possibly be extended under additional assumptions on $h^\sharp$ to give lower bounds, or even to evaluate $\oldhat{h}(\xi;\ell)$ explicitly. For example, when $h$ is as in Example \ref{ex:example_1}, the integral over $y$ in \eqref{eq:F_integral} is explicit, and the resulting integrand in $x$ has a simple pole at $- i \ell$, so that pushing the integral below the real axis leads immediately to \eqref{eq:example_FT_1}. More generally, the integrand of \eqref{eq:contour} has a branch point at $- i \ell$. In this case, one can fix the branch cut to point downwards, and then further deform the $x$ integral of \eqref{eq:contour} into the integral either side of the branch cut. Such investigations do not seem to lead in a straightforward way to general conditions on $h^\sharp$ guaranteeing lower bounds, so we do not consider this further in this work.
\end{remark}

\begin{remark} \label{rem:lipschitz}
    Note that \eqref{eq:Lipschitz} follows easily if $\oldhat{h}$ is continuously differentiable everywhere and that derivative decays exponentially with distance from the origin (with rate proportional to $\ell$). We require the slightly more obscure statement \eqref{eq:Lipschitz} to allow for $\oldhat{h}$ merely Lipschitz, as in \eqref{eq:example_FT_1}.
\end{remark}
\begin{remark} \label{rem:orbital_decay}
    We remark on rigorous results on decay of orbitals (and hence of the interlayer hopping function). Precise exponential asymptotics of the hopping function \emph{in real space} have been obtained by Helffer-Sj\"ostrand \cite{1984HelfferSjostrand}. Exponential upper and lower bounds on the hopping function have been proved in the context of a continuum model of graphene by Fefferman-Lee-Thorp-Weinstein \cite{2017FeffermanLee-ThorpWeinstein}. These bounds were generalized to the setting of a strong magnetic field by Fefferman-Shapiro-Weinstein \cite{Fefferman2020a,Shapiro2020}. 
\end{remark}
\begin{remark} \label{rem:relaxing}
    The assumption that $h$ is radial, and, more specifically, a function of the three-dimensional distance \eqref{eq:h_def}, is very much an approximation to the physical $h$. It would be more realistic to assume that $h$ is a general decaying function which respects the symmetries of monolayer graphene, i.e., $\frac{2 \pi}{3}$-rotation, inversion ($\vec{r} \mapsto - \vec{r}$), and complex conjugation. Our results will go through, but will be more complicated to state and prove, as long as the bounds \eqref{eq:D_def}, \eqref{eq:D1D2}, and \eqref{eq:Lipschitz}, still hold for such an $h$. Interestingly, the Fourier transforms of some realistic choices of $h$, especially those which account for mechanical relaxation of the bilayer, actually have zeros, so that the lower bound in \eqref{eq:D_def} fails (see Figure 3 of \cite{Massatt2021}). It is actually unclear, in such cases, whether a similar reduction to effective dynamics is possible, and, if it is, what the effective dynamics should be.
\end{remark}
\begin{remark}
   The lattice structure, $\Lambda_i,$ for each layer has been observed to be significantly reconstructed at small twist angles to reduce its elastic energy ~\cite{2018CarrMassattTorrisiCazeauxLuskinKaxiras,KimRelax18,cazeaux2018energy}, and the Hamiltonian must thus be appropriately modified \cite{Massatt2021,KimRelax18,fang2019angledependent} to predict electronic properties accurately. It follows from the results in \cite{Massatt2021} that for small twist angles the decay rates $D_i
   \sim \theta/(d_i+\theta)$ for $d_i>0,$ but we can continue to expect that $1\le D_1/D_2< 2.$
    We will see in Section~\ref{sec:magic_angles} that the unrelaxed Bistritzer-MacDonald model is sufficiently accurate to predict the first magic angle at $\theta \sim 1^\circ$, although a relaxed model is necessary to predict the correct band gap between the nearly-flat bands at the Fermi energy and the other moir\'e bands~\cite{Massatt2021,KimRelax18,fang2019angledependent}.
   Generalizing our rigorous theory to allow for lattice relaxation is the subject of ongoing work.
\end{remark}
  
\section{Approximate solutions by a systematic multiple scales analysis} \label{sec:multiple_scales_analysis}

In this section, we will generate approximate solutions of the time-dependent Schr\"odinger equation by a systematic multiple scales analysis. In the following section, we will prove convergence of these approximate solutions to exact solutions. 

\subsection{Wave-packet \textit{ansatz}} 

In this section, we will introduce the basic ``wave-packet'' {\textit ansatz} we will use to generate approximate solutions. The non-dimensionalized time-dependent Schr\"odinger equation is \eqref{eq:nondimensional_schro}. We assume the following ``wave-packet'' initial condition
\begin{equation} \label{eq:WP_0}
    \begin{split}
        &\psi_0 = \begin{pmatrix} \psi_{1,0} \\ \psi_{2,0} \end{pmatrix}, \\ 
            &( \psi_{i,0} )_{\vec{R}_i}^\sigma = \gamma \left. f^\sigma_{i,0}( \vec{X} )\right|_{\vec{X} = \gamma (\vec{R}_i + \vec{\tau}^\sigma_i)} e^{i \vec{\kappa}_i \cdot (\vec{R}_i + \vec{\tau}_i^\sigma)}, \quad i \in \{1,2\}, \sigma \in \{A,B\},
    \end{split}
\end{equation}
where $\gamma > 0$ is a dimensionless small parameter which can be interpreted in a few (equivalent) ways. Most obviously, small $\gamma$ enforces a separation of scales between the spatial variation of the envelope and plane wave parts of the wave-packet. Equivalently, $\gamma$ measures the concentration of the wave-packet, in momentum space, relative to the monolayer $\vec{K}$ points (see Appendix \ref{sec:WP_momentum}). We can also, however, using the linear relationship between momentum relative to the $\vec{K}$ point and energy defined by the monolayer Dirac dispersion relation \eqref{eq:monolayer_Dirac}, think of $\gamma$ as measuring the spectral width of the wave function relative to the dimensionless energy scale $1$ (which, in physical units, corresponds to $\mathcal{E}$ \eqref{eq:energy_scale}).

We make the following assumption on the initial amplitudes $f_{i,0}$ which, in particular, ensures that $\psi_0 \in \mathcal{H}$.
\begin{assumption} \label{as:f_regularity}
    {\color{black}
    We assume that the initial amplitudes have bounded eighth Sobolev norm
    \begin{equation} \label{eq:f_regularity}
        \sup_{\sigma \in \{A,B\}} \sup_{i \in \{1,2\}} \| f_{i,0}^\sigma \|_{H^8(\mathbb{R}^2)} \leq C_{f_0},
    \end{equation}
    for some constant $C_{f_0}$.
    }
\end{assumption}
{\color{black}
\begin{remark}
    Our results require bounded first and second Sobolev norms because the proofs of Lemmas \ref{lem:off_diag_lemma} and \ref{lem:diag_lemma} involve estimating first and second order Taylor series remainders in momentum space. Requiring bounded \emph{fourth} Sobolev norms allows for estimate \eqref{eq:third_sob_estimate} in the proof of estimate \eqref{eq:r2_bound} of Lemma \ref{lem:off_diag_lemma}. We require higher Sobolev norms to be bounded in order to estimate the $\tilde{\vec{G}}_1 \neq \vec{0}$ terms in \eqref{eq:all_terms} by \eqref{eq:trick_estimate} (and other similar terms); see also Remark \ref{rem:higher_order_terms}. 
\end{remark}
}
We then seek a solution of \eqref{eq:nondimensional_schro} for $\tau \geq 0$ via a ``wave-packet'' {\it ansatz}\footnote{Note that here we use the notation $\Psi$ to distinguish between the wave-packet noted by $\psi$.} 
\begin{equation} \label{eq:corrector_def}
    \psi(\tau) = \Psi(\tau) + \eta(\tau),
\end{equation}
where
\begin{equation} \label{eq:WP}
    \begin{split}
        &\Psi(\tau) = \begin{pmatrix} \Psi_1(\tau) \\ \Psi_2(\tau) \end{pmatrix}, \\ 
        &( \Psi_i )_{\vec{R}_i}^\sigma(\tau) = \gamma \left. f^\sigma_i( \vec{X}, T )\right|_{\vec{X} = \gamma (\vec{R}_i + \vec{\tau}^\sigma_i), T = \gamma \tau} e^{i \vec{\kappa}_i \cdot (\vec{R}_i + \vec{\tau}^\sigma_i)}, \quad i \in \{1,2\}, \sigma \in \{A,B\}
    \end{split}
\end{equation}
is the wave-packet {\it ansatz}\footnote{Note that normally in the {\it ansatz} 
there would be a time-dependent phase part, but in this case, since (by convention) the monolayer Bloch bands equal zero at the Dirac points, this phase is zero for all time.}, and $\eta(\tau)$ is the corrector. To summarize, the additional parameter introduced by our {\it ansatz} is:
\begin{itemize}
    \item $\gamma$, the momentum space spread of the wave-packet relative to the inverse lattice constant. 
\end{itemize}
\begin{remark} \label{rem:WP_scaling}
    Assuming that $\tau$ scales linearly with $\gamma$ in \eqref{eq:WP} is natural because it results in \eqref{eq:left-hand_side} and the leading term of \eqref{eq:diag_decomposition} balancing. Physically, this occurs because the dispersion relation at the Dirac point \eqref{eq:monolayer_Dirac} is linear.
\end{remark}

In what follows, when we wish to suppress the sublattice degree of freedom, we will adopt the obvious shortened notations $f_i = \left( f_i^A , f_i^B \right)^\top$, $f_{i,0} = \left( f_{i,0}^A , f_{i,0}^B \right)^\top, i \in \{1,2\}$. When we also want to suppress the layer degree of freedom we will write $f = \left( f_1 , f_2 \right)^\top$ and $f_0 = \left( f_{1,0}, f_{2,0} \right)^\top$. We will also simply evaluate the multiscale variables $\vec{X}, T,$ etc., whenever there is no danger of confusion.

Assuming that
\begin{equation}
    f^\sigma_i( \vec{X}, 0 ) = f^\sigma_{i,0}( \vec{X} ), \quad i \in \{1,2\}, \sigma \in \{A,B\},
\end{equation}
the corrector satisfies
\begin{equation} \label{eq:residual_def}
    i \de_\tau \eta = H \eta + r, \quad \eta(0) = 0, \quad r := - ( i \de_\tau - H ) \Psi,
\end{equation}
where $r$ is known as the residual. 

Recall that we denote the electronic wave function Hilbert space $\left( \ell^2(\mathbb{Z}^2;\mathbb{C}^2) \right)^2$ of the twisted bilayer by $\mathcal{H}$. We will denote by $\| \cdot  \|_{\mathcal{H}}$ the standard $\mathcal{H}$-norm, i.e.,
\begin{equation} \label{eq:H_norm}
    \| \phi \|_{\mathcal{H}} := \left( \sum_{i = 1,2} \sum_{\vec{R}_i \in \Lambda_i} \sum_{\sigma = A,B} | \phi^{\sigma}_{\vec{R}_i} |^2 \right)^{\frac12}.
\end{equation}

The following lemma shows that the $\mathcal{H}$-norm of the corrector $\eta$ can be bounded in terms of the $\mathcal{H}$-norm of the residual $r$.
\begin{lemma} \label{lem:approx_sols}
    Let $\eta$ satisfy the IVP \eqref{eq:residual_def}. Then
    \begin{equation}
        \| \eta(\tau) \|_{\mathcal{H}} \leq \inty{0}{\tau}{ \| r(\tau') \|_{\mathcal{H}} }{\tau'}.
    \end{equation}
\end{lemma}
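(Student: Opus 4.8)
The plan is to use the standard energy-estimate (Duhamel) argument for the inhomogeneous Schr\"odinger equation \eqref{eq:residual_def}, exploiting the self-adjointness of $H$. First I would note that $H$ is a bounded self-adjoint operator on $\mathcal{H}$ (it is a sum of finite-range hopping terms plus the interlayer term, which is bounded because $h$ is summable by Assumption \ref{as:h_regularity}), so $e^{-iH\tau}$ is a well-defined strongly continuous unitary group on $\mathcal{H}$, and the IVP \eqref{eq:residual_def} has a unique mild solution given by Duhamel's formula
\begin{equation}
    \eta(\tau) = -i \inty{0}{\tau}{ e^{-i H (\tau - \tau')} r(\tau') }{\tau'}.
\end{equation}

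Next I would take the $\mathcal{H}$-norm of both sides and move the norm inside the integral using the triangle inequality (Minkowski's integral inequality), then use that $e^{-iH(\tau-\tau')}$ is unitary, hence norm-preserving, to get
\begin{equation}
    \| \eta(\tau) \|_{\mathcal{H}} \leq \inty{0}{\tau}{ \left\| e^{-i H (\tau-\tau')} r(\tau') \right\|_{\mathcal{H}} }{\tau'} = \inty{0}{\tau}{ \| r(\tau') \|_{\mathcal{H}} }{\tau'},
\end{equation}
which is exactly the claimed bound. An essentially equivalent route, which avoids explicitly invoking the group, is to compute $\fdf{\tau} \| \eta(\tau) \|_{\mathcal{H}}^2 = 2 \Re \langle \eta, \de_\tau \eta \rangle_{\mathcal{H}} = 2 \Re \langle \eta, -i H \eta - i r \rangle_{\mathcal{H}}$, observe that $\Re \langle \eta, -iH\eta\rangle_{\mathcal{H}} = 0$ by self-adjointness of $H$, and conclude $\fdf{\tau}\|\eta(\tau)\|_{\mathcal{H}} \leq \|r(\tau)\|_{\mathcal{H}}$ (interpreting this appropriately where $\|\eta\|_{\mathcal{H}}$ vanishes); integrating from $0$ to $\tau$ with $\eta(0)=0$ gives the result.

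There is really no substantive obstacle here — the only points requiring a word of care are (i) confirming $H$ is genuinely bounded and self-adjoint on $\mathcal{H}$ so that the unitary group exists and $\eta$ is well-defined for all $\tau \geq 0$, which follows from \eqref{eq:block_H_diag}--\eqref{eq:block_H_offdiag} together with absolute summability of $h(\cdot;\ell)$ over $\Lambda_i$ (a consequence of Assumption \ref{as:h_regularity}, since an exponentially-bounded Fourier transform forces integrability and in fact the lattice sum converges), and (ii) ensuring $\tau \mapsto r(\tau)$ is continuous (indeed smooth) into $\mathcal{H}$, which holds because $\Psi(\tau)$ depends smoothly on $\tau$ through $f(\vec X, T)$ with $T = \gamma\tau$ and Assumption \ref{as:f_regularity} guarantees the relevant norms are finite. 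Given these, the differentiation-under-the-norm step and the vanishing of the skew-adjoint contribution are routine. I would present the Duhamel version as the cleanest.
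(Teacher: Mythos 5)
Your proposal is correct and is essentially the same argument the paper invokes: the paper's proof is a one-line citation to the standard self-adjointness computation (Hagedorn, Section 3C), which is precisely your second route, and your Duhamel route is the same fact packaged through the unitary group. The only additional remark I would make is that boundedness of $H$ is not actually needed — Stone's theorem gives the unitary group from self-adjointness alone — though your verification that $H$ is bounded on $\mathcal{H}$ is itself correct and harmless.
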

\begin{proof}
    The proof is a standard calculation using self-adjointness of $H$; see, for example, Section 3C of \cite{Hagedorn1994}.
\end{proof}
Lemma \ref{lem:approx_sols} obviously implies that
\begin{equation}
    \| \eta(\tau) \|_\mathcal{H} \rightarrow 0 \quad \text{ as long as } \quad \tau \sup_{\tau' \in [0,\tau]} \| r(\tau') \|_\mathcal{H} \rightarrow 0,
\end{equation}
and hence the time-scale on which the wave-packet {\it ansatz} $\Psi$ \eqref{eq:WP} approximates the solution of \eqref{eq:nondimensional_schro} $\psi$ in the $\mathcal{H}$-norm is controlled by the $\mathcal{H}$-norm of the residual $r$. The key lemma of this work is the following, because it identifies a parameter regime where the Bistritzer-MacDonald model emerges as a necessary condition for smallness of $r$ beyond a certain order, and hence as a necessary condition for convergence of $\Psi$ to $\psi$ over a longer time scale.  
\begin{lemma} \label{lem:residual_lemma}
    Let Assumptions \ref{as:h_regularity} and \ref{as:f_regularity} hold. Assume further that 
\begin{equation} \label{eq:balance}
    \oldhat{h}( |\vec{\kappa}| ; \ell ) = \lambda_0 \gamma, \text{ and } \theta \leq \lambda_1 \gamma,
\end{equation}
    for fixed positive constants $\lambda_0, \lambda_1 > 0$. Introduce the scaled moir\'e potential (recall the definition of the moir\'e potential \eqref{eq:moire_potential}) and scaled Bistritzer-MacDonald Hamiltonian\footnote{We denote by $A^\dagger$ the conjugate transpose of a matrix $A$.}
    \begin{equation} \label{eq:BM_H}
        H_{\text{\emph{BM}}} := \begin{pmatrix} \vec{\sigma} \cdot ( - i \nabla_{\vec{X}} ) & \mathcal{T} (\vec{X}) \\ \mathcal{T}^\dagger(\vec{X}) & \vec{\sigma} \cdot ( - i \nabla_{\vec{X}} ) \end{pmatrix}, \quad \mathcal{T}(\vec{X}) := \lambda_0 T_{\theta} \left( \frac{\vec{X}}{\gamma} \right).
    \end{equation}    
    Then, there exist constants $\gamma_0, C, c > 0$ depending only on $\oldhat{h}$, the monolayer operator \eqref{eq:graphene_block}, and the constants $\lambda_0$ and $\lambda_1$, such that, for all $\gamma \leq \gamma_0$, the residual defined by \eqref{eq:residual_def} \label{eq:residual_decomp} satisfies
    \begin{equation}
        \begin{split}
            &r^\sigma_{\vec{R}_i}(\tau) = \\
            &\gamma \left[ \left. \gamma \left( - \left[ i \de_T - H_{\text{\emph{BM}}} \right] f \right)_i^{\sigma}(\vec{X},T) \right|_{\vec{X} = \gamma(\vec{R}_i + \vec{\tau}_i^\sigma),T = \gamma \tau} e^{i \vec{\kappa}_i \cdot ( \vec{R}_i + \vec{\tau}_i^\sigma )} \right] + \mathfrak{r}^\sigma_{\vec{R}_i}(\tau),
        \end{split}
    \end{equation}
    where
    \begin{equation} \label{eq:residual_bound}
        \| \mathfrak{r}^\sigma_i(\tau) \|_{\mathcal{H}} \leq C \gamma^{1+c} \| f(\cdot,\gamma \tau) \|_{H^8(\mathbb{R}^2;\mathbb{C}^4)}.
    \end{equation}
\end{lemma}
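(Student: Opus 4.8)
Throughout, write $r = -(i\de_\tau - H)\Psi = -i\de_\tau\Psi + H\Psi$ and split $H$ as in \eqref{eq:block_H} into its intralayer (diagonal) and interlayer (off-diagonal) parts, so that for each layer $i$ and sublattice $\sigma$,
$r^\sigma_{\vec R_i} = (-i\de_\tau\Psi_i)^\sigma_{\vec R_i} + (H_{ii}\Psi_i)^\sigma_{\vec R_i} + (H_{ij}\Psi_j)^\sigma_{\vec R_i}$ with $j\neq i$. The plan is to show that $-i\de_\tau\Psi$ contributes exactly the ``$i\de_T f$'' piece of the claimed decomposition, that $H_{ii}\Psi_i$ contributes $\gamma^2$ times the Dirac part of $H_{\text{BM}}f$ up to an $O(\gamma^{1+c})$ remainder, and that $H_{ij}\Psi_j$ contributes $\gamma^2$ times the moir\'e-potential part $\mathcal T f$ up to an $O(\gamma^{1+c})$ remainder, all remainders being collected into $\mathfrak r$.

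The time-derivative term is exact: since $\Psi$ in \eqref{eq:WP} depends on $\tau$ only through $T=\gamma\tau$, one has $(-i\de_\tau\Psi_i)^\sigma_{\vec R_i} = -\gamma^2\left.(i\de_T f_i^\sigma)(\vec X,T)\right|_{\vec X=\gamma(\vec R_i+\vec\tau_i^\sigma),\,T=\gamma\tau}e^{i\vec\kappa_i\cdot(\vec R_i+\vec\tau_i^\sigma)}$, which is the $-\gamma^2 i\de_T f$ contribution with no error. For the intralayer term I would factor $e^{i\vec\kappa_i\cdot(\vec R_i+\vec\tau_i^\sigma)}$ out of \eqref{eq:block_H_diag} and Taylor-expand the slowly varying envelope in the nearest-neighbor shifts $\gamma\vec a_{i,j}$; the zeroth-order term vanishes because $F(\vec\kappa_i)=0$ at the Dirac point (cf.\ \eqref{eq:Taylor_f}), the first-order term produces $\gamma^2$ times the twisted Dirac operator $\vec\sigma_{\theta_i/2}\cdot(-i\nabla_{\vec X})f_i$ of \eqref{eq:monolayer_Dirac_twisted}, and the second-order Taylor remainder is $O(\gamma^3)$. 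To turn the resulting discrete $\ell^2$-sums into the continuum Sobolev norms of \eqref{eq:residual_bound} I would pass to the Bloch domain, where the intralayer block acts by $H_i(\vec k_i)$ and $\tilde\Psi_i$ is, up to exponentially small Poisson-summation tails, a rescaled copy $\sim \gamma^{-1}\hat f_i((\vec k_i-\vec\kappa_i)/\gamma)$; then the Taylor remainder is bounded by $\gamma^2\|f_i\|_{H^2}$ and unitarity of the Bloch transform transfers this to $\ell^2$. Finally $\vec\sigma_{\theta_i/2}-\vec\sigma = O(\theta)$ and $\theta\le\lambda_1\gamma$ by \eqref{eq:balance}, so replacing the twisted Dirac operator by the untwisted one in \eqref{eq:BM_H} costs $O(\lambda_1\gamma^2\|f_i\|_{H^1})$. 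This is precisely the content of the auxiliary Lemma \ref{lem:diag_lemma}.

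The main work, and the main obstacle, is the interlayer term, carried out in Lemma \ref{lem:off_diag_lemma}. Applying Poisson summation over $\Lambda_2$ to the interlayer sum \eqref{eq:block_H_offdiag}—exactly as in the derivation of \eqref{eq:off_diagonal_Bloch}—one obtains a sum over $\vec G_1\in\Lambda_1^*$ of terms in which $\oldhat h$ is evaluated at $\vec k_1+\vec G_1$. The key geometric fact is that, for $\vec k_1$ near $\vec\kappa_1$, the minimum of $|\vec k_1+\vec G_1|$ over $\vec G_1\in\Lambda_1^*$ equals $|\vec\kappa|$ and is attained at exactly three reciprocal-lattice vectors, producing the three momentum shifts $\mathfrak s_1,\mathfrak s_2,\mathfrak s_3$ of \eqref{eq:momentum_space_hops}, the hopping matrices \eqref{eq:hopping_matrices}, and hence $\oldhat h(|\vec\kappa|;\ell)\,T_\theta = \lambda_0\gamma\, T_\theta$; scaling by the amplitude $\gamma$ and by the factor $\gamma$ hidden in $T_\theta(\vec r)=T_\theta(\vec X/\gamma)$ gives exactly $\gamma^2\,\mathcal T(\vec X)f_2$. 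Everything else is pushed into $\mathfrak r$: (i) the next-nearest reciprocal-lattice vectors satisfy $|\vec k_1+\vec G_1|\ge 2|\vec\kappa|-O(\theta)$, so by the upper bound in \eqref{eq:D_def} those terms are $\lesssim e^{-2D_2\ell|\vec\kappa|}$, and combining with the lower bound in \eqref{eq:D_def} and the crucial inequality \eqref{eq:D1D2} gives $e^{-2D_2\ell|\vec\kappa|}\lesssim (\lambda_0\gamma)^{2D_2/D_1}$, i.e.\ an error $O(\gamma^{1+c})$ with $c := 2D_2/D_1-1>0$; (ii) replacing $\oldhat h$ by its central value $\lambda_0\gamma$ (the ``local approximation'') costs, by the Lipschitz bound \eqref{eq:Lipschitz}, only $O(\gamma^{1+D_2/D_1})$; (iii) the interlayer envelope Taylor remainders and the $\vec K\!\leftrightarrow\!\vec K'$ inter-valley terms are lower order, the latter because $2\vec\kappa$ is bounded away from $\Lambda_1^*$, forcing either $\oldhat h$ to be evaluated at an argument $\gtrsim 2|\vec\kappa|$ or $\tilde\Psi_2$ to be evaluated a fixed distance from its center; and (iv) the nonzero Poisson modes involve $\hat f$ evaluated at arguments $\sim\gamma^{-1}$ from the origin, which is where the eighth Sobolev bound \eqref{eq:f_regularity} is spent.

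Assembling the three pieces, one reads off the stated decomposition with $\mathfrak r = \mathfrak r^{\text{diag}} + \mathfrak r^{\text{off-diag}}$; taking $c = \min\{1,\,2D_2/D_1-1\}$ (so $c>0$ by \eqref{eq:D1D2}), the largest of the constants produced above for $C$, and $\gamma_0$ small enough that all the ``$O(\cdot)$'' estimates are in force, gives \eqref{eq:residual_bound}. I expect the delicate points to be entirely in the interlayer estimate: cleanly isolating the three dominant momentum-space hops, making ``dominant'' quantitative via Assumption \ref{as:h_regularity} (this is exactly where $1\le D_1/D_2<2$ is indispensable and where the value of $c$ is pinned down), justifying the local approximation through \eqref{eq:Lipschitz}, and absorbing the Poisson error terms using the high Sobolev regularity of the envelope; the intralayer and time-derivative parts are routine by comparison.
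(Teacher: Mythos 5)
Your proposal is correct and follows essentially the same route as the paper: the same splitting of $r$ into the exact time-derivative piece, the intralayer Taylor expansion at the Dirac point together with the $O(\theta\gamma)$ cost of untwisting the Pauli matrices under $\theta \leq \lambda_1\gamma$ (this is Lemmas \ref{lem:diag_lemma} and \ref{lem:diag_theta_lemma}), and the interlayer analysis isolating the three nearest momentum hops, with the ``local'' approximation controlled by the Lipschitz bound \eqref{eq:Lipschitz} and the farther hops controlled by \eqref{eq:D_def} and \eqref{eq:D1D2}, yielding a positive gain $c$ exactly as in Lemma \ref{lem:off_diag_lemma} and Section \ref{sec:parameter_regime}. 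The only caveat is that, under Assumption \ref{as:f_regularity}, the nonzero Poisson-summation (Umklapp) contributions to $\tilde{\Psi}_i$ are only polynomially small rather than ``exponentially small'' as you say at one point; but since you later spend the $H^8$ regularity precisely on these terms (as the paper does via its half-plane splitting estimates), this is a slip of wording rather than a gap in the argument.
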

\begin{remark} \label{rem:scaling}
    We remark on the origin of the scalings \eqref{eq:balance}. The scaling of $\ell$ with respect to $\gamma$ comes from balancing the leading terms of \eqref{eq:diag_decomposition} and \eqref{eq:off_diag_decomp}. This scaling forces logarithmic growth of $\ell$ with respect to $\gamma$ \eqref{eq:log_growth}, which leads to rapid decay of the interlayer hopping terms \eqref{eq:off_diag_1}, and finally a Bistritzer-MacDonald model \eqref{eq:BM_H} with only \emph{nearest-neighbor} hopping in momentum space. The scaling of $\theta$ with respect to $\gamma$ is essential so that the regularity of solutions of the BM model is preserved as $\gamma \rightarrow 0$ (see Lemma \ref{lem:BM_sols}). Informally, this requires that the scaled moir\'e potential $\mathcal{T}$ appearing in \eqref{eq:BM_H} varies on the scale $1$ (as opposed to, for example, the scale $\gamma$). To see how the scaling of $\theta$ with respect to $\gamma$ enforces this, note that for small $\theta$ we have $|\vec{s}_n| \approx \theta$, so that we can consider $T_\theta(\vec{r})$ as a function of $\theta \vec{r}$, so that $T_\theta\left(\frac{\vec{X}}{\gamma}\right)$ can be considered a function of $\frac{ \theta \vec{X} }{ \gamma }$. It is then clear that the condition for $\mathcal{T}$ to vary at least over the scale $1$ is precisely boundedness of the ratio $\frac{\theta}{\gamma}$.
\end{remark}
\begin{remark}
    The scaling \eqref{eq:balance} can be stated simply (but a little imprecisely) in terms of length-scales. Approximating $\oldhat{h}(|\vec{\kappa}|;\ell) \approx e^{-\ell |\vec{\kappa}|}$ (note that this is essentially what happens for each of the Examples \ref{ex:example_1}-\ref{ex:example_3}), the scaling can be written as 
    \begin{equation}
        \frac{1}{\gamma} \lesssim \frac{1}{\theta}, \quad \ell \sim \ln \frac{1}{\gamma},
    \end{equation}
    where $\frac{1}{\gamma}$ denotes the lengthscale of variation of the wave-packet envelope, and $\frac{1}{\theta}$ denotes the lengthscale of the moir\'e pattern.
\end{remark}
The key point of Lemma \ref{lem:residual_lemma} is that, formally at least, \emph{the $\mathcal{H}$-norm of the first term of \eqref{eq:residual_decomp} is proportional to $\gamma$, while the $\mathcal{H}$-norm of the second is $o(\gamma)$ as $\gamma \rightarrow 0$.} In particular, the residual is $o(\gamma)$ as $\gamma \rightarrow 0$ in the regime \eqref{eq:balance} \emph{if and only if the wave-packet envelopes evolve according to the Bistritzer-MacDonald model.} This is summed up by the following corollary.
\begin{corollary}
    Under the hypotheses of Lemma \ref{lem:residual_lemma}, if the wave-packet envelope functions $f$ appearing in \eqref{eq:WP} evolve according to the Bistritzer-MacDonald model
\begin{equation} \label{eq:final_effective_model}
    i \de_T f = H_{\text{\emph{BM}}} f, \quad f(0) = f_0,
\end{equation}
    where $H_{\text{\emph{BM}}}$ is as in \eqref{eq:BM_H}, then, there exist constants $\gamma_0, C, c > 0$ depending only on $\oldhat{h}$, the monolayer operator \eqref{eq:graphene_block}, and the constants $\lambda_0$ and $\lambda_1$, such that, for all $\gamma \leq \gamma_0$, the residual appearing in \eqref{eq:residual_def} satisfies
    \begin{equation} \label{eq:simplified_residual_bound}
        \| r(\tau) \|_{\mathcal{H}} \leq C \gamma^{1 + c} \| f(\cdot,\gamma \tau) \|_{H^8(\mathbb{R}^2;\mathbb{C}^4)}.
    \end{equation}
\end{corollary}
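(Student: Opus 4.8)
The statement is an immediate consequence of Lemma \ref{lem:residual_lemma}, so the plan is simply to substitute the hypothesis into the residual decomposition that lemma provides. Recall that, under the scaling hypothesis \eqref{eq:balance} and for $\gamma \le \gamma_0$, Lemma \ref{lem:residual_lemma} gives the exact splitting \eqref{eq:residual_decomp} of the residual into a ``transport'' term proportional to $\gamma \cdot \gamma\left(-[i\de_T - H_{\text{BM}}]f\right)_i^\sigma$, evaluated along $\vec{X} = \gamma(\vec{R}_i + \vec{\tau}_i^\sigma)$, $T = \gamma\tau$, and multiplied by the Bloch phase $e^{i\vec{\kappa}_i\cdot(\vec{R}_i + \vec{\tau}_i^\sigma)}$, plus a remainder $\mathfrak{r}$ obeying \eqref{eq:residual_bound}.

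First I would invoke the hypothesis: $f$ solves the initial value problem \eqref{eq:final_effective_model}, so $\left(i\de_T - H_{\text{BM}}\right)f(\vec{X}, T) = 0$ identically in $(\vec{X}, T)$ --- here one may take for granted that \eqref{eq:final_effective_model} is globally well-posed, since $H_{\text{BM}}$ is a Dirac operator perturbed by the bounded Hermitian moir\'e potential and hence self-adjoint. In particular the evaluation occurring in the first term of \eqref{eq:residual_decomp} vanishes, and therefore $r^\sigma_{\vec{R}_i}(\tau) = \mathfrak{r}^\sigma_{\vec{R}_i}(\tau)$ for every $i \in \{1,2\}$, $\sigma \in \{A,B\}$, $\vec{R}_i \in \Lambda_i$. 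Second I would assemble the $\mathcal{H}$-norm: by the definition \eqref{eq:H_norm}, $\|r(\tau)\|_{\mathcal{H}}^2$ is the sum of the four component $\ell^2$-norms $\|\mathfrak{r}^\sigma_i(\tau)\|^2$, each bounded by \eqref{eq:residual_bound}; summing and taking the square root absorbs a harmless numerical factor into the constant and yields \eqref{eq:simplified_residual_bound}, with the same $c$ and with $\gamma_0, C, c$ depending only on the data listed in Lemma \ref{lem:residual_lemma}.

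There is no genuine obstacle at this step: it is a one-line substitution. All the difficulty is concentrated in Lemma \ref{lem:residual_lemma} itself --- establishing \eqref{eq:residual_decomp} requires the full multiple-scales expansion of $(i\de_\tau - H)\Psi$, the logarithmic scaling $\ell \sim \ln(1/\gamma)$ forced by \eqref{eq:balance} (which, via the exponential bounds \eqref{eq:D_def} and the gap condition \eqref{eq:D1D2}, suppresses every momentum-space hop other than the three nearest-neighbor hops $\vec{s}_1, \vec{s}_2, \vec{s}_3$), the ``local'' approximation replacing the wavenumber-dependent $\oldhat{h}$ by the constant $\oldhat{h}(|\vec{\kappa}|;\ell) = \lambda_0\gamma$ (controlled by the Lipschitz estimate \eqref{eq:Lipschitz}), and careful bookkeeping of the Taylor remainders of the monolayer Bloch Hamiltonian near the Dirac points and of the slowly-varying envelope --- the last of which is where the eighth-order Sobolev regularity of $f_0$ from Assumption \ref{as:f_regularity} is consumed.
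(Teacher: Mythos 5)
Your proposal is correct and is precisely the argument the paper intends (the corollary is stated as an immediate consequence of Lemma \ref{lem:residual_lemma}): since $f$ solves \eqref{eq:final_effective_model}, the first term in the decomposition \eqref{eq:residual_decomp} vanishes identically, leaving $r = \mathfrak{r}$, which is bounded by \eqref{eq:residual_bound}. The only point worth noting is that well-posedness and preservation of regularity of the BM evolution, which you correctly flag as needed to make sense of the right-hand side of \eqref{eq:simplified_residual_bound}, is exactly what the paper supplies separately in Lemma \ref{lem:BM_sols}.
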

Note that starting from the scaled BM model \eqref{eq:BM_H} with~\eqref{eq:final_effective_model}, it is straightforward to derive the more familiar form of the BM model \eqref{eq:BM_model_0}, as follows. First, change variables from $\vec{X}$ and $T$ back to the dimensionless, unscaled, position and time variables (recall \eqref{eq:tauprimerprime}) $\vec{r}$ and $\tau$ as
\begin{equation} \label{eq:reverse_change_of_variables}
    \vec{r} = \frac{\vec{X}}{\gamma}, \quad \tau = \frac{T}{\gamma}.
\end{equation}
Then, multiply both sides of the equation by $\gamma$, and use $\oldhat{h}(|\vec{\kappa}|;\ell) = \lambda_0 \gamma$ \eqref{eq:balance}.
\begin{remark} \label{rem:worst_case_intro}
    Note that estimates \eqref{eq:residual_bound} and \eqref{eq:simplified_residual_bound} are worst-case estimates. When $h$ is any of Examples \ref{ex:example_1}-\ref{ex:example_3}, estimates where the power $1 + c$ is replaced by $2 - c'$ for \emph{any} $c' > 0$ hold; see Remark \ref{rem:worst_case}.   
\end{remark}

The argument so far is purely formal since we have not discussed the existence and uniqueness theory for \eqref{eq:final_effective_model}, nor whether the Sobolev norms of solutions can be controlled. These issues are addressed by the following lemma.
\begin{lemma} \label{lem:BM_sols}
    Consider the general initial value problem for the BM PDE \eqref{eq:final_effective_model}, i.e., 
    \begin{equation} \label{eq:IVP}
        i \de_T f = H_{\text{\emph{BM}}} f, \quad f(0) = f_0,
    \end{equation}
    where $f_0 \in H^1(\mathbb{R}^2;\mathbb{C}^4)$. Then the following facts hold
    \begin{itemize}
        \item The IVP \eqref{eq:IVP} has a unique solution $f$ satisfying 
            \begin{equation} \label{eq:unitary_propagator}
                \| f(\cdot,T) \|_{L^2(\mathbb{R}^2;\mathbb{C}^4)} = \| f_0 \|_{L^2(\mathbb{R}^2;\mathbb{C}^4)}, \quad \forall T \geq 0.
            \end{equation}
        \item The IVP \eqref{eq:IVP} preserves regularity of $f_0$ in the sense that there exist constants $C_s > 0$ depending on $s, \lambda_0$, and $\lambda_1$, such that
            \begin{equation} \label{eq:preservation_of_regularity}
                \| f(\cdot,T) \|_{H^s(\mathbb{R}^2;\mathbb{C}^4)} \leq C_s \| f_0 \|_{H^s(\mathbb{R}^2;\mathbb{C}^4)}, \quad \forall T > 0
            \end{equation}
    for each positive integer $s$.
    \end{itemize}
\end{lemma}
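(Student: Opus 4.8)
The plan is to realize $H_{\text{BM}}$ as a self-adjoint operator generating a unitary group, which immediately gives the first bullet, and then to obtain the Sobolev bounds from elliptic regularity combined with the conservation of $\|H_{\text{BM}}^{\,j} f(\cdot,T)\|_{L^2}$ along the flow.

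\emph{Self-adjointness and the first bullet.} Write $H_{\text{BM}} = D + V$, where $D := \diag\big(\vec{\sigma}\cdot(-i\nabla_{\vec{X}}),\, \vec{\sigma}\cdot(-i\nabla_{\vec{X}})\big)$ and $V := \begin{pmatrix} 0 & \mathcal{T}(\vec{X}) \\ \mathcal{T}^\dagger(\vec{X}) & 0 \end{pmatrix}$. The operator $D$ is a Fourier multiplier with Hermitian symbol $\diag(\vec{\sigma}\cdot\vec{\xi},\vec{\sigma}\cdot\vec{\xi})$ growing linearly in $|\vec{\xi}|$, hence is self-adjoint on $\mathrm{dom}(D) = H^1(\mathbb{R}^2;\mathbb{C}^4)$. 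Since $\mathcal{T}(\vec{X}) = \lambda_0 T_\theta(\vec{X}/\gamma)$ with $T_\theta$ given by \eqref{eq:moire_potential}, $V$ is a bounded, Hermitian-matrix-valued multiplication operator on $L^2$, with $\|V\| \leq \|\mathcal{T}\|_{L^\infty} \lesssim \lambda_0$. By the Kato--Rellich theorem $H_{\text{BM}}$ is self-adjoint on $H^1(\mathbb{R}^2;\mathbb{C}^4)$, and by Stone's theorem it generates a strongly continuous unitary group $\{e^{-iT H_{\text{BM}}}\}_{T \geq 0}$. Setting $f(\cdot,T) := e^{-iTH_{\text{BM}}} f_0$ then furnishes the unique solution of \eqref{eq:IVP}, lying in $C^1([0,\infty);L^2(\mathbb{R}^2;\mathbb{C}^4)) \cap C^0([0,\infty);H^1(\mathbb{R}^2;\mathbb{C}^4))$, and \eqref{eq:unitary_propagator} is the unitarity of $e^{-iTH_{\text{BM}}}$.

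\emph{$\gamma$-uniform elliptic estimates.} The principal symbol of $H_{\text{BM}}$ is $\diag(\vec{\sigma}\cdot\vec{\xi},\vec{\sigma}\cdot\vec{\xi})$, which is invertible for $\vec{\xi}\neq\vec{0}$ with inverse of norm $|\vec{\xi}|^{-1}$; thus $H_{\text{BM}}$ is a first-order elliptic system with ellipticity constant independent of $\gamma$. Crucially, its coefficients are also bounded in every $C^k$ norm uniformly in $\gamma$: from the explicit form of $\mathcal{T}$ one gets, for each multi-index $\alpha$, an absolute constant $c_\alpha$ with $\|\partial^\alpha_{\vec{X}}\mathcal{T}\|_{L^\infty} \leq c_\alpha\, \lambda_0 \lambda_1^{|\alpha|}$ for all $\gamma \leq \gamma_0$ (the case $|\alpha|=1$ is \eqref{eq:moire_potential_theta_bound} together with $\theta \leq \lambda_1\gamma$, and the higher-order cases follow from the same computation since $|\vec{s}_j| \lesssim \theta$; this is the quantitative content of Remark \ref{rem:scaling}). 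Standard elliptic regularity therefore yields, for each positive integer $s$, a constant $C_s$ depending only on $s,\lambda_0,\lambda_1$ such that
\begin{equation} \label{eq:graph_norm_equiv}
    C_s^{-1}\|g\|_{H^s(\mathbb{R}^2;\mathbb{C}^4)}^2 \leq \sum_{j=0}^s \|H_{\text{BM}}^{\,j} g\|_{L^2(\mathbb{R}^2;\mathbb{C}^4)}^2 \leq C_s\|g\|_{H^s(\mathbb{R}^2;\mathbb{C}^4)}^2, \quad \forall g \in H^s(\mathbb{R}^2;\mathbb{C}^4),
\end{equation}
and in particular $\mathrm{dom}(H_{\text{BM}}^{\,s}) = H^s(\mathbb{R}^2;\mathbb{C}^4)$.

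\emph{Conclusion and main obstacle.} For $f_0 \in H^s(\mathbb{R}^2;\mathbb{C}^4)$, functional calculus gives $e^{-iTH_{\text{BM}}} f_0 \in \mathrm{dom}(H_{\text{BM}}^{\,s})$ and $H_{\text{BM}}^{\,j} e^{-iTH_{\text{BM}}} f_0 = e^{-iTH_{\text{BM}}} H_{\text{BM}}^{\,j} f_0$ for $0\leq j\leq s$; since $e^{-iTH_{\text{BM}}}$ is unitary, $\|H_{\text{BM}}^{\,j} f(\cdot,T)\|_{L^2} = \|H_{\text{BM}}^{\,j} f_0\|_{L^2}$ for all $T\geq 0$. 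Applying \eqref{eq:graph_norm_equiv} first to $g = f(\cdot,T)$ and then to $g = f_0$ gives $\|f(\cdot,T)\|_{H^s}^2 \leq C_s^2 \|f_0\|_{H^s}^2$, which is \eqref{eq:preservation_of_regularity} after relabeling. The only delicate point — and the reason this lemma is not entirely routine — is the $\gamma$-independence of the constants $C_s$ in \eqref{eq:graph_norm_equiv}: this is precisely where the scaling $\theta \leq \lambda_1 \gamma$ enters, since it is what keeps $\mathcal{T}$ and all of its derivatives bounded on the $O(1)$ scale as $\gamma\to 0$; without it the coefficients of $H_{\text{BM}}$ would oscillate on the vanishing scale $\gamma/\theta$ and the elliptic (hence Sobolev) constants would blow up. As an alternative to characterizing $\mathrm{dom}(H_{\text{BM}}^{\,s})$, one may instead differentiate $\|\langle\nabla\rangle^s f\|_{L^2}^2$ in $T$, observe that the constant-coefficient part $D$ drops out by self-adjointness and that $V$ contributes only through the commutator $[\langle\nabla\rangle^s,V]$, bound that commutator by $C(s,\lambda_0,\lambda_1)\|f\|_{H^{s-1}}$, and close the estimate by induction on $s$ and Gr\"onwall; this yields \eqref{eq:preservation_of_regularity} up to a harmless polynomial-in-$T$ factor, which would also suffice for the applications in Lemma \ref{lem:residual_lemma} and Theorem \ref{thm:final_result}.
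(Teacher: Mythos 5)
Your proof is correct and takes essentially the same approach as the paper: self-adjointness via Kato--Rellich plus Stone's theorem for the first bullet, then a $\gamma$-uniform elliptic/graph-norm equivalence (hinging on $\theta \lesssim \gamma$, i.e.\ Lemma \ref{lem:scaled_moire}) combined with unitarity and commutation with functions of $H_{\text{BM}}$ for the second. The only cosmetic difference is that you express the $H^s$-equivalent norm as $\sum_{j=0}^s \|H_{\text{BM}}^{\,j} g\|_{L^2}^2$ whereas the paper uses $\|(H_{\text{BM}}^2+\mu)^{s/2} g\|_{L^2}$ for a suitably chosen $\mu>0$; both encode the same elliptic regularity input and yield identical conclusions.
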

\begin{proof}
    See Appendix \ref{sec:sobolev_bounds}.
\end{proof}
We emphasize that equation \eqref{eq:preservation_of_regularity} of Lemma \ref{lem:BM_sols} relies on the following fact. Indeed, it is the reason why $\theta$ must be scaled with $\gamma$ in \eqref{eq:balance} for our results to hold.
\begin{lemma} \label{lem:scaled_moire}
    In the parameter regime \eqref{eq:balance}, the derivatives of the scaled moir\'e potential \eqref{eq:BM_H} can be bounded independently of $\gamma$ and $\theta$
    \begin{equation} \label{eq:T_theta_bound}
        \sup_{1 \leq a \leq 2} \left| \pdf{X_a} \mathcal{T}^{\sigma \sigma'}(\vec{X}) \right| \leq C_T \lambda_1, \quad \forall \vec{X} \in \mathbb{R}^2, \sigma, \sigma' \in \{A,B\}.
    \end{equation}
\end{lemma}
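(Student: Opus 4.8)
The plan is to obtain the bound by differentiating $\mathcal{T}$ directly and reducing to the already-established pointwise estimate \eqref{eq:moire_potential_theta_bound} on the (unscaled) moir\'e potential, combined with the one-sided scaling hypothesis $\theta \le \lambda_1 \gamma$ from \eqref{eq:balance}. Concretely, first I would write $\mathcal{T}^{\sigma\sigma'}(\vec{X}) = \lambda_0\, T_\theta^{\sigma\sigma'}(\vec{X}/\gamma)$ and apply the chain rule in each coordinate $a \in \{1,2\}$, which produces a factor $1/\gamma$: $\partial_{X_a}\mathcal{T}^{\sigma\sigma'}(\vec{X}) = (\lambda_0/\gamma)\,(\partial_{r_a} T_\theta^{\sigma\sigma'})(\vec{X}/\gamma)$. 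Since \eqref{eq:moire_potential_theta_bound} holds uniformly in the spatial argument, taking moduli gives $|\partial_{X_a}\mathcal{T}^{\sigma\sigma'}(\vec{X})| \le (\lambda_0/\gamma)\,C_T\,\theta$ for all $\vec{X}\in\mathbb{R}^2$, and then $\theta/\gamma \le \lambda_1$ cancels the $\gamma$- and $\theta$-dependence, leaving $|\partial_{X_a}\mathcal{T}^{\sigma\sigma'}(\vec{X})| \le \lambda_0 C_T \lambda_1$. Absorbing the fixed constant $\lambda_0$ (together with the old $C_T$) into the constant denoted $C_T$ in the statement yields the claim.

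For completeness, and to exhibit the constant explicitly rather than citing \eqref{eq:moire_potential_theta_bound} as a black box, I would re-derive the needed estimate on $\partial_{r_a} T_\theta$ from the definition \eqref{eq:moire_potential}. Differentiating the finite sum, $\partial_{r_a} T_\theta^{\sigma\sigma'}(\vec{r}) = -i\sum_{j=1}^3 (\vec{s}_j)_a\, T_j^{\sigma\sigma'} e^{-i\vec{s}_j\cdot\vec{r}}$; using that every entry of the hopping matrices \eqref{eq:hopping_matrices_full} has modulus one, the triangle inequality gives $|\partial_{r_a} T_\theta^{\sigma\sigma'}(\vec{r})| \le \sum_{j=1}^3 |\vec{s}_j|$, and since each $|\vec{s}_j|$ equals the inter-Dirac-point distance $2|\vec{\kappa}|\sin(\theta/2) \le |\vec{\kappa}|\theta$ by \eqref{eq:Delta_K}, we get $|\partial_{r_a} T_\theta^{\sigma\sigma'}(\vec{r})| \le 3|\vec{\kappa}|\,\theta$. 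This both establishes \eqref{eq:moire_potential_theta_bound} with $C_T = 3|\vec{\kappa}|$ and feeds directly into the scaling argument of the previous paragraph.

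There is no substantive obstacle in this lemma: the whole content is that the two occurrences of $\gamma$ — one from rescaling space by $1/\gamma$ in the definition of $\mathcal{T}$, and one hidden in the smallness $|\vec{s}_j| = O(\theta) = O(\gamma)$ of the moir\'e momentum shifts — are arranged by the scaling choice \eqref{eq:balance} to cancel exactly, as anticipated in Remark \ref{rem:scaling}. The only points needing care are keeping track of the $1/\gamma$ factor from the chain rule, and using the inequality $\theta \le \lambda_1\gamma$ (not an equality) in the correct direction.
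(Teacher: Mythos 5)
Your proof is correct and takes the same approach as the paper, which simply cites \eqref{eq:moire_potential_theta_bound} together with the chain-rule factor $1/\gamma$ and the hypothesis $\theta \le \lambda_1\gamma$; the only cosmetic discrepancy is the factor $\lambda_0$, which (as you note) is legitimately absorbed into the constant since constants in this part of the paper are explicitly allowed to depend on $\lambda_0$ and $\lambda_1$. Your optional re-derivation of \eqref{eq:moire_potential_theta_bound} with the explicit constant $C_T = 3|\vec{\kappa}|$ is also correct.
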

\begin{proof}
    The result follows immediately from \eqref{eq:moire_potential_theta_bound}.
\end{proof}
We can now state and prove the main theorem of this work.
\begin{theorem} \label{thm:final_result}
    Under the hypotheses of Lemma \ref{lem:residual_lemma}, the solution $\psi$ of \eqref{eq:nondimensional_schro}, where $\psi_0$ is as in \eqref{eq:WP_0}, satisfies
    \begin{equation} \label{eq:ansatz_result}
        \begin{split}
            &\psi = \begin{pmatrix} \psi_1(\tau) \\ \psi_2(\tau) \end{pmatrix}   \\
            &( \psi_i )^\sigma_{\vec{R}_i}(\tau) = \gamma \left. f^\sigma_i( \vec{X}, T )\right|_{\vec{X} = \gamma (\vec{R}_i + \vec{\tau}^\sigma_i), T = \gamma \tau} e^{i \vec{\kappa}_i \cdot (\vec{R}_i + \vec{\tau}^\sigma_i)} + \eta_{\vec{R}_i}^\sigma(\tau),         
        \end{split}
    \end{equation}
where $i \in \{1,2\}, \sigma \in \{A,B\}$, and where the envelopes $f$ evolve according to the BM model \eqref{eq:final_effective_model}, and, there exist constants $C, \gamma_0 > 0$ such that, for all $\gamma < \gamma_0$ and $c > 0$ as in Lemma \ref{lem:residual_lemma},
     \begin{equation} \label{eq:final_result_-1}
        \| \eta(\tau) \|_{\mathcal{H}} \leq C \gamma^{1 + c} \tau.
    \end{equation}
    It follows that, for any fixed $0 \leq \nu < c$ and $C_1 > 0$, the corrector also satisfies 
    \begin{equation} \label{eq:final_result}
        \lim_{\gamma \downarrow 0} \sup_{\tau \in \left[0,C_1 \gamma^{- (1 + \nu)}\right]} \| \eta(\tau) \|_{\mathcal{H}} = 0.
    \end{equation}
\end{theorem}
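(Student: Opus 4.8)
The plan is to assemble the theorem directly from the preceding results; essentially all of the analytical work has already been done in Lemma~\ref{lem:residual_lemma} (proved in Section~\ref{sec:proof_key_lemma}) and Lemma~\ref{lem:BM_sols}, so what remains is a composition argument that converts the residual estimate into a corrector estimate via Lemma~\ref{lem:approx_sols}.

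First I would let $f = (f_1,f_2)^\top$ be the unique solution of the BM initial value problem \eqref{eq:final_effective_model} with initial data $f_0 = (f_{1,0},f_{2,0})^\top$, whose existence and uniqueness is furnished by Lemma~\ref{lem:BM_sols}. Assumption~\ref{as:f_regularity} gives $f_0 \in H^8(\mathbb{R}^2;\mathbb{C}^4)$ with $\|f_0\|_{H^8(\mathbb{R}^2;\mathbb{C}^4)} \leq 2 C_{f_0}$, so in particular $f_0 \in H^1$ as Lemma~\ref{lem:BM_sols} requires. The crucial input is estimate \eqref{eq:preservation_of_regularity} of that lemma with $s = 8$, which yields a bound $\|f(\cdot,T)\|_{H^8(\mathbb{R}^2;\mathbb{C}^4)} \leq C_8 \|f_0\|_{H^8(\mathbb{R}^2;\mathbb{C}^4)} =: M$ that is \emph{uniform in} $T \geq 0$. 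With this $f$, form the wave-packet ansatz $\Psi$ as in \eqref{eq:WP}, and set $\eta := \psi - \Psi$, so that by construction $\eta$ solves the corrector IVP \eqref{eq:residual_def} with residual $r = -(i\de_\tau - H)\Psi$.

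Next I would invoke the Corollary to Lemma~\ref{lem:residual_lemma}: since $f$ evolves according to the BM model, the first (leading) term in the residual decomposition \eqref{eq:residual_decomp} vanishes identically, leaving $\|r(\tau)\|_{\mathcal{H}} \leq C \gamma^{1+c} \|f(\cdot,\gamma\tau)\|_{H^8(\mathbb{R}^2;\mathbb{C}^4)}$ for all $\gamma \leq \gamma_0$. Substituting the uniform bound $\|f(\cdot,\gamma\tau)\|_{H^8(\mathbb{R}^2;\mathbb{C}^4)} \leq M$ gives $\|r(\tau)\|_{\mathcal{H}} \leq C M \gamma^{1+c}$ for every $\tau \geq 0$. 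Feeding this into Lemma~\ref{lem:approx_sols} (which uses self-adjointness of $H$),
\begin{equation*}
    \|\eta(\tau)\|_{\mathcal{H}} \leq \int_0^\tau \|r(\tau')\|_{\mathcal{H}} \, \dee\tau' \leq C M \gamma^{1+c}\, \tau ,
\end{equation*}
which is \eqref{eq:final_result_-1} after absorbing constants into a single $C$. Finally, for $\tau \in \left[0, C_1 \gamma^{-(1+\nu)}\right]$ with $0 \leq \nu < c$ this reads $\|\eta(\tau)\|_{\mathcal{H}} \leq C M C_1\, \gamma^{\,c - \nu}$, and since $c - \nu > 0$ the right-hand side is independent of $\tau$ and tends to $0$ as $\gamma \downarrow 0$; taking the supremum over the interval and then the limit establishes \eqref{eq:final_result}.

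The only genuine subtlety — and the reason this is not completely automatic — is that the residual estimate must be uniform in time for the integral in Lemma~\ref{lem:approx_sols} to produce no worse than linear growth in $\tau$; this is precisely what the time-independent constant $C_s$ in \eqref{eq:preservation_of_regularity} delivers, and, as noted in Remark~\ref{rem:scaling}, it is exactly what forces the scaling $\theta \lesssim \gamma$ via Lemma~\ref{lem:scaled_moire} (without it the moir\'e potential's derivatives, hence the $H^8$ norm of $f$, would not be controllable uniformly as $\gamma \downarrow 0$). Beyond securing this uniform-in-time control, the argument is a direct chaining of the cited lemmas, so I would not expect any further obstacle.
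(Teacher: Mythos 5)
Your proposal is correct and follows exactly the paper's route: the paper's proof of Theorem \ref{thm:final_result} is precisely the one-line composition of Lemmas \ref{lem:approx_sols}, \ref{lem:residual_lemma} (via its Corollary), and \ref{lem:BM_sols}, with the uniform-in-time $H^8$ bound \eqref{eq:preservation_of_regularity} supplying the time-independent residual estimate and the choice $\nu < c$ giving \eqref{eq:final_result}. Your expanded write-up, including the remark on why the $\theta \lesssim \gamma$ scaling is what makes the constants $\gamma$-independent, is a faithful elaboration of the same argument.
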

\begin{proof}
    Estimate \eqref{eq:final_result_-1} follows immediately from combining Lemmas \ref{lem:approx_sols}, \ref{lem:residual_lemma}, and \ref{lem:BM_sols}, and \eqref{eq:final_result} follows immediately.
\end{proof}
\begin{remark} \label{rem:Feff_Wein}
Fefferman-Weinstein \cite{fefferman_weinstein} considered wave-packet propagation in a graphene monolayer modeled by a continuum PDE, obtaining the timescale of validity $\tau = O(\gamma^{- 2 + \nu})$ for any $\nu> 0$. Recalling Remark \ref{rem:worst_case_intro}, when $h$ is as in Examples \ref{ex:example_1}-\ref{ex:example_3}, we recover this timescale of validity.
\end{remark}
We finish this section by proving two simple Lemmas which establish that the (equivalent) BM models \eqref{eq:BM_H}, \eqref{eq:BM_model_0}, and \eqref{eq:physical_BM}, are invariant under arbitrary interlayer displacements $\vec{\mathfrak{d}}$ \eqref{eq:interlayer_displacement} (see Figure~
\ref{fig:TBG_atomic_structure}), and under translation by moir\'e lattice vectors \eqref{eq:moire_lattice}. In the first case, we will prove that, for any fixed interlayer displacement $\mathfrak{d} \neq 0$, there exists a unitary operator $U(\mathfrak{d})$ which conjugates the Hamiltonian to the same model with $\mathfrak{d} = \vec{0}$. Note that a similar calculation was already given by Bistritzer-MacDonald \cite{Bistritzer2011}.
\begin{lemma} \label{lem:interlayer_displacement}
    Let $H_{\text{\emph{BM}}}$ be as in \eqref{eq:BM_H}, where the momentum hops, moir\'e potential, and hopping matrices, are as in \eqref{eq:momentum_space_hops}, \eqref{eq:moire_potential}, and \eqref{eq:hopping_matrices_full}, respectively. Let $S_{\vec{w}} f(\vec{X}) := f(\vec{X} - \vec{w})$, and $\mathfrak{w} := \gamma \left( (\vec{b}_1 \cdot \mathfrak{d}) \vec{a}_{m,1} + (\vec{b}_2 \cdot \mathfrak{d}) \vec{a}_{m,2} \right)$. Then
    \begin{equation} \label{eq:U_def}
    U(\mathfrak{d}) H_{\text{\emph{BM}}} U(\mathfrak{d})^\dagger = \widetilde{H_{\text{\emph{BM}}}}, \quad U(\mathfrak{d}) := \diag\left( I_2 e^{- i \frac{ \mathfrak{s}_1 \cdot \mathfrak{w} }{2 \gamma}} , I_2 e^{i \frac{\mathfrak{s}_1 \cdot \mathfrak{w}}{2 \gamma}} \right) S_{\mathfrak{w}},
    \end{equation}
    where $\widetilde{H_{\text{\emph{BM}}}}$ is exactly as in \eqref{eq:BM_H}-\eqref{eq:momentum_space_hops}-\eqref{eq:moire_potential}-\eqref{eq:hopping_matrices_full}, but with $\mathfrak{d} = 0$.
\end{lemma}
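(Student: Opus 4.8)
The plan is a direct conjugation computation, checking block by block that $U(\mathfrak d)$ absorbs all of the $\mathfrak d$-dependence of $H_{\text{BM}}$. First I would note that $U(\mathfrak d)$ is unitary: it is the composition of the translation $S_{\mathfrak w}$, which is unitary on $L^2(\mathbb R^2;\mathbb C^4)$, with multiplication by the constant diagonal unitary matrix $\diag(I_2 e^{-i\mathfrak s_1\cdot\mathfrak w/(2\gamma)},I_2 e^{i\mathfrak s_1\cdot\mathfrak w/(2\gamma)})$; since the exponent is a real constant, these two factors commute and the product is unitary, so the order in which they are applied is immaterial.

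Next I would conjugate $H_{\text{BM}}$ by $S_{\mathfrak w}$ alone. The diagonal Dirac blocks $\vec\sigma\cdot(-i\nabla_{\vec X})$ are constant-coefficient differential operators, hence translation-invariant, so they are unchanged. The off-diagonal block $\mathcal T(\vec X)=\lambda_0 T_\theta(\vec X/\gamma)=\lambda_0\sum_{j=1}^3 T_j e^{-i\vec s_j\cdot\vec X/\gamma}$ is sent to $\mathcal T(\vec X-\mathfrak w)=\lambda_0\sum_{j=1}^3 T_j e^{i\vec s_j\cdot\mathfrak w/\gamma} e^{-i\vec s_j\cdot\vec X/\gamma}$, so the $j$th hopping term acquires the phase $e^{i\vec s_j\cdot\mathfrak w/\gamma}$ (and the $21$ block, the conjugate transpose, transforms analogously).

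I would then conjugate by the constant diagonal unitary. Acting as $e^{\mp i\mathfrak s_1\cdot\mathfrak w/(2\gamma)}I_2$ on each layer, it is a scalar times the identity, hence commutes with the diagonal blocks, and it multiplies the $12$ block by the overall scalar $e^{-i\mathfrak s_1\cdot\mathfrak w/\gamma}$. Since $\mathfrak s_1=a\vec s_1$ with $a=1$, the net phase attached to the $j$th term becomes $e^{i(\vec s_j-\vec s_1)\cdot\mathfrak w/\gamma}$, which is $1$ for $j=1$, consistent with $T_1$ carrying no $\mathfrak d$-dependence. For $j=2,3$ I would use the identities $\vec s_2-\vec s_1=\vec b_{m,2}$ and $\vec s_3-\vec s_1=-\vec b_{m,1}$ from \eqref{eq:momentum_space_hops}, insert the definition of $\mathfrak w$, and apply the moir\'e duality relations $\vec b_{m,i}\cdot\vec a_{m,j}=2\pi\delta_{ij}$; one finds that the accumulated phases are exactly those cancelling the prefactors $e^{-i\vec b_2\cdot\mathfrak d}$ and $e^{i\vec b_1\cdot\mathfrak d}$ of $T_2$ and $T_3$ in \eqref{eq:hopping_matrices_full}. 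Hence the conjugated $12$ block equals $\lambda_0\sum_{j=1}^3\widetilde T_j e^{-i\vec s_j\cdot\vec X/\gamma}$ with $\widetilde T_j$ the $\mathfrak d=0$ hopping matrices, the $21$ block follows by taking the conjugate transpose, and the diagonal blocks already agree; therefore $U(\mathfrak d)H_{\text{BM}}U(\mathfrak d)^\dagger=\widetilde{H_{\text{BM}}}$.

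The only real difficulty here is bookkeeping: one must track the normalization of $\mathfrak w$ relative to the moir\'e dual lattice carefully enough that the accumulated phases match the hopping-matrix prefactors on the nose (not merely modulo $2\pi$), and one must verify all three matrices $T_j$ together with both off-diagonal blocks. There is no analytic content — everything reduces to linear algebra on the reciprocal lattices.
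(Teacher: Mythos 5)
Your proposal is correct and takes essentially the same route as the paper: the paper's (very brief) proof factors the $\mathfrak{d}$-dependent phases $e^{-i\vec{b}_2\cdot\mathfrak{d}}$, $e^{i\vec{b}_1\cdot\mathfrak{d}}$ out of $T_2,T_3$ and records the single operator identity $S_{\mathfrak{w}}\mathcal{T}(\vec{X})=e^{i\mathfrak{s}_1\cdot\mathfrak{w}/\gamma}\,\widetilde{\mathcal{T}(\vec{X})}\,S_{\mathfrak{w}}$, which is exactly your block-by-block phase bookkeeping via $\vec{s}_2-\vec{s}_1=\vec{b}_{m,2}$ and $\vec{s}_3-\vec{s}_1=-\vec{b}_{m,1}$, followed by removal of the common phase with the constant diagonal unitary. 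One caveat on the normalization you yourself flag: since $\vec{b}_{m,i}\cdot\vec{a}_{m,j}=2\pi\delta_{ij}$, the $\mathfrak{w}$ as stated produces phases $e^{2\pi i\,\vec{b}_2\cdot\mathfrak{d}}$ and $e^{-2\pi i\,\vec{b}_1\cdot\mathfrak{d}}$, so exact cancellation of the hopping-matrix prefactors requires an extra factor $\tfrac{1}{2\pi}$ in the definition of $\mathfrak{w}$ --- a normalization slip in the lemma's statement (glossed over in the paper's proof as well), not a defect of your argument.
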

\begin{proof}
    Let $\widetilde{T_n}, n \in \{1,2,3\},$ denote the hopping matrices \eqref{eq:hopping_matrices_full} with $\mathfrak{d} = 0$. Then
    \begin{equation}
        \mathcal{T}(\vec{X}) = \lambda_0 T_\theta\left(\frac{\vec{X}}{\gamma}\right) = \lambda_0 \left( \widetilde{T_1} e^{- i \frac{ \mathfrak{s}_1 }{\gamma} \cdot \vec{X}} + e^{- i \vec{b}_2 \cdot {\mathfrak{d}}} \widetilde{T_2} e^{- i \frac{\mathfrak{s}_2}{\gamma} \cdot \vec{X}} + e^{i \vec{b}_1 \cdot \mathfrak{d}} \widetilde{T_3} e^{- i \frac{\mathfrak{s}_3}{\gamma} \cdot \vec{X}} \right).
    \end{equation}
    Let $\widetilde{\mathcal{T}(\vec{X})}$ denote $\mathcal{T}(\vec{X})$ but with $\mathfrak{d} = 0$. 
    Then
    \begin{equation}
        S_{\mathfrak{w}} \mathcal{T}(\vec{X}) = e^{i \frac{\mathfrak{s}_1 \cdot \mathfrak{w}}{\gamma}} \widetilde{\mathcal{T}(\vec{X})} S_{\mathfrak{w}}.
    \end{equation}
    The result now follows. 
\end{proof}
The second Lemma, whose proof is a straightforward calculation using \eqref{eq:moire_potential_periodic}, establishes that the BM model Hamiltonian commutes with translation operators in the moir\'e lattice, as long as the translation operators are modified by appropriate phase shifts. The origin of these phase shifts is the non-zero distance between the monolayer $\vec{K}$ points, and the fact that the monolayer $\vec{K}$ points are chosen as the momentum space origins in both layers. We include the lemma for completeness, and because of its importance in allowing the spectrum of $H_{\text{BM}}$ to be expressed in terms of ``moir\'e'' Bloch bands \cite{Bistritzer2011}.
\begin{lemma} \label{lem:moire_periodicity}
    Let $H_{\text{\emph{BM}}}$ be as in \eqref{eq:BM_H}, where the momentum hops, moir\'e potential, and hopping matrices, are as in \eqref{eq:momentum_space_hops}, \eqref{eq:moire_potential}, and \eqref{eq:hopping_matrices_full}, respectively, let $S_{\vec{w}} f(\vec{X}) := f(\vec{X} - \vec{w})$, and let $\Lambda_m$ be as in \eqref{eq:moire_lattice}. Then
    \begin{equation} \label{eq:translation}
        \mathcal{S}_{\vec{R}_m} H_{\text{\emph{BM}}} = H_{\text{\emph{BM}}} \mathcal{S}_{\vec{R}_m}, \quad \mathcal{S}_{\vec{R}_m} := \diag\left( 1, 1, e^{i \frac{ \mathfrak{s}_1 \cdot \vec{R}_m }{ \gamma }}, e^{i \frac{ \mathfrak{s}_1 \cdot \vec{R}_m }{ \gamma }} \right) S_{\vec{R}_m}, \quad \vec{R}_m \in \Lambda_m.
    \end{equation}
\end{lemma}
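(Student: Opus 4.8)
The plan is to decompose $H_{\text{BM}}$ into its diagonal (monolayer Dirac) and off-diagonal (moir\'e potential) parts and to verify the intertwining relation $\mathcal{S}_{\vec{R}_m} H_{\text{BM}} = H_{\text{BM}} \mathcal{S}_{\vec{R}_m}$ block by block. Note that $\mathcal{S}_{\vec{R}_m}$ is the composition of the plain translation $S_{\vec{R}_m}$ with the layer-diagonal constant phase $\diag(1,1,\alpha,\alpha)$, where $\alpha := e^{i \mathfrak{s}_1 \cdot \vec{R}_m / \gamma}$, so that $\mathcal{S}_{\vec{R}_m} (g_1, g_2)^\top = ( S_{\vec{R}_m} g_1 , \alpha\, S_{\vec{R}_m} g_2 )^\top$ with $g_i$ the layer-$i$ spinor.

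The diagonal blocks are immediate: the constant-coefficient Dirac operator $\vec{\sigma} \cdot (-i \nabla_{\vec{X}})$ commutes both with translations and with multiplication by a constant, hence with $\mathcal{S}_{\vec{R}_m}$ on each layer separately. Carrying out the $2 \times 2$ block multiplication, the content of the lemma therefore collapses to the single operator identity $\mathcal{T}(\vec{X} - \vec{R}_m) = \alpha\, \mathcal{T}(\vec{X})$ for the off-diagonal entry $\mathcal{T}(\vec{X}) = \lambda_0 T_\theta(\vec{X}/\gamma)$; taking conjugate transposes gives $\mathcal{T}^\dagger(\vec{X} - \vec{R}_m) = \overline{\alpha}\, \mathcal{T}^\dagger(\vec{X})$, which together with $|\alpha| = 1$ then takes care of the layer-$2$ row.

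To prove $\mathcal{T}(\vec{X} - \vec{R}_m) = \alpha\, \mathcal{T}(\vec{X})$ I would invoke the quasiperiodicity \eqref{eq:moire_potential_periodic} of $T_\theta$: translating its spatial argument by a moir\'e lattice vector multiplies $T_\theta$ by $e^{-i \vec{s}_j \cdot \vec{R}_m}$, and the key point is that this phase is \emph{independent of $j \in \{1,2,3\}$}. Indeed, by \eqref{eq:momentum_space_hops} one has $\vec{s}_2 - \vec{s}_1 = \vec{b}_{m,2}$ and $\vec{s}_3 - \vec{s}_1 = -\vec{b}_{m,1}$, while $\vec{b}_{m,i} \cdot \vec{R}_m \in 2\pi \mathbb{Z}$ for every moir\'e lattice vector by the defining relation $\vec{b}_{m,i} \cdot \vec{a}_{m,j} = 2\pi \delta_{ij}$, so $e^{-i \vec{s}_j \cdot \vec{R}_m} = e^{-i \vec{s}_1 \cdot \vec{R}_m}$ for $j = 2,3$. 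Propagating the $\gamma^{-1}$-rescaling of the spatial argument in $\mathcal{T}(\vec{X}) = \lambda_0 T_\theta(\vec{X}/\gamma)$ through this computation produces exactly the phase $\alpha = e^{i \mathfrak{s}_1 \cdot \vec{R}_m / \gamma}$, establishing the identity.

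The calculation is routine; if there is any subtlety it is purely bookkeeping --- keeping straight (i) the layer asymmetry of $\mathcal{S}_{\vec{R}_m}$, which is forced by our choice of the two distinct monolayer $\vec{K}$ points as the momentum-space origins on the two layers (this is precisely the ``origin of the phase shifts'' referred to in the statement), and (ii) the $\gamma^{-1}$ scaling relating $T_\theta$ to $\mathcal{T}$, so that the translations of $\vec{X}$ that occur correspond to (scaled) moir\'e lattice translations for which \eqref{eq:moire_potential_periodic} applies exactly. I do not anticipate any genuine analytic obstacle.
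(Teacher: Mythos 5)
Your proof is correct and takes essentially the same route the paper intends: the paper gives no details beyond noting that the lemma is a straightforward calculation using the quasi-periodicity \eqref{eq:moire_potential_periodic}, and your block-by-block verification together with $\vec{s}_2-\vec{s}_1=\vec{b}_{m,2}$, $\vec{s}_3-\vec{s}_1=-\vec{b}_{m,1}$ and $\vec{b}_{m,i}\cdot\vec{R}_m\in 2\pi\mathbb{Z}$ is exactly that calculation. Your bookkeeping point (ii) is the right one to flag: the translations in the statement must be understood in the scaled coordinates $\vec{X}$ (i.e., as $\gamma$-scaled moir\'e translations) for the phase $e^{i\mathfrak{s}_1\cdot\vec{R}_m/\gamma}$ to emerge exactly, and you treat this consistently with the intended reading.
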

The translation and rotation symmetries of $H_{\text{BM}}$ are also discussed in the supplementary material of \cite{Watson2021}.
\begin{remark}
    We remark on potential generalizations of our results, other than relaxing Assumption \ref{as:h_regularity}, as discussed in Remark \ref{rem:relaxing}.

    First, it shouldn't be necessary to restrict to the nearest-neighbor tight-binding model of graphene \eqref{eq:nearest_neighbor}. The Dirac points of graphene are protected by symmetry \cite{fefferman_weinstein_diracpoints,2018BerkolaikoComech}, and so it should be possible to carry through our derivation with \eqref{eq:nearest_neighbor} replaced by a general Hamiltonian with exponentially-decaying (in real space) hopping which respects $\frac{2 \pi}{3}$-rotation, inversion ($\vec{r} \mapsto - \vec{r}$), and complex conjugation symmetries.

    Second, it should be possible to push the time-scale of validity of the wave-packet approximation to higher orders in $\gamma^{-1}$ by including corrections to the {\it ansatz} \eqref{eq:WP} in the form of a formal series in powers of $\gamma$. It would be interesting to understand the effective dynamics over these longer time-scales. We expect that they will involve longer range momentum space hopping than the nearest-neighbor momentum shifts \eqref{eq:momentum_space_hops}.
\end{remark}

\section{Proof of Lemma \ref{lem:residual_lemma}} \label{sec:proof_key_lemma}

In this section, we prove Lemma \ref{lem:residual_lemma}. The structure of the proof is as follows. In Section \ref{sec:simplify_residual}, we state a sequence of lemmas, whose proofs are given in Appendices \ref{sec:proof_of_diag_lemma}, \ref{sec:proof_of_diag_theta_lemma}, and \ref{sec:proof_of_off_diag_lemma}. The lemmas write $r = - ( i \de_\tau - H )\Psi$ as a sum of (formally) ``leading order'' and ``higher order'' terms. In Section \ref{sec:parameter_regime}, we identify the parameter regime where the leading order terms balance and every higher order term is smaller in order to complete the proof of Lemma \ref{lem:residual_lemma}.

\subsection{Simplification of residual $r = - (i \de_\tau - H) \Psi$} \label{sec:simplify_residual}

We start with the action of $i \de_\tau$, which is very simple. Because of the scaling of the wave-packet envelope, the time derivative acting on the wave-packet ends up being proportional to $\gamma$.
\begin{lemma} \label{lem:left-hand_side}
\begin{equation} \label{eq:left-hand_side}
    \begin{split}
        &i \de_\tau ( {\Psi}_i )^\sigma_{\vec{R}_i}(\tau) = \\
        &\gamma \left. \left( i \gamma \de_T \right) {f}^\sigma_i \left( \gamma \left( \vec{R}_i + \vec{\tau}_i \right) , T \right) \right|_{T = \gamma \tau} e^{i \vec{\kappa}_i \cdot ( \vec{R}_i + \vec{\tau}^\sigma_i )}, \quad i \in \{1,2\}, \sigma \in \{A,B\}.
    \end{split}
\end{equation}
\end{lemma}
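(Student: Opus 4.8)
The plan is to obtain \eqref{eq:left-hand_side} directly from the chain rule applied to the wave-packet \emph{ansatz} \eqref{eq:WP}. Fix $i \in \{1,2\}$ and $\sigma \in \{A,B\}$ and write out $(\Psi_i)^\sigma_{\vec{R}_i}(\tau) = \gamma\, f^\sigma_i(\vec{X},T)\big|_{\vec{X} = \gamma(\vec{R}_i + \vec{\tau}^\sigma_i),\, T = \gamma\tau}\, e^{i \vec{\kappa}_i \cdot (\vec{R}_i + \vec{\tau}^\sigma_i)}$. Of the factors appearing here, the constant prefactor $\gamma$, the slow spatial argument $\vec{X} = \gamma(\vec{R}_i + \vec{\tau}^\sigma_i)$, and the plane-wave phase $e^{i \vec{\kappa}_i \cdot (\vec{R}_i + \vec{\tau}^\sigma_i)}$ are all independent of $\tau$; only the slow time argument $T = \gamma\tau$ carries $\tau$-dependence. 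Hence $\de_\tau (\Psi_i)^\sigma_{\vec{R}_i}(\tau) = \gamma\, (\de_T f^\sigma_i)(\vec{X},T)\big|_{\vec{X} = \gamma(\vec{R}_i+\vec{\tau}^\sigma_i),\, T=\gamma\tau}\, \tfrac{\dee T}{\dee \tau}\, e^{i\vec{\kappa}_i\cdot(\vec{R}_i+\vec{\tau}^\sigma_i)}$ with $\tfrac{\dee T}{\dee\tau} = \gamma$; multiplying through by $i$ yields precisely \eqref{eq:left-hand_side}. I would present this as a one-line computation.

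The only point requiring (mild) care is that $\de_T f^\sigma_i$ makes sense and the chain rule is legitimate. This is supplied by Lemma \ref{lem:BM_sols}, which produces a solution $f$ of the BM initial value problem \eqref{eq:IVP} that is differentiable in $T$ with values in an appropriate Sobolev space; after evaluation at the fixed slow spatial point $\gamma(\vec{R}_i + \vec{\tau}^\sigma_i)$, using the embedding $H^s(\mathbb{R}^2) \hookrightarrow C^0(\mathbb{R}^2)$ for $s > 1$, each scalar component $f^\sigma_i$ is an honest $T$-differentiable function of $\tau$. There is no genuine obstacle here: the entire content of the lemma is the bookkeeping fact that differentiating in the fast time $\tau$ converts one power of $\de_\tau$ into $\gamma\,\de_T$, which is exactly why the left-hand side of the residual equation enters at order $\gamma^2$ and balances the $O(\gamma^2)$ terms obtained by expanding $H\Psi$ around the monolayer Dirac points (cf. Remark \ref{rem:WP_scaling}).
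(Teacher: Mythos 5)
Your chain-rule computation is correct and is exactly the argument the paper intends: Lemma \ref{lem:left-hand_side} is stated without proof precisely because the only $\tau$-dependence in the \emph{ansatz} \eqref{eq:WP} sits in the slow time $T=\gamma\tau$, so $i\de_\tau$ produces the factor $i\gamma\de_T$. Your additional remark on differentiability of $f$ (via Lemma \ref{lem:BM_sols} and Sobolev embedding) is harmless but not needed for the identity itself, which holds for any sufficiently regular envelope.
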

We now consider the action of $H$ on the wave-packet. The following lemma decomposes the action of the diagonal parts of $H$ into a leading order part, proportional to $\gamma$ and with the form of a Dirac operator, and a higher-order term $r^{\text{I}}$, which is formally $O\left(\gamma^2\right)$. This term is simply the remainder from Taylor-expanding the monolayer Bloch Hamiltonian up to linear order at the Dirac points.
\begin{lemma} \label{lem:diag_lemma}
    Let $H$ be as in \eqref{eq:block_H_diag}, and $\Psi$ be as in \eqref{eq:WP}. Then, for $i \in \{1,2\}$ and $\sigma \in \{A,B\}$,
    \begin{equation} \label{eq:diag_decomposition}
    \begin{split}
        &( H_{ii} \Psi_i )^\sigma_{\vec{R}_i}(\tau) = \\
        &\gamma \left[ \gamma \sum_{\sigma' \in \{A,B\}} \left( \vec{\sigma}_{\theta_i/2} \cdot (- i \nabla_{\vec{X}}) \right)^{\sigma \sigma'} \left. f^{\sigma'}_i(\vec{X},\gamma \tau) \right|_{\vec{X} = \gamma ( \vec{R}_i + \vec{\tau}^{\sigma'}_i )} e^{i \vec{\kappa}_i \cdot ( \vec{R}_i + \vec{\tau}_i^\sigma )} \right]  \\
        &+ ( r^{\text{\emph{I}}}_{\vec{R}_i} )^\sigma(\tau),
    \end{split}
    \end{equation}
    where $\theta_i$ is as in \eqref{eq:theta_cases}, and there exists a constant $C > 0$, depending only on the monolayer operator \eqref{eq:graphene_block}, such that
    {\color{black}
    \begin{equation} \label{eq:diag_decomposition_remainder}
        \left\| r^{\text{\emph{I}}}_i(\tau) \right\|_{\ell^2(\mathbb{Z}^2;\mathbb{C}^2)} \leq C \gamma^2 \left\| f_i(\cdot,\gamma \tau) \right\|_{H^2\left(\field{R}^2;\mathbb{C}^2\right)} + C \gamma^4 \left\| f_i(\cdot,\gamma \tau) \right\|_{H^6\left(\field{R}^2;\mathbb{C}^2\right)}.
    \end{equation}
    }
\end{lemma}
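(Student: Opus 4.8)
The plan is to pass to the Bloch representation, where $H_{ii}$ becomes multiplication by the monolayer Bloch Hamiltonian $H_i(\vec{k}_i)$, and to Taylor-expand that multiplier to first order about the twisted Dirac point $\vec{\kappa}_i = \vec{K}_i$; since $H_{ii}$ is layer-diagonal the two layers are handled identically (with $\theta_i$ as in~\eqref{eq:theta_cases}). First I would compute the Bloch transform of the wave-packet: inserting $(\Psi_i)^\sigma_{\vec{R}_i}(\tau) = \gamma f^\sigma_i\big(\gamma(\vec{R}_i + \vec{\tau}^\sigma_i),\gamma\tau\big) e^{i\vec{\kappa}_i \cdot (\vec{R}_i + \vec{\tau}^\sigma_i)}$ into $\mathcal{G}_i$ and applying the Poisson summation formula expresses $(\mathcal{G}_i \Psi_i)^\sigma(\vec{k}_i)$ as a sum over $\vec{G}_i \in \Lambda_i^*$ of phase-weighted rescaled copies $\tfrac{1}{\gamma}\oldhat{f}^\sigma_i\big(\tfrac{\vec{k}_i - \vec{\kappa}_i + \vec{G}_i}{\gamma}\big)$ of the envelope Fourier transform; near $\vec{\kappa}_i$ (on the torus $\mathbb{R}^2/\Lambda_i^*$) the $\vec{G}_i = \vec{0}$ copy dominates, and the remaining copies — the periodization tail — have mass controlled by the decay of $\oldhat{f}^\sigma_i$, hence by Sobolev norms of $f_i$. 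I would then write $H_i(\vec{\kappa}_i + \vec{q}) = \vec{\sigma}_{\theta_i/2}\cdot\vec{q} + \mathcal{R}_i(\vec{q})$, with $\mathcal{R}_i$ the explicit second-order integral-form remainder satisfying $|\mathcal{R}_i(\vec{q})| \lesssim |\vec{q}|^2$ uniformly on $\Gamma_i^*$ (the momentum-space Taylor remainder referred to after Assumption~\ref{as:f_regularity}); there is no zeroth-order term because $H_i(\vec{\kappa}_i) = 0$, i.e.\ $\vec{\kappa}_i$ is a Dirac point of layer $i$.

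The leading term then comes from multiplying the $\vec{G}_i = \vec{0}$ copy by $\vec{\sigma}_{\theta_i/2}\cdot\vec{q}$ and transforming back to the lattice: here one must check that the sublattice-dependent phases $e^{\pm i\vec{k}_i\cdot\vec{\tau}^\sigma_i}$ in $\mathcal{G}_i$ combine with the purely off-diagonal form of $\vec{\sigma}_{\theta_i/2}\cdot\vec{q}$ so that the component $f^{\sigma'}_i$ ends up evaluated at its own site $\gamma(\vec{R}_i + \vec{\tau}^{\sigma'}_i)$; with $\vec{q}$ replaced by $\gamma(-i\nabla_{\vec{X}})$ and the overall factor $\gamma$ from $\Psi$, this reproduces exactly the Dirac-operator term in~\eqref{eq:diag_decomposition}. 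The same computation can be done directly in real space, which is arguably cleaner: since $H_{ii}$ has range one, $(H_{ii}\Psi_i)^A_{\vec{R}_i}$ is a three-term sum of $\Psi^B$ over $\vec{R}_i$ and $\vec{R}_i - \vec{a}_{i,j}$; factoring out $e^{i\vec{\kappa}_i\cdot(\vec{R}_i + \vec{\tau}^A_i)}$ leaves the phases $e^{i\vec{\kappa}_i\cdot(\vec{\tau}^B_i - \vec{\tau}^A_i)}$ and $e^{-i\vec{\kappa}_i\cdot\vec{a}_{i,j}}$, which by rotation-invariance of the inner product coincide with their monolayer values; Taylor-expanding the three envelope values $f^B_i\big(\gamma(\vec{R}_i - \vec{a}_{i,j} + \vec{\tau}^B_i)\big)$ about $\gamma(\vec{R}_i + \vec{\tau}^B_i)$, the constant term vanishes (again because $\vec{\kappa}_i$ is a Dirac point) and the linear term assembles via~\eqref{eq:monolayer_Dirac_twisted} into $\vec{\sigma}_{\theta_i/2}\cdot(-i\nabla_{\vec{X}})$.

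It remains to bound $r^{\text{I}}$, which splits into two pieces. The contribution of $\mathcal{R}_i(\vec{q})$ acting on the $\vec{G}_i = \vec{0}$ copy is, after the rescaling $\vec{q} = \gamma\vec{p}$ and using $|\mathcal{R}_i(\vec{q})| \lesssim |\vec{q}|^2$ near $\vec{\kappa}_i$ together with boundedness of $\mathcal{R}_i$ on the rest of $\Gamma_i^*$, bounded by $C\gamma^2\|f_i\|_{H^2(\mathbb{R}^2;\mathbb{C}^2)}$ — the first term of~\eqref{eq:diag_decomposition_remainder}. The second piece is the periodization tail, i.e.\ the copies with $\vec{G}_i \neq \vec{0}$ (equivalently, the aliasing error in passing from a lattice sum to an integral): on $\Gamma_i^*$ each such copy is evaluated at arguments of size $\gtrsim |\vec{G}_i|/\gamma$, while the multiplier $H_i(\vec{k}_i)$ and the subtracted linear term are bounded on $\Gamma_i^*$, so the copy contributes at most $(\gamma/|\vec{G}_i|)^{s}\|f_i\|_{H^s}$ for any $s$; summing against a convergent lattice sum over $\vec{G}_i \neq \vec{0}$ and keeping track of the powers of $\gamma$ yields $C\gamma^4\|f_i\|_{H^6(\mathbb{R}^2;\mathbb{C}^2)}$, the second term of~\eqref{eq:diag_decomposition_remainder}. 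I expect this last step — estimating the periodization/aliasing terms uniformly in $\gamma$ and pinning down exactly how many derivatives of $f_i$ are required — to be the only genuinely delicate point; the identification of the leading Dirac term and the first remainder bound are essentially bookkeeping built on the known Dirac expansion~\eqref{eq:monolayer_Dirac_twisted} and the vanishing of $H_i$ at $\vec{\kappa}_i$.
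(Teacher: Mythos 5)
Your proposal is correct and follows essentially the same route as the paper's proof: Bloch/momentum-space representation of the wave-packet via Poisson summation, Taylor expansion of the monolayer Bloch multiplier at the Dirac point (using $H_i(\vec{\kappa}_i)=0$ and \eqref{eq:monolayer_Dirac_twisted}), the quadratic remainder on the dominant copy giving the $\gamma^2\|f_i\|_{H^2}$ term, and the aliased copies with $\vec{G}_i\neq\vec{0}$ giving the higher-order $\gamma^4\|f_i\|_{H^6}$ term. The only loose point is bookkeeping: the exact Dirac term in \eqref{eq:diag_decomposition} corresponds in Bloch space to the linear multiplier evaluated at the \emph{shifted} arguments on \emph{all} copies (as the paper keeps it), so your "subtracted linear term" on the tail copies grows like $|\vec{G}_i|$ rather than being bounded on $\Gamma_i^*$ — harmless, since the $(\gamma/|\vec{G}_i|)^s$ decay of the copies absorbs a linear factor, but worth stating precisely.
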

\begin{proof}
    See Appendix \ref{sec:proof_of_diag_lemma}.
\end{proof}
{\color{black}
\begin{remark} \label{rem:higher_order_terms}
    The higher-order term appearing in \eqref{eq:diag_decomposition_remainder} arises from estimating the $\tilde{\vec{G}}_1 \neq \vec{0}$ terms in \eqref{eq:all_terms} by \eqref{eq:trick_estimate}. Note that similar terms appear in \eqref{eq:diag_theta_decomposition_remainder}, \eqref{eq:r2_bound}, and \eqref{eq:r3_bound}, with very similar origins. Under stronger regularity assumptions, these terms can be made arbitrarily high order in $\gamma$ (for example, the estimate in \eqref{eq:diag_decomposition_remainder} can be improved to $\gamma^{4+s}\|f_i(\cdot,\gamma \tau)\|_{H^{6+2 s}(\mathbb{R}^2;\mathbb{C}^2)}$ for any $s \geq 0$). 
\end{remark}
}
The following lemma shows that the diagonal part of $H$ can be further decomposed into a $\theta$-independent part proportional to $\gamma$, and a $\theta$-dependent part which is $O(\gamma \theta)$. 
\begin{lemma} \label{lem:diag_theta_lemma}
    Let $H$ be as in \eqref{eq:block_H_diag}, and $\Psi$ be as in \eqref{eq:WP}. Then, for $i \in \{1,2\}$ and $\sigma \in \{A,B\}$,
    \begin{equation} \label{eq:diag_theta_decomposition}
    \begin{split}
        &( H_{ii} \Psi_i )^\sigma_{\vec{R}_i}(\tau) = \\
        &\gamma \left[ \gamma \sum_{\sigma' \in \{A,B\}} \left( \vec{\sigma} \cdot (- i \nabla_{\vec{X}}) \right)^{\sigma \sigma'} \left. f^{\sigma'}_i(\vec{X},\gamma \tau) \right|_{\vec{X} = \gamma ( \vec{R}_i + \vec{\tau}^{\sigma'}_i )} e^{i \vec{\kappa}_i \cdot ( \vec{R}_i + \vec{\tau}_i^\sigma )} \right] \\
        &+ ( r^{\text{\emph{I}}}_{\vec{R}_i} )^\sigma(\tau) + ( r^{\text{\emph{II}}}_{\vec{R}_i} )^\sigma(\tau),
    \end{split}
    \end{equation}
    where $r^{\text{\emph{I}}}$ is as in Lemma \ref{lem:diag_lemma}, and there exists a constant $C > 0$ such that
    {\color{black}
    \begin{equation} \label{eq:diag_theta_decomposition_remainder}
        \left\| r^{\text{\emph{II}}}_{i}(\tau) \right\|_{\ell^2(\mathbb{Z}^2;\mathbb{C}^2)} \leq C \theta \gamma \left\| f_i(\cdot,\gamma \tau) \right\|_{H^1\left(\field{R}^2;\mathbb{C}^2\right)} + C \theta \gamma^3 \left\| f_i(\cdot,\gamma \tau) \right\|_{H^5\left(\field{R}^2;\mathbb{C}^2\right)}.
    \end{equation}
    }
\end{lemma}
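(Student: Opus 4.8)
The statement is essentially a corollary of Lemma \ref{lem:diag_lemma}: the only difference between \eqref{eq:diag_theta_decomposition} and \eqref{eq:diag_decomposition} is that the \emph{twisted} monolayer Dirac operator $\vec{\sigma}_{\theta_i/2} \cdot (-i\nabla_{\vec{X}})$ of \eqref{eq:diag_decomposition} has been replaced by the \emph{untwisted} operator $\vec{\sigma} \cdot (-i\nabla_{\vec{X}})$. The plan is therefore to start from the decomposition \eqref{eq:diag_decomposition} already provided by Lemma \ref{lem:diag_lemma}, add and subtract the untwisted leading term, and \emph{define}
\[
  (r^{\text{II}}_{\vec{R}_i})^\sigma(\tau) := \gamma\left[ \gamma \sum_{\sigma' \in \{A,B\}} \bigl( (\vec{\sigma}_{\theta_i/2} - \vec{\sigma}) \cdot (-i\nabla_{\vec{X}}) \bigr)^{\sigma \sigma'} f_i^{\sigma'}(\vec{X},\gamma\tau)\Big|_{\vec{X} = \gamma(\vec{R}_i + \vec{\tau}_i^{\sigma'})} e^{i \vec{\kappa}_i \cdot (\vec{R}_i + \vec{\tau}_i^\sigma)} \right].
\]
With $r^{\text{I}}$ exactly as in Lemma \ref{lem:diag_lemma}, the identity \eqref{eq:diag_theta_decomposition} is then immediate, and it remains only to prove the bound \eqref{eq:diag_theta_decomposition_remainder} on $r^{\text{II}}$.

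First I would record the elementary symbol bound on the difference operator. By \eqref{eq:monolayer_Dirac_twisted}, $(\vec{\sigma}_{\theta_i/2} - \vec{\sigma})\cdot\vec{q}_i$ is purely off-diagonal, with entries $(e^{\pm i\theta_i/2}-1)(q_{i,1} \mp i q_{i,2})$; since $|\theta_i| = \theta$ by \eqref{eq:theta_cases}, we have $|e^{\pm i\theta_i/2} - 1| \leq \theta/2$, so the matrix norm of $(\vec{\sigma}_{\theta_i/2} - \vec{\sigma})\cdot\vec{q}_i$ is at most $C\theta|\vec{q}_i|$. Equivalently, on the Fourier side the symbol of $(\vec{\sigma}_{\theta_i/2} - \vec{\sigma})\cdot(-i\nabla_{\vec{X}})$ is bounded by $C\theta$ times that of a single derivative, so the continuous function $g_i := \gamma\,(\vec{\sigma}_{\theta_i/2} - \vec{\sigma})\cdot(-i\nabla_{\vec{X}}) f_i(\cdot,\gamma\tau)$ obeys $\|g_i\|_{H^s(\mathbb{R}^2;\mathbb{C}^2)} \leq C\theta\gamma\,\|f_i(\cdot,\gamma\tau)\|_{H^{s+1}(\mathbb{R}^2;\mathbb{C}^2)}$ for every $s \geq 0$. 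This already makes precise the informal assertion that the $\theta$-dependent part of the diagonal operator is of size $O(\gamma\theta)$ relative to the Dirac leading term.

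It then remains to pass from this $L^2$ control of $g_i$ to $\ell^2$ control of the sampled, phase-modulated sequence $r^{\text{II}}_i$, which has exactly the same structure as the leading terms already handled in Lemma \ref{lem:diag_lemma}. Transforming $r^{\text{II}}_i$ to the Bloch domain produces, by Poisson summation, a principal copy of $\hat{g}_i(\vec{q}_i/\gamma)$ together with aliased copies $\hat{g}_i((\vec{q}_i + \vec{G})/\gamma)$ indexed by $\vec{G} \in \Lambda_i^*\setminus\{\vec{0}\}$. The principal copy contributes $\lesssim \|g_i\|_{L^2} \lesssim \theta\gamma\|f_i(\cdot,\gamma\tau)\|_{H^1}$; the aliased copies are supported at momenta of size $\gtrsim |\vec{G}|/\gamma \gg 1$, where $\hat{g}_i$ is controlled by a high Sobolev norm at the price of extra powers of $\gamma$ --- this is precisely the ``trick estimate'' \eqref{eq:trick_estimate} invoked for the analogous $\tilde{\vec{G}}_1\neq\vec{0}$ terms in \eqref{eq:all_terms} (cf. Remark \ref{rem:higher_order_terms}) --- giving a contribution $\lesssim \gamma^2\|g_i\|_{H^4} \lesssim \theta\gamma^3\|f_i(\cdot,\gamma\tau)\|_{H^5}$. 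Summing the two pieces yields \eqref{eq:diag_theta_decomposition_remainder}, with the full bookkeeping deferred to Appendix \ref{sec:proof_of_diag_theta_lemma}. I do not expect a genuine obstacle here: the only mild technical point is making the Poisson-summation/aliasing estimate uniform over the finitely many sublattice-dependent sampling shifts $\vec{\tau}_i^{\sigma'}$, and the real analytic content --- the Taylor expansion of $F$ at the Dirac point and the aliasing estimate itself --- was already established in the proof of Lemma \ref{lem:diag_lemma}, so here it is simply re-used with the additional $O(\theta)$ prefactor coming from the difference $\vec{\sigma}_{\theta_i/2}-\vec{\sigma}$.
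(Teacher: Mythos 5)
Your proposal is correct and follows essentially the same route as the paper: the paper likewise defines $r^{\text{II}}$ by substituting $\vec{\sigma}_{\theta_i/2} = \vec{\sigma} + [\vec{\sigma}_{\theta_i/2} - \vec{\sigma}]$ into \eqref{eq:diag_decomposition}, uses the Taylor expansion of $e^{\pm i\theta/2}$ to extract the factor $\theta$, and then repeats the Bloch-domain estimate of Section \ref{sec:estimate_residual}, with the $\tilde{\vec{G}}_1 = \vec{0}$ term giving the $\theta\gamma\|f_i\|_{H^1}$ contribution and the aliased $\tilde{\vec{G}}_1 \neq \vec{0}$ terms giving the $\theta\gamma^3\|f_i\|_{H^5}$ correction, exactly as you outline.
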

\begin{proof}
    See Appendix \ref{sec:proof_of_diag_theta_lemma}.
\end{proof}

We now consider the action of the off-diagonal terms of $H$. The following lemma decomposes this action into a leading order part and higher order terms $r^{\text{III}}$ and $r^{\text{IV}}$. The leading order part is proportional to $\oldhat{h}(|\vec{\kappa}|;\ell)$ and describes interlayer coupling through the moir\'e potential. The higher order terms $r^{\text{III}}$ and $r^{\text{IV}}$ come from Taylor-expanding the interlayer hopping function about the wave-packet center and neglecting momentum space coupling terms other than three dominant terms, respectively. Note that we consider $H_{12} \Psi_2$ without loss of generality since $H_{21} \Psi_1$ could be simplified by exactly the same steps. 
\begin{lemma} \label{lem:off_diag_lemma}
    Let $H$ be as in \eqref{eq:block_H_diag}, and $\Psi$ be as in \eqref{eq:WP}. Recall the hopping matrices $T_j$ \eqref{eq:hopping_matrices}, and the momentum hops $\mathfrak{s}_j \eqref{eq:momentum_space_hops}, 1 \leq j \leq 3$. Then, for $\sigma \in \{A,B\}$,
    \begin{equation} \label{eq:off_diag_decomp}
    \begin{split}
        &\left( H_{12} \Psi_2 \right)_{\vec{R}_1}^\sigma(\tau) = \\ 
        &\oldhat{h}(|\vec{\kappa}|;\ell) \left[ \gamma \sum_{j = 1}^3 \sum_{\sigma' \in \{A,B\}} T^{\sigma \sigma'}_j \left. e^{- i \frac{\mathfrak{s}_j \cdot \vec{X}}{\gamma}} f^{\sigma'}_2\left(\vec{X},\gamma \tau\right) \right|_{\vec{X} = \gamma ( \vec{R}_1 + \vec{\tau}_1^\sigma)} e^{i \vec{\kappa}_1 \cdot ( \vec{R}_1 + \vec{\tau}^\sigma_1 )} \right] \\ 
        &+ ( r^{\text{\emph{III}}}_{\vec{R}_1} )^\sigma(\tau) + ( r^{\text{\emph{IV}}}_{\vec{R}_1} )^\sigma(\tau),
    \end{split}
\end{equation}
    where, for any $\ell_1 > 0$ and $M_0 > 1$, there exist constants $C_1 > 0$ and $C_2 > 0$ depending only on $\ell_1, M_0,$ and $\oldhat{h}$ such that for all $\ell > \ell_1$ and $M > M_0$,
    {\color{black}
    \begin{equation} \label{eq:r2_bound}
    \begin{split}
        &\left\| {r}^\text{\emph{III}}_1(\tau) \right\|_{\ell^2(\mathbb{Z}^2;\mathbb{C}^2)} \leq C_1 \gamma \oldhat{h}(|\vec{\kappa}|;\ell)  \\
        &\times \left( e^{\left(D_1 - D_2 \left( 1 - \frac{1}{M} \right) \right) \ell \left| \vec{\kappa} \right|} \left\| f_2(\cdot,\gamma \tau) \right\|_{H^1(\field{R}^2;\mathbb{C}^2)} + M^3 \gamma^3 e^{D_1 \ell |\vec{\kappa}|} \left\| f_2(\cdot,\gamma \tau) \right\|_{H^4(\field{R}^2;\mathbb{C}^2)} \right.  \\
        &\left. \vphantom{e^{\left(D_1 - D_2 \left( 1 - \frac{1}{M} \right) \right) \ell \left| \vec{\kappa} \right|}} \quad \quad \quad \quad \quad \quad + \gamma^2 e^{\left( D_1 - D_2 \left(1 - \frac{1}{M}\right) \right) \ell |\vec{\kappa}|} \left\| f_2(\cdot,\gamma \tau) \right\|_{H^5(\mathbb{R}^2;\mathbb{C}^2)} + M^3 \gamma^5 e^{D_1 \ell |\vec{\kappa}|} \left\| f_2(\cdot,\gamma \tau) \right\|_{H^8(\mathbb{R}^2;\mathbb{C}^2)} \right).
    \end{split}
    \end{equation}
    }
    {\color{black}
    \begin{equation} \label{eq:r3_bound}
    \begin{split}
        &\left\| {r}^{\text{\emph{IV}}}_1(\tau) \right\|_{\ell^2(\mathbb{Z}^2;\mathbb{C}^2)} \leq   \\
        &C_2 \oldhat{h}(|\vec{\kappa}|;\ell) e^{[ D_1 - 2 D_2] \ell |\vec{\kappa}|} \left( \left\| f_2(\cdot,\gamma \tau) \right\|_{L^2(\mathbb{R}^2;\mathbb{C}^2)} + \gamma^2 \left\| f_2(\cdot,\gamma \tau) \right\|_{H^4(\mathbb{R}^2;\mathbb{C}^2)} \right).
    \end{split}
\end{equation}
    }
\end{lemma}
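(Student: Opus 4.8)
The plan is to compute $(H_{12}\Psi_2)^\sigma_{\vec R_1}$ exactly in momentum space, isolate the three reciprocal-lattice contributions for which the interlayer coupling is largest, and bound everything else using the exponential decay built into Assumption \ref{as:h_regularity}. First I would insert the wave-packet ansatz into \eqref{eq:block_H_offdiag}, write $h$ and $f_2^{\sigma'}$ through their Fourier transforms, and apply the Poisson summation formula to the sum over $\Lambda_2$; the resulting delta collapses the $\vec\xi$-integral, and using $|\Gamma| = \tfrac{\sqrt3}{2}$ (which cancels the prefactor in \eqref{eq:block_H_offdiag}) one arrives at the exact identity
\[
(H_{12}\Psi_2)^\sigma_{\vec R_1}(\tau) = \gamma\, e^{i\vec\kappa_1\cdot(\vec R_1+\vec\tau_1^\sigma)}\sum_{\sigma'}\sum_{\vec G'\in\Lambda_2^*}\frac{e^{-i\vec G'\cdot\vec\tau_2^{\sigma'}}}{(2\pi)^2}\int_{\mathbb R^2} \oldhat h\big(\vec\kappa_2+\vec G'+\gamma\vec p;\ell\big)\, e^{i(\vec\kappa_2+\vec G'-\vec\kappa_1+\gamma\vec p)\cdot(\vec R_1+\vec\tau_1^\sigma)}\,\oldhat f_2^{\sigma'}(\vec p,\gamma\tau)\, d\vec p .
\]

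Next I would identify the dominant terms. Since $\vec\kappa_2 = R(\theta/2)\vec\kappa$ and $\Lambda_2^* = R(\theta/2)\Lambda^*$, one has $|\vec\kappa_2+\vec G'| = |\vec\kappa+\vec G|$ for the corresponding $\vec G\in\Lambda^*$; because $\vec\kappa$ is a corner of the monolayer Brillouin zone, an elementary lattice computation shows that this quantity is minimised, with value $|\vec\kappa|$, by \emph{exactly three} vectors $\vec G'\in\{\vec 0,\ \vec b_{2,2},\ -\vec b_{2,1}\}$, and that $|\vec\kappa_2+\vec G'|\geq 2|\vec\kappa|$ for every other $\vec G'$. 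For these three $\vec G'$, choosing the companion $\vec G\in\{\vec 0,\ \vec b_{1,2},\ -\vec b_{1,1}\}\subset\Lambda_1^*$ so that $\vec\kappa_2+\vec G'-\vec\kappa_1 = -\mathfrak s_j + \vec G$ (this uses the definitions \eqref{eq:momentum_space_hops}), and using $e^{i\vec G\cdot\vec R_1}=1$, the phase $e^{i(\vec\kappa_2+\vec G'-\vec\kappa_1)\cdot(\vec R_1+\vec\tau_1^\sigma)}e^{-i\vec G'\cdot\vec\tau_2^{\sigma'}}$ collapses to exactly $e^{-i\mathfrak s_j\cdot(\vec R_1+\vec\tau_1^\sigma)}\,T_j^{\sigma\sigma'}$, with $T_j^{\sigma\sigma'}$ as in \eqref{eq:hopping_matrices}; this is the promised derivation of the moir\'e potential.

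To extract the leading line of \eqref{eq:off_diag_decomp} I replace $\oldhat h(\vec\kappa_2+\vec G'+\gamma\vec p;\ell)$ by $\oldhat h(|\vec\kappa|;\ell)$ on the three dominant terms, drop the $O(\gamma|\vec p|)$ phase corrections in $e^{i\gamma\vec p\cdot\vec\tau_1^\sigma}$ (and the analogous corrections inside $f_2$'s argument), and use that $\frac{1}{(2\pi)^2}\int e^{i\gamma\vec p\cdot(\vec R_1+\vec\tau_1^\sigma)}\oldhat f_2^{\sigma'}(\vec p,\gamma\tau)\,d\vec p = f_2^{\sigma'}(\gamma(\vec R_1+\vec\tau_1^\sigma),\gamma\tau)$. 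The error $r^{\text{III}}$ has two sources. The replacement of $\oldhat h$ is controlled by \eqref{eq:Lipschitz}: on the region $|\vec p|\leq |\vec\kappa|/(M\gamma)$ the segment joining $\vec\kappa_2+\vec G'$ and $\vec\kappa_2+\vec G'+\gamma\vec p$ stays at distance at least $(1-\tfrac1M)|\vec\kappa|$ from the origin, so the Lipschitz error carries the factor $e^{-D_2(1-1/M)\ell|\vec\kappa|}$, which, after rewriting $1 = \oldhat h(|\vec\kappa|;\ell)/\oldhat h(|\vec\kappa|;\ell)$ and using the lower bound in \eqref{eq:D_def}, is $\lesssim \oldhat h(|\vec\kappa|;\ell)e^{(D_1-D_2(1-1/M))\ell|\vec\kappa|}$, while Cauchy--Schwarz against $\langle\vec p\rangle^{-s}$ supplies $\gamma\|f_2\|_{H^1}$ (and, from the second-order Taylor remainders of the dropped phases, $\gamma^2\|f_2\|_{H^5}$). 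On the complementary region $|\vec p|> |\vec\kappa|/(M\gamma)$ I bound $\oldhat h$ by $C_2$ and use $\int_{|\vec p|>N}|\oldhat f_2|\lesssim N^{1-s}\|f_2\|_{H^s}$ with $N = |\vec\kappa|/(M\gamma)$, giving the $M^3\gamma^3\|f_2\|_{H^4}$ and $M^3\gamma^5\|f_2\|_{H^8}$ terms; the factor $e^{D_1\ell|\vec\kappa|}$ in \eqref{eq:r2_bound} again arises by rewriting the constant via $\oldhat h(|\vec\kappa|;\ell)$ and the lower bound. Collecting these yields \eqref{eq:r2_bound}.

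Finally, $r^{\text{IV}}$ collects all $\vec G'\in\Lambda_2^*$ with $|\vec\kappa_2+\vec G'|\geq 2|\vec\kappa|$. Bounding $\oldhat h(\vec\kappa_2+\vec G'+\gamma\vec p;\ell)\leq C_2 e^{-D_2\ell|\vec\kappa_2+\vec G'+\gamma\vec p|}$ and summing the geometric-type series over $\Lambda_2^*$ (which converges for all $\ell\geq\ell_1$) shows that the bulk of the $\vec p$-integral contributes $\lesssim e^{-2D_2\ell|\vec\kappa|}\|f_2\|_{L^2}$, while the tail $|\vec p|\gtrsim|\vec\kappa|/\gamma$ --- the only region where the $\gamma\vec p$-shift can drag the argument of $\oldhat h$ back near the origin --- contributes $\lesssim\gamma^2\|f_2\|_{H^4}$ by the same Sobolev decay; writing $e^{-2D_2\ell|\vec\kappa|} = e^{-D_1\ell|\vec\kappa|}e^{[D_1-2D_2]\ell|\vec\kappa|}\leq C_1^{-1}\oldhat h(|\vec\kappa|;\ell)e^{[D_1-2D_2]\ell|\vec\kappa|}$ (lower bound in \eqref{eq:D_def}) and using $\gamma\lesssim1$ produces \eqref{eq:r3_bound}. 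The analytic estimates here are individually routine --- Taylor remainders, Cauchy--Schwarz, and summing lattice geometric series; the genuine difficulty is the bookkeeping: verifying that precisely three $\Lambda_2^*$-vectors realise $|\vec\kappa_2+\vec G'| = |\vec\kappa|$ and that their phases reconstruct exactly the matrices $T_j$ and shifts $\mathfrak s_j$, and then juggling the four parameters $\gamma,\ell,M$ and the Sobolev order against the precise rates $D_1,D_2$ of Assumption \ref{as:h_regularity} so that every error term can be re-expressed with the prefactor $\oldhat h(|\vec\kappa|;\ell)$ demanded by \eqref{eq:r2_bound}--\eqref{eq:r3_bound}.
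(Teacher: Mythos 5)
Your outline follows essentially the same route as the paper's proof: an exact momentum-space representation of $H_{12}\Psi_2$, identification of the three vectors $\vec{G}_2\in\{\vec{0},\vec{b}_{2,2},-\vec{b}_{2,1}\}$ realizing $|\vec{\kappa}_2+\vec{G}_2|=|\vec{\kappa}|$ (all others giving $\geq 2|\vec{\kappa}|$), reconstruction of the matrices $T_j$ and shifts $\mathfrak{s}_j$ from the lattice phases, the local (Lipschitz) approximation of $\oldhat{h}$ with an $M$-dependent splitting of momentum space, and factoring out $\oldhat{h}(|\vec{\kappa}|;\ell)$ through the lower bound in \eqref{eq:D_def}. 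However, there is a genuine gap in how you pass from your $\vec{p}$-integrals to the claimed norms. The bounds \eqref{eq:r2_bound}--\eqref{eq:r3_bound} are $\ell^2(\mathbb{Z}^2;\mathbb{C}^2)$ norms over $\vec{R}_1\in\Lambda_1$, whereas every estimate you describe (Cauchy--Schwarz against $\langle\vec{p}\rangle^{-s}$, the tail bound $\int_{|\vec{p}|>N}|\oldhat{f}_2|\lesssim N^{1-s}\|f_2\|_{H^s}$) controls the momentum integral uniformly in $\vec{R}_1$, i.e.\ gives an $\ell^\infty$ bound; squaring and summing a constant over the infinite lattice diverges. The paper handles this by working with the Bloch transform, so that $\ell^2(\mathbb{Z}^2)$ becomes $L^2(\Gamma_1^*)$ by unitarity; the price is the sums over $\vec{G}_1$ (equivalently over $\tilde{\vec{G}}_1=\vec{G}_1'-\vec{G}_1$) in the squared norm, whose off-diagonal (aliasing) contributions are controlled by the half-plane splitting argument around \eqref{eq:regions_2}, and it is exactly these terms --- not ``Taylor remainders of dropped phases'' --- that produce the $H^5(\mathbb{R}^2)$ and $H^8(\mathbb{R}^2)$ norms in \eqref{eq:r2_bound} and the $\gamma^2\|f_2\|_{H^4}$ term in \eqref{eq:r3_bound}. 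Indeed, in your exact identity there are no phase corrections to drop: $\frac{1}{(2\pi)^2}\int e^{i\gamma\vec{p}\cdot(\vec{R}_1+\vec{\tau}_1^\sigma)}\oldhat{f}_2^{\sigma'}(\vec{p},\gamma\tau)\,\dee\vec{p}$ equals $f_2^{\sigma'}(\gamma(\vec{R}_1+\vec{\tau}_1^\sigma),\gamma\tau)$ exactly, which is precisely how the ansatz \eqref{eq:WP} is evaluated; that the higher Sobolev norms nevertheless must appear in the correct bound signals that the $\ell^2$ bookkeeping (Bloch unitarity plus the aliasing sums, or an equivalent sampling argument) still has to be supplied.

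A secondary, smaller issue: because you keep the exact $\oldhat{h}(\vec{\kappa}_2+\vec{G}_2+\gamma\vec{p};\ell)$ inside your $r^{\text{IV}}$ rather than the localized value $\oldhat{h}(\vec{\kappa}_2+\vec{G}_2;\ell)$ (the paper puts the difference, for \emph{all} $\vec{G}_2$, into $r^{\text{III}}$ via \eqref{eq:local_approx}), your bulk-region estimate for $r^{\text{IV}}$ only yields the exponent $e^{-2D_2\left(1-\frac{1}{M}\right)\ell|\vec{\kappa}|}$, i.e.\ an $M$-dependent rate, rather than the $M$-independent $e^{-2D_2\ell|\vec{\kappa}|}$ implicit in \eqref{eq:r3_bound}. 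Either adopt the paper's splitting, or track the weaker exponent through the balancing argument of Section \ref{sec:parameter_regime} and verify it still yields a positive constant $c$ under \eqref{eq:D1D2}.
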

\begin{proof}
    See Appendix \ref{sec:proof_of_off_diag_lemma}.
\end{proof}

\subsection{Identification of parameter regime where the BM model is dominant effective dynamics} \label{sec:parameter_regime}

In this section, we complete the proof of Lemma \ref{lem:residual_lemma}. We start by identifying the regime where the leading-order terms in \eqref{eq:left-hand_side}, \eqref{eq:diag_decomposition}, and \eqref{eq:off_diag_decomp} balance. This occurs when
\begin{equation} \label{eq:balance_00}
    \oldhat{h}\left(|\vec{\kappa}|;\ell\right) = \lambda_0 \gamma,
\end{equation}
for some fixed, positive constant $\lambda_0 > 0$. When \eqref{eq:balance_00} holds, the leading-order terms in \eqref{eq:left-hand_side}, \eqref{eq:diag_decomposition}, and \eqref{eq:off_diag_decomp} are all proportional to $\gamma$, and the residual can be written as
\begin{equation} \label{eq:res_decomp}
    \begin{split}
        r_{\vec{R}_i}^\sigma(\tau) &= - [ (i \de_\tau - H) \Psi ]^\sigma_{\vec{R}_i}(\tau)   \\
        &= \gamma \left[ \left. \gamma \left( - \left[i \de_T - H_{\text{BM}}^{0}\right] f \right)^\sigma_i(\vec{X},T) \right|_{\vec{X} = \gamma(\vec{R}_i + \vec{\tau}_i^\sigma),T = \gamma \tau} e^{i \vec{\kappa}_i \cdot ( \vec{R}_i + \vec{\tau}_i^\sigma )} \right] \\
        &\quad\quad\quad + (r^{\text{I}})^\sigma_{\vec{R}_i}(\tau) + (r^{\text{III}})^\sigma_{\vec{R}_i}(\tau) + (r^{\text{IV}})^\sigma_{\vec{R}_i}(\tau),
    \end{split}
\end{equation}
where
\begin{equation} \label{eq:HBM_theta}
    H^{0}_{\text{BM}} := \begin{pmatrix} \vec{\sigma}_{-\theta/2} \cdot (- i \nabla_{\vec{X}}) & \lambda_0 T_\theta\left(\frac{\vec{X}}{\gamma}\right) \\ \lambda_0 T^\dagger_\theta\left(\frac{\vec{X}}{\gamma}\right) & \vec{\sigma}_{\theta/2} \cdot (- i \nabla_{\vec{X}}) \end{pmatrix},
\end{equation}
 where we write $A^\dagger$ for the conjugate transpose of a matrix $A$, and $r^{\text{I}}, r^{\text{III}}$, and $r^{\text{IV}}$ are as in \eqref{eq:diag_decomposition_remainder} and \eqref{eq:off_diag_decomp}, respectively. Assuming for a moment that all Sobolev norms of the envelope functions $f_i^\sigma$ can be bounded independently of all parameters and $\tau$, the orders of these terms are
\begin{equation} \label{eq:pre_balance}
    \begin{split}
        &\| r^{\text{I}}(\tau) \|_\mathcal{H} \sim \gamma^2, \quad \| r^{\text{III}}(\tau) \|_{\mathcal{H}} \sim \gamma^2 e^{D_1 \ell |\vec{\kappa}|} \left( e^{ \left[ - D_2 \left( 1 - \frac{1}{M} \right) \right] \ell |\vec{\kappa}| } + \gamma^2 M^2 \right), \\
        &\| r^{\text{IV}}(\tau) \|_{\mathcal{H}} \sim \gamma e^{ \left[ D_1 - 2 D_2 \right] \ell |\vec{\kappa}| }.
    \end{split}
\end{equation}
We will choose $M > 0$ so that these errors balance. First, note that \eqref{eq:D_def}, together with \eqref{eq:balance_00}, implies that
\begin{equation} \label{eq:log_growth}
    \frac{ - \ln \gamma }{ D_1 |\vec{\kappa}| } \leq \ell \leq \frac{ - \ln \gamma }{ D_2 |\vec{\kappa}| }.
\end{equation}
It then follows easily that
\begin{equation}
    \| r^{\text{IV}}(\tau) \|_{\mathcal{H}} \lesssim \gamma^{1 + 2 \frac{D_2}{D_1} - \frac{D_1}{D_2}},
\end{equation}
and that
\begin{equation}
    \| r^{\text{III}}(\tau) \|_{\mathcal{H}} \lesssim \gamma^{2 - \frac{D_1}{D_2}} \left( \gamma^{\left(1 - \frac{1}{M}\right) \frac{D_2}{D_1}} + \gamma^2 M^2 \right).
\end{equation}
We can roughly balance these errors by taking
\begin{equation}
    M = \frac{1}{\gamma^{\frac{1}{2}}} \implies \| r^{\text{III}}(\tau) \|_{\mathcal{H}} \lesssim \gamma^{2 - \frac{D_1}{D_2}} \left( \gamma^{\left(1 - \gamma^{1/2}\right) \frac{D_2}{D_1}} + \gamma \right).
\end{equation}
Using \eqref{eq:D1D2}, we then have existence of constants $C > 0$ and $c > 0$ such that
\begin{equation} \label{eq:res_bound_1}
    \left\| \sum_{i \in \{\text{I},\text{III},\text{IV}\}} r^i(\tau) \right\|_\mathcal{H} \leq C \gamma^{1 + c} \| f(\cdot,\gamma \tau) \|_{H^8(\mathbb{R}^2;\mathbb{C}^4)},
\end{equation}
for all $\gamma$ sufficiently small.

The estimate \eqref{eq:res_bound_1}, together with \eqref{eq:res_decomp}, brings us close to the result of Lemma \ref{lem:residual_lemma}. However, the result, so far, is not sufficient for our purposes. Specifically, it is easy to see that, without further assumptions, part \eqref{eq:preservation_of_regularity} of Lemma \ref{lem:BM_sols} will not hold uniformly in $\gamma$ when $H_{\text{BM}}$ is replaced by $H_{\text{BM}}^0$ (as in \eqref{eq:HBM_theta}). The problem is that, for fixed $\theta$ and $\gamma \rightarrow 0$, the derivatives of the 
scaled moir\'e potential grow without bound (cf. \eqref{eq:moire_potential_theta_bound})
\begin{equation} \label{eq:bad_scaled_moire}
    \left| \pdf{X_a} T^{\sigma \sigma'}_\theta \left( \frac{\vec{X}}{\gamma} \right) \right| \leq C \frac{\theta}{\gamma}.
\end{equation}
To avoid this, we assume that
\begin{equation}
    \theta \leq \lambda_1 \gamma,
\end{equation}
introducing another positive constant $\lambda_1 > 0$.
But now, we can use Lemma \ref{lem:diag_theta_lemma} to drop the $\theta$-dependence of the intralayer terms in $H_{\text{BM}}^0$ \eqref{eq:HBM_theta}, so that we can now write the residual as 
\begin{equation} \label{eq:res_decomp_2}
    \begin{split}
        &r_{\vec{R}_i}^\sigma(\tau) =   \\ 
        &\gamma \left[ \left. \gamma \left( - \left[i \de_T - H_{\text{BM}} \right] f \right)^\sigma_i(\vec{X},T) \right|_{\vec{X} = \gamma(\vec{R}_i + \vec{\tau}_i^\sigma),T = \gamma \tau} e^{i \vec{\kappa}_i \cdot ( \vec{R}_i + \vec{\tau}_i^\sigma )} \right] + \mathfrak{r}^\sigma_{\vec{R}_i}(\tau),
    \end{split}
\end{equation}
where $H_{\text{BM}}$ is as in \eqref{eq:BM_H}, and
\begin{equation}
    \mathfrak{r}^\sigma_{\vec{R}_i}(\tau) := (r^{\text{I}})^\sigma_{\vec{R}_i}(\tau) + (r^{\text{II}})^\sigma_{\vec{R}_i}(\tau) + (r^{\text{III}})^\sigma_{\vec{R}_i}(\tau) + (r^{\text{IV}})^\sigma_{\vec{R}_i}(\tau).
\end{equation}
Combining \eqref{eq:res_bound_1} with \eqref{eq:diag_theta_decomposition_remainder}, we have existence of constants $C > 0$ and $c > 0$ such that
\begin{equation} \label{eq:final_res_bound}
    \| \mathfrak{r}^\sigma_i(\tau) \|_{\mathcal{H}} \leq C \gamma^{1 + c}
\end{equation}
for all sufficiently small $\gamma$, from which \eqref{eq:residual_bound} follows immediately, using \eqref{eq:D1D2}.

\begin{remark} \label{rem:worst_case}
It is worth noting that \eqref{eq:res_bound_1} and \eqref{eq:final_res_bound} are \emph{worst-case} estimates. In fact, for each of Examples \ref{ex:example_1}-\ref{ex:example_3}, the constants $D_1, D_2$ can be taken such that $D_1 = 1 + \epsilon$ and $D_2 = 1 - \epsilon$ for any $0 < \epsilon$. In such cases we have the significantly better bounds
\begin{equation} \label{eq:res_bound_1_better}
    \left\| \sum_{i \in \{\text{\emph{I}},\text{\emph{III}},\text{\emph{IV}}\}} r^i(\tau) \right\|_\mathcal{H} \leq C \gamma^{2 - c'} \| f(\cdot,\gamma \tau) \|_{H^8(\mathbb{R}^2;\mathbb{C}^4)},
\end{equation}
and
\begin{equation} \label{eq:final_res_bound_better}
    \| \mathfrak{r}^\sigma_i(\tau) \|_{\mathcal{H}} \leq C \gamma^{2 - c'}
\end{equation}
for any $c' > 0$. 
\end{remark}

\section{Bistritzer-MacDonald's formal derivation of the first magic angle} 
\label{sec:magic_angles}

In this section, we review Bistritzer-MacDonald's formal, analytical, derivation of the first ``magic angle'' of TBG \cite{Bistritzer2011}. We start by recalling the Fourier transform pair 
\begin{equation}
    \begin{split}
        &\oldhat{f}(\vec{\xi}) = \left( \mathcal{F} f \right)(\vec{\xi}) := \inty{\mathbb{R}^2}{}{ e^{- i \vec{\xi} \cdot \vec{r}} f(\vec{r}) }{\vec{r}} \\
        &f(\vec{r}) = \left( \mathcal{F}^{-1} \oldhat{f} \right)(\vec{r}) := \frac{1}{(2 \pi)^2} \inty{\mathbb{R}^2}{}{ e^{i \vec{\xi} \cdot \vec{r}} \oldhat{f}(\vec{\xi}) }{\vec{\xi}},
    \end{split}
\end{equation}
acting componentwise on vector functions. Recall the BM Hamiltonian, in physical units,
\begin{equation} \label{eq:physical_BM_2}
    \mathfrak{H}_{\text{BM}} := \begin{pmatrix} v \vec{\sigma} \cdot (- i \nabla) & w \sum_{n = 1}^3 T_n e^{- i \vec{s}_n \cdot \vec{r}} \\ w \sum_{n = 1}^3 T_n^\dagger e^{i \vec{s}_n \cdot \vec{r}} & v \vec{\sigma}\cdot (- i \nabla) \end{pmatrix},
\end{equation}
where $v := \hbar v_D$, and $w := \frac{\oldhat{\mathfrak{h}}(|\vec{K}|;L)}{|\Gamma|}$. We can write $\mathfrak{H}_{\text{BM}}$ in the Fourier domain, as
\begin{equation}
    \oldhat{\mathfrak{H}}_{\text{BM}} := \mathcal{F} \mathfrak{H}_{\text{BM}} \mathcal{F}^{-1} = \begin{pmatrix} v \vec{\sigma} \cdot \vec{\xi} & w \sum_{j = 1}^3 T_j S_{- \vec{s}_j} \\ w \sum_{j = 1}^3 T_j^\dagger S_{\vec{s}_j} & v \vec{\sigma} \cdot \vec{\xi} \end{pmatrix},
\end{equation}
where $S_{\vec{w}} \oldhat{f}(\vec{\xi}) := \oldhat{f}(\vec{\xi} - \vec{w})$ denotes the shift operator. The idea is to consider a drastically truncated version of \eqref{eq:final_effective_model}, defined by
\begin{equation} \label{eq:truncated_H_BM}
    \oldhat{\mathfrak{H}}_{\text{BM,trunc}} := \begin{pmatrix} v \vec{\sigma} \cdot \vec{\xi} & w T_1 & w T_2 & w T_3 \\
        w T_1^\dagger & v \vec{\sigma} \cdot \left(\vec{\xi} - \vec{s}_1\right)  & 0 & 0 \\
        w T_2^\dagger & 0 & v \vec{\sigma} \cdot \left(\vec{\xi} - \vec{s}_2\right) & 0 \\ 
        w T_3^\dagger & 0 & 0 & v \vec{\sigma} \cdot \left(\vec{\xi} - \vec{s}_3\right) \end{pmatrix}.
\end{equation}
The operator \eqref{eq:truncated_H_BM} can be understood as $\oldhat{\mathfrak{H}}_{\text{BM}}$ projected onto a basis consisting of modes with momentum $\vec{\xi}$ on layer $1$, together with modes at momenta $\vec{\xi} - \vec{s}_j, j \in \{1,2,3\}$, on layer $2$. 

We start by considering the limit without interlayer coupling, i.e., $w = 0$. In this case, we have
\begin{equation} \label{eq:truncated_H_BM_2}
    \oldhat{\mathfrak{H}}_{\text{BM,trunc}} = \diag \left( v \vec{\sigma} \cdot \vec{\xi}, v \vec{\sigma} \cdot \left(\vec{\xi} - \vec{s}_1\right), v \vec{\sigma} \cdot \left(\vec{\xi} - \vec{s}_2 \right), v \vec{\sigma} \cdot \left(\vec{\xi} - \vec{s}_3\right) \right),
\end{equation}
so that the spectrum as a function of $\vec{\xi}$ is the union of cones centered at $\vec{s}_j$, $j \in \{1,2,3\}$, i.e.,
\begin{equation}
    \text{Spec } \oldhat{\mathfrak{H}}_{\text{BM,trunc}} = \left\{ \pm |\vec{\xi}| \right\} \bigcup_{j = 1}^3 \left\{ \pm \left|\vec{\xi} - \vec{s}_j \right| \right\}.
\end{equation}
In particular, $0$ is exactly two-fold degenerate when $\vec{\xi} = \vec{0}$.

We now turn on interlayer coupling, setting $w > 0$. We will first show that the two-fold degeneracy of the eigenvalue $0$ at $\vec{\xi} = \vec{0}$ is unbroken by the perturbation by explicitly constructing the $0$ eigenspace. Let $\Phi := \left( \Phi_0, \Phi_1, \Phi_2, \Phi_3 \right)^\top$. The equation 
\begin{equation} \label{eq:Phi_eq}
    \oldhat{\mathfrak{H}}_{\text{BM,trunc}} \Phi = 0
\end{equation}
is a system of four equations. Three of these equations have the solutions  
\begin{equation} \label{eq:Phi_j}
    \Phi_j = w \left( v \vec{\sigma} \cdot \vec{s}_j \right)^{-1} T_j^\dagger \Phi_0, \quad j \in \{1,2,3\}.
\end{equation}
It is straightforward to check that
\begin{equation}
    \left( v \vec{\sigma} \cdot \vec{s}_j \right)^{-1} = \frac{1}{ v |\Delta \vec{K}|^2 } \vec{\sigma} \cdot \vec{s}_j.
\end{equation}
Remarkably, substituting \eqref{eq:Phi_j} into the fourth equation of \eqref{eq:Phi_eq} yields a trivial equation for $\Phi_0$, because direct calculation shows that
\begin{equation}
    \sum_{j = 1}^3 T_j \vec{\sigma} \cdot \vec{s}_j T_j^\dagger = 0. 
\end{equation}
We conclude that $0$ is a two-fold degenerate eigenvalue of $\oldhat{\mathfrak{H}}_{\text{BM,trunc}}$. 

We now study how the two-fold degenerate $0$-eigenspace of $\oldhat{\mathfrak{H}}_{\text{BM,trunc}}$ is perturbed by turning on non-zero $\vec{\xi}$. We start by fixing an (unnormalized) basis of this degenerate eigenspace as
\begin{equation}
    \Phi^i := \left( \vec{e}_i , \frac{w}{v |\Delta \vec{K}|^2 } \left( \vec{\sigma} \cdot \vec{s}_1 \right) T_1^\dagger \vec{e}_i, \frac{w}{ v |\Delta \vec{K}|^2 } \left( \vec{\sigma} \cdot \vec{s}_2 \right) T_2^\dagger \vec{e}_i, \frac{w}{ v |\Delta \vec{K}|^2 } \left( \vec{\sigma} \cdot \vec{s}_3 \right) T_3^\dagger \vec{e}_i \right)^\top,
\end{equation}
for $i \in \{1,2\}$. Direct calculation shows that
\begin{equation} \label{eq:our_alpha}
    \left| \Phi^i \right|^2 = 1 + 6 \alpha^2, \quad i \in \{1,2\}, \quad \alpha := \frac{ w }{ v |\Delta \vec{K}| },
\end{equation}
so that a normalized basis of the degenerate eigenspace is given by $\Phi^{i'} := \frac{\Phi^i}{\sqrt{1 + 6 \alpha^2}}$, $i \in \{1,2\}$. We now calculate the projection of $\oldhat{\mathfrak{H}}_{\text{BM,trunc}}$ onto the degenerate eigenspace as
\begin{equation} \label{eq:effective_H}
    \begin{split}
        &\begin{pmatrix} \ip{ \Phi^{1'} }{ \oldhat{\mathfrak{H}}_{\text{BM,trunc}} \Phi^{1'} } & \ip{ \Phi^{1'} }{ \oldhat{\mathfrak{H}}_{\text{BM,trunc}} \Phi^{2'} }   \\
        \ip{ \Phi^{2'} }{ \oldhat{\mathfrak{H}}_{\text{BM,trunc}} \Phi^{1'} } & \ip{ \Phi^{2'} }{ \oldhat{\mathfrak{H}}_{\text{BM,trunc}} \Phi^{2'} } \end{pmatrix} \\
        &= \frac{v}{1 + 6 \alpha^2} \left( \vec{\sigma} \cdot \vec{\xi} + \frac{ \alpha^2 }{ |\Delta \vec{K}|^2 } \sum_{j = 1}^3 T_j ( \vec{\sigma} \cdot \vec{s}_j ) ( \vec{\sigma} \cdot \vec{\xi} ) ( \vec{\sigma} \cdot \vec{s}_j ) T_j^\dagger \right).
    \end{split}
\end{equation}
Further direct calculation shows that \eqref{eq:effective_H} can be simplified to the form of the monolayer Dirac Hamiltonian \eqref{eq:monolayer_Dirac}, with a re-scaled Fermi velocity
\begin{equation}
    = \rho v \vec{\sigma} \cdot \vec{\xi}, \quad \rho := \frac{1 - 3 \alpha^2}{1 + 6 \alpha^2}.
\end{equation}
Bistritzer-MacDonald called $\theta$ such that $\rho$ vanishes a ``magic angle.'' Using the physical values \cite{Bistritzer2011}
\begin{equation}
    w \approx 0.11 \text{ eV}, \quad \hbar v_D \approx 6.6 \text{ eV\AA}, \quad a \approx 2.5 \text{ \AA},
\end{equation}
and the approximation
\begin{equation}
    | \Delta \vec{K} | = 2 |\vec{K}| \sin\left(\frac{\theta}{2}\right) \approx |\vec{K}| \theta = \frac{4 \pi}{3 a} \theta,
\end{equation}
we have that $\theta$ is a magic angle if
\begin{equation} \label{eq:theta_val}
    \alpha = \frac{1}{\sqrt{3}} \implies \theta = \frac{ 3 \sqrt{3} (0.11) (2.5) }{ 4 \pi (6.6) } \approx .98^\circ.
\end{equation}
Despite the crudeness of the truncation \eqref{eq:truncated_H_BM}, numerical calculations find that the moir\'e bands of \eqref{eq:BM_model_0} are indeed flat at the moir\'e $\vec{K}$ point at $\theta \approx 1.05^\circ$ \cite{Bistritzer2011}, remarkably close to the value \eqref{eq:theta_val}.

\printbibliography

\appendix

\section{Derivation of equation \eqref{eq:off_diagonal_Bloch}} \label{sec:off_diagonal_Bloch}

By definition,
\begin{equation} \label{eq:full_off_diag}
    \begin{split}
        &\left( \mathcal{G}_1 H_{12} \mathcal{G}_2^{-1}\tilde{\psi_2} \right)^\sigma(\vec{k}_1) = \\
        &\frac{1}{|\Gamma^*|}\sum_{\vec{R}_1 \in \Lambda_1} e^{- i \vec{k}_1 \cdot ( \vec{R}_1 + \vec{\tau}_1^\sigma )} \sum_{\vec{R}_2 \in \Lambda_2} \sum_{\sigma' \in \{A,B\}} \mathfrak{h}\left( \vec{R}_{12} + \vec{\tau}_{12}^{\sigma \sigma'} ; L \right) \inty{\Gamma_2^*}{}{ e^{i \vec{k}_2 \cdot (\vec{R}_2 + \vec{\tau}_2^{\sigma'})} \tilde{\psi}^{\sigma'}_2(\vec{k}_2) }{\vec{k}_2}.
    \end{split}
\end{equation}
Writing $\mathfrak{h}$ as an inverse Fourier transform, and exchanging the order of summation and integration, expression \eqref{eq:full_off_diag} simplifies to
\begin{equation}
    \frac{1}{(2 \pi)^2 |\Gamma^*|} \sum_{\sigma' \in \{A,B\}} \inty{\Gamma_2^*}{}{ \inty{\field{R}^2}{}{ \sum_{\vec{R}_1 \in \Lambda_1} \sum_{\vec{R}_2 \in \Lambda_2} e^{- i \vec{k}_1 \cdot ( \vec{R}_1 + \vec{\tau}_1^\sigma )} e^{i \vec{\xi} \cdot ( \vec{R}_{12} + \vec{\tau}_{12}^{\sigma \sigma'} )} e^{i \vec{k}_2 \cdot (\vec{R}_2 + \vec{\tau}_2^{\sigma'})} \oldhat{\mathfrak{h}}(\vec{\xi}; L) }{\vec{\xi}} \tilde{\psi}^{\sigma'}_2(\vec{k}_2) }{\vec{k}_2}.
\end{equation}
Using the Poisson summation formula
\begin{equation}
    \sum_{\vec{R}_i \in \Lambda_i} e^{i \vec{k}_i \cdot \vec{R}_i} = |\Gamma_i^*| \sum_{\vec{G}_i \in \Lambda_i^*} \delta(\vec{k}_i - \vec{G}_i), \quad i \in \{1,2\},
\end{equation}
we have
\begin{equation} \label{eq:partial_off_diag}
    \begin{split}
        &\frac{1}{|\Gamma^*|}\inty{\field{R}^2}{}{ \sum_{\vec{R}_1 \in \Lambda_1} \sum_{\vec{R}_2 \in \Lambda_2} e^{- i \vec{k}_1 \cdot ( \vec{R}_1 + \vec{\tau}_1^\sigma )} e^{i \vec{\xi} \cdot ( \vec{R}_{12} + \vec{\tau}_{12}^{\sigma \sigma'} )} e^{i \vec{k}_2 \cdot (\vec{R}_2 + \vec{\tau}_2^{\sigma'})} \oldhat{\mathfrak{h}}(\vec{\xi};L) }{\vec{\xi}} \\
        &= |\Gamma^*| \inty{\field{R}^2}{}{ \sum_{\vec{G}_1 \in \Lambda_1^*} \sum_{\vec{G}_2 \in \Lambda_2^*} e^{i (\vec{\xi} - \vec{k}_1) \cdot \vec{\tau}_1^\sigma} e^{- i (\vec{\xi} - \vec{k}_2) \cdot \vec{\tau}_2^{\sigma'}} \delta( \vec{\xi} - \vec{k}_1 - \vec{G}_1 ) \delta( \vec{\xi} - \vec{k}_2 - \vec{G}_2 ) \oldhat{\mathfrak{h}}(\vec{\xi};L) }{\vec{\xi}}.
    \end{split}
\end{equation}
Exchanging the order of summation and integration again and carrying out the integration, expression \eqref{eq:partial_off_diag} simplifies to
\begin{equation}
    |\Gamma^*| \sum_{\vec{G}_2 \in \Lambda_1^*} \sum_{\vec{G}_2 \in \Lambda_2^*} e^{i \left[ \vec{G}_1 \cdot \vec{\tau}_1^\sigma - \vec{G}_2 \cdot \vec{\tau}_2^{\sigma'}\right]} \oldhat{\mathfrak{h}}(\vec{k}_1 + \vec{G}_1;L) \delta(\vec{k}_1 + \vec{G}_1 - \vec{k}_2 - \vec{G}_2),
\end{equation}
from which \eqref{eq:off_diagonal_Bloch} immediately follows, using $\frac{|\Gamma^*|}{(2 \pi)^2} = \frac{1}{|\Gamma|}$.



\section{Proof of Lemma \ref{lem:diag_lemma}} \label{sec:proof_of_diag_lemma}

We organize the proof of Lemma \ref{lem:diag_lemma} into sections. First, we will review formulas for the wave-packet {\it ansatz} \eqref{eq:WP} in Section \ref{sec:WP_momentum}. Then, we will separate the momentum space expression for $H_{11} \Psi_1$ into ``leading'' and ``higher-order'' terms in Section \ref{sec:separate_leading_higherorder}. We will then estimate the higher-order terms in Section \ref{sec:estimate_residual} and transform the leading terms back to real space in Section \ref{sec:transform_back}.

\subsection{wave-packet {\it ansatz} in momentum space} \label{sec:WP_momentum}

In this section we record formulas for the wave-packet {\it ansatz} in momentum space, which will be important in the remainder of this appendix as well as in Appendices \ref{sec:proof_of_diag_theta_lemma} and \ref{sec:proof_of_off_diag_lemma}
\begin{equation}
    \tilde{\psi}_0 = \begin{pmatrix} \tilde{\psi}_{1,0} \\ \tilde{\psi}_{2,0} \end{pmatrix}, \ ( \tilde{\psi}_{i,0} )^\sigma(\vec{k}_i) = \frac{\gamma^{-1} |\Gamma^*|^{1/2}}{(2 \pi)^2} \sum_{\vec{G}_i \in \Lambda^*_i} \left. e^{- i \vec{\tau}_i^\sigma \cdot \vec{G}_i} \oldhat{f}^\sigma_{i,0}\left( \vec{\mathcal{K}}_i \right) \right|_{\vec{\mathcal{K}}_i = \frac{\vec{k}_i - \vec{\kappa}_i - \vec{G}_i}{\gamma} }, \ i \in \{1,2\}.
\end{equation}
\begin{equation} \label{eq:WP_momentum}
    \begin{split}
        &\tilde{\Psi}(\tau) = \begin{pmatrix} \tilde{\Psi}_1(\tau) \\ \tilde{\Psi}_2(\tau) \end{pmatrix}, \\
        &( \tilde{\Psi}_i )^\sigma(\vec{k}_i,\tau) = \frac{\gamma^{-1} |\Gamma^*|^{1/2}}{(2 \pi)^2} \sum_{\vec{G}_i \in \Lambda^*_i} \left. e^{- i \vec{\tau}_i^\sigma \cdot \vec{G}_i} \oldhat{f}^\sigma_i\left( \vec{\mathcal{K}}_i, T \right) \right|_{\vec{\mathcal{K}}_i = \frac{\vec{k}_i - \vec{\kappa}_i - \vec{G}_i}{\gamma}, T = \gamma \tau}, \quad i \in \{1,2\}.
    \end{split}
\end{equation}

With respect to $\vec{q}_i = \vec{k}_i - \vec{\kappa}_i$, the initial condition and {\it ansatz} take the simpler forms
\begin{equation} \label{eq:initial_data_k}
    ( \tilde{\psi}_{i,0} )^\sigma(\vec{\kappa}_i + \vec{q}_i) = \frac{\gamma^{-1} |\Gamma^*|^{1/2}}{(2 \pi)^2}  \sum_{\vec{G}_i \in \Lambda^*_i} e^{- i \vec{\tau}_i^\sigma \cdot \vec{G}_i} \left. \oldhat{f}^\sigma_{i,0}( \vec{\mathcal{Q}}_i ) \right|_{\vec{\mathcal{Q}}_i = \frac{ \vec{q}_i - \vec{G}_i }{ \gamma }} , \quad i \in \{1,2\},
\end{equation}
\begin{equation} \label{eq:ansatz_k}
    ( \tilde{\Psi}_i )^\sigma(\vec{\kappa}_i + \vec{q}_i,\tau) = \frac{\gamma^{-1} |\Gamma^*|^{1/2}}{(2 \pi)^2} \sum_{\vec{G}_i \in \Lambda^*_i} e^{- i \vec{\tau}_i^\sigma \cdot \vec{G}_i} \left. \oldhat{f}^\sigma_i( \vec{\mathcal{Q}}_i , T ) \right|_{\vec{\mathcal{Q}}_i = \frac{\vec{q}_i - \vec{G}_i}{\gamma}, T = \gamma \tau}
\end{equation}
where $\vec{q}_i$ varies over the translated Brillouin zones $\Gamma^*_i - \vec{\kappa}_i, i \in \{1,2\}$. Note that we will generally simply evaluate the multiscale variables $\vec{\mathcal{K}}_i$ and $\vec{\mathcal{Q}}_i$ when there is no danger of confusion. 
\begin{remark}
    Note that we work with Fourier transforms of the wave-packet amplitudes, rather than with their continuum Bloch transforms. Although this leads to formulas involving sums over reciprocal lattice vectors, we find this less cumbersome than introducing a continuum Bloch transform with respect to reciprocal lattice vectors scaled by $\gamma$.  
\end{remark}

\subsection{Separation of terms into leading and higher order in momentum space} \label{sec:separate_leading_higherorder}

Since simplifying $H_{22} \Psi_2$ is so similar, we concentrate on simplifying $H_{11} \Psi_1$. This is equivalent to simplifying the momentum space expression 
\begin{equation} \label{eq:simplify_diagonal}
    \left( \mathcal{G}_1 H_{11} \mathcal{G}_1^{-1} \tilde{\Psi}_1 \right)(\vec{\kappa}_1 + \vec{q}_1) = H_1(\vec{\kappa}_1 + \vec{q}_1) \tilde{\Psi}_1(\vec{\kappa}_1 + \vec{q}_1).
\end{equation}
    The idea is to Taylor expand $H_1(\vec{\kappa}_1 + \vec{q}_1)$ in $\vec{q}_1$ and then transform back to real space. First recall that, using \eqref{eq:H_translation}, we have
\begin{equation}
    H^{\sigma \sigma'}_1(\vec{\kappa}_1 + \vec{q}_1) e^{- i \vec{G}_1 \cdot \vec{\tau}_1^{\sigma'}} = e^{- i \vec{G}_1 \cdot \vec{\tau}_1^\sigma} H_1^{\sigma \sigma'}(\vec{\kappa}_1 + (\vec{q}_1 - \vec{G}_1)) \quad \sigma, \sigma' \in \{ A, B \}.
\end{equation}
It follows that
\begin{equation} \label{eq:simplify_diagonal_2}
    \begin{split}
        &\left( H_1(\vec{\kappa}_1 + \vec{q}_1) \tilde{\Psi}_1(\vec{\kappa}_1 + \vec{q}_1) \right)^\sigma = \\
        &\frac{\gamma^{-1} |\Gamma^*|^{1/2}}{(2 \pi)^2} \sum_{\vec{G}_1 \in \Lambda_1^*} \sum_{\sigma' \in \{A,B\}} e^{- i \vec{G}_1 \cdot \vec{\tau}^\sigma_1} H^{\sigma \sigma'}_1(\vec{\kappa}_1 + (\vec{q}_1 - \vec{G}_1)) \oldhat{f}^{\sigma'}_1\left(\frac{\vec{q}_1 - \vec{G}_1}{\gamma},\gamma \tau\right).
    \end{split}
\end{equation}
Now, substituting \eqref{eq:monolayer_Dirac_twisted} (and its obvious generalization to non-zero $\vec{G}_1$) into \eqref{eq:simplify_diagonal_2}, we have
\begin{equation} \label{eq:diag}
    \begin{split}
        &\left( \mathcal{G}_1 H_{11} \mathcal{G}_1^{-1} \tilde{\Psi}_1 \right)^{\sigma}(\vec{\kappa}_1 + \vec{q}_1) =  \\
        &\frac{\gamma^{-1} |\Gamma^*|^{1/2}}{(2 \pi)^2} \sum_{\vec{G}_1 \in \Lambda_1^*} \sum_{\sigma' \in \{A,B\}} e^{- i \vec{G}_1 \cdot \vec{\tau}_1^\sigma} \gamma \left( \vec{\sigma}_{-\theta/2} \cdot \left(\frac{\vec{q}_1 - \vec{G}_1}{\gamma}\right) \right)^{\sigma \sigma'} \oldhat{f}^{\sigma'}_1 \left(\frac{\vec{q}_1 - \vec{G}_1}{\gamma},\gamma \tau\right) \\
        &+ (\tilde{r}^{\text{I}}_1)^\sigma(\vec{\kappa}_1 + \vec{q}_1,\tau), 
    \end{split}
\end{equation}
where $\tilde{r}^{\text{I}}_1$ is exactly
\begin{equation} \label{eq:r_1}
    \begin{split}
        &( \tilde{r}^{\text{I}}_1 )^\sigma(\vec{\kappa}_1 + \vec{q}_1,\tau) = \frac{\gamma^{-1} |\Gamma^*|^{1/2}}{(2 \pi)^2} \times   \\
        &\sum_{\vec{G}_1 \in \Lambda^*_1} \sum_{\sigma' \in \{A,B\}} e^{- i \vec{G}_1 \cdot \vec{\tau}^\sigma_1} \left( H_1\left(\vec{\kappa}_1 + \gamma \left( \frac{\vec{q}_1 - \vec{G}_1}{\gamma} \right) \right) - \gamma \vec{\sigma}_{-\theta/2} \cdot \left( \frac{\vec{q}_1 - \vec{G}_1}{\gamma} \right) \right)^{\sigma \sigma'} \oldhat{f}^{\sigma'}_1\left(\frac{\vec{q}_1 - \vec{G}_1}{\gamma},\gamma \tau\right).
    \end{split}
\end{equation}
We then define
\begin{equation}
    ( r^{\text{I}}_{\vec{R}_1} )^\sigma(\tau) := \left[ \mathcal{G}_1^{-1} \tilde{r}^{\text{I}}_1 \right]_{\vec{R}_1}^\sigma(\tau). 
\end{equation}

\subsection{Estimation of ``higher-order'' terms} \label{sec:estimate_residual} 

{\color{black} We now prove \eqref{eq:diag_decomposition_remainder}. Note, first, that, for any $\vec{G}_1 \in \Lambda_1^*$, we have
\begin{equation} \label{eq:first_est}
    \begin{split}
        &\left| \sum_{\sigma' \in \{A,B\}} \left( H_1\left( \vec{\kappa}_1 + \gamma \left( \frac{\vec{q}_1 - \vec{G}_1}{\gamma} \right) \right) - \gamma \vec{\sigma}_{-\theta/2} \cdot \left( \frac{\vec{q}_1 - \vec{G}_1}{\gamma} \right) \right)^{\sigma \sigma'} \oldhat{f}_1^{\sigma'}\left( \frac{\vec{q}_1 - \vec{G}_1}{\gamma} , \gamma \tau \right) \right|  \\
        &\leq C \gamma^2 \sup_{\sigma' \in \{A,B\}} \left| \frac{\vec{q}_1 - \vec{G}_1}{\gamma} \right|^2 \left| \oldhat{f}_1^{\sigma'}\left( \frac{\vec{q}_1 - \vec{G}_1}{\gamma} , \gamma \tau \right) \right|,  \\
    \end{split}
\end{equation}
where $C > 0$ is two times the supremum of the second derivatives of $H_1$, which are uniformly bounded because $H_1$ is smooth and periodic in $\vec{q}_1$. We can now estimate using the triangle inequality
\begin{equation}
    \begin{split}
        &\left\| ( r^{\text{I}}_1 )^\sigma(\tau) \right\|^2_{\ell^2(\mathbb{Z}^2)} = \left\| ( \tilde{r}^{\text{I}}_1 )^\sigma(\cdot,\tau) \right\|^2_{L^2(\Gamma_1^*)} \leq \frac{\gamma^{-2} |\Gamma^*|}{(2 \pi)^4} \int_{\Gamma_1^* - \vec{\kappa}_1} \\
        &\sum_{\vec{G}_1 \in \Lambda_1^*} \left| \sum_{\sigma' \in \{A,B\}} \left( H_1\left( \vec{\kappa}_1 + \gamma \left( \frac{\vec{q}_1 - \vec{G}_1}{\gamma} \right) \right) - \gamma \vec{\sigma}_{-\theta/2} \cdot \left( \frac{\vec{q}_1 - \vec{G}_1}{\gamma} \right) \right)^{\sigma \sigma'} \oldhat{f}_1^{\sigma'}\left( \frac{\vec{q}_1 - \vec{G}_1}{\gamma} , \gamma \tau \right) \right| \\
        &\sum_{\vec{G}_1' \in \Lambda_1^*} \left| \sum_{\sigma'' \in \{A,B\}} \left( H_1\left( \vec{\kappa}_1 + \gamma \left( \frac{\vec{q}_1 - \vec{G}_1'}{\gamma} \right) \right) - \gamma \vec{\sigma}_{-\theta/2} \cdot \left( \frac{\vec{q}_1 - \vec{G}_1'}{\gamma} \right) \right)^{\sigma \sigma''} \oldhat{f}_1^{\sigma''}\left( \frac{\vec{q}_1 - \vec{G}_1'}{\gamma} , \gamma \tau \right) \right| \; \text{d}\vec{q}_1,
    \end{split}
\end{equation}
and then \eqref{eq:first_est},
\begin{equation}
    \begin{split}
        &\left\| ( \tilde{r}^{\text{I}}_1 )^\sigma(\cdot,\tau) \right\|^2_{L^2(\Gamma_1^*)} \leq C \gamma^{2} \sup_{\sigma' \in \{A,B\}} \sup_{\sigma'' \in \{A,B\}} \int_{\Gamma_1^* - \vec{\kappa}_1} \\
        &\sum_{\vec{G}_1 \in \Lambda_1^*} \sum_{\vec{G}_1' \in \Lambda_1^*} \left| \frac{\vec{q}_1 - \vec{G}_1}{\gamma} \right|^2 \left| \oldhat{f}_1^{\sigma'}\left( \frac{\vec{q}_1 - \vec{G}_1}{\gamma} , \gamma \tau \right) \right| \left| \frac{\vec{q}_1 - \vec{G}_1'}{\gamma} \right|^2 \left| \oldhat{f}_1^{\sigma''}\left( \frac{\vec{q}_1 - \vec{G}_1'}{\gamma} , \gamma \tau \right) \right| \; \text{d}\vec{q}_1.
    \end{split}
\end{equation}
Writing $\vec{q}_1 - \vec{G}_1' = \vec{q}_1 - \vec{G}_1 + \vec{G}_1 - \vec{G}_1'$ and then changing variables in the sum over $\vec{G}_1'$ to $\tilde{\vec{G}}_1 = \vec{G}_1' - \vec{G}_1$, we obtain
\begin{equation}
    \begin{split}
        &\left\| ( \tilde{r}^{\text{I}}_1 )^\sigma(\cdot,\tau) \right\|^2_{L^2(\Gamma_1^*)} \leq C \gamma^{2} \sup_{\sigma' \in \{A,B\}} \sup_{\sigma'' \in \{A,B\}} \int_{\Gamma_1^* - \vec{\kappa}_1} \\
        &\sum_{\vec{G}_1 \in \Lambda_1^*} \sum_{\tilde{\vec{G}}_1 \in \Lambda_1^*} \left| \frac{\vec{q}_1 - \vec{G}_1}{\gamma} \right|^2 \left| \oldhat{f}_1^{\sigma'}\left( \frac{\vec{q}_1 - \vec{G}_1}{\gamma} , \gamma \tau \right) \right| \left| \frac{\vec{q}_1 - \vec{G}_1 - \tilde{\vec{G}}_1}{\gamma} \right|^2 \left| \oldhat{f}_1^{\sigma''}\left( \frac{\vec{q}_1 - \vec{G}_1 - \tilde{\vec{G}}_1}{\gamma} , \gamma \tau \right) \right| \; \text{d}\vec{q}_1.
    \end{split}
\end{equation}
Now we can replace the sum over $\vec{G}_1$ with an integral over all of $\mathbb{R}^2$ as 
\begin{equation}
    \begin{split}
        &\left\| ( \tilde{r}^{\text{I}}_1 )^\sigma(\cdot,\tau) \right\|^2_{L^2(\Gamma_1^*)} \leq C \gamma^{2} \sup_{\sigma' \in \{A,B\}} \sup_{\sigma'' \in \{A,B\}} \int_{\mathbb{R}^2} \\
        &\sum_{\tilde{\vec{G}}_1 \in \Lambda_1^*} \left| \frac{\vec{q}_1}{\gamma} \right|^2 \left| \oldhat{f}_1^{\sigma'}\left( \frac{\vec{q}_1}{\gamma} , \gamma \tau \right) \right| \left| \frac{\vec{q}_1 - \tilde{\vec{G}}_1}{\gamma} \right|^2 \left| \oldhat{f}_1^{\sigma''}\left( \frac{\vec{q}_1 - \tilde{\vec{G}}_1}{\gamma} , \gamma \tau \right) \right| \; \text{d}\vec{q}_1.
    \end{split}
\end{equation}
Changing variables in the integral gives
\begin{equation} \label{eq:all_terms}
    \begin{split}
        &\left\| ( \tilde{r}^{\text{I}}_1 )^\sigma(\cdot,\tau) \right\|^2_{L^2(\Gamma_1^*)} \leq C \gamma^{4} \sup_{\sigma' \in \{A,B\}} \sup_{\sigma'' \in \{A,B\}} \int_{\mathbb{R}^2} \\
        &\sum_{\tilde{\vec{G}}_1 \in \Lambda_1^*} \left| \vec{q}_1 \right|^2 \left| \oldhat{f}_1^{\sigma'}\left( \vec{q}_1 , \gamma \tau \right) \right| \left| \vec{q}_1 - \frac{\tilde{\vec{G}}_1}{\gamma} \right|^2 \left| \oldhat{f}_1^{\sigma''}\left( \vec{q}_1 - \frac{\tilde{\vec{G}}_1}{\gamma} , \gamma \tau \right) \right| \; \text{d}\vec{q}_1.
    \end{split}
\end{equation}
We now split the sum into the $\tilde{\vec{G}}_1 = \vec{0}$ and the $\tilde{\vec{G}}_1 \neq \vec{0}$ terms
\begin{equation} 
    \begin{split}
        &\left\| ( \tilde{r}^{\text{I}}_1 )^\sigma(\cdot,\tau) \right\|^2_{L^2(\Gamma_1^*)} \leq C \gamma^{4} \sup_{\sigma' \in \{A,B\}} \sup_{\sigma'' \in \{A,B\}} \int_{\mathbb{R}^2} \\
        &\sum_{\tilde{\vec{G}}_1 \in \Lambda_1^*, \tilde{\vec{G}}_1 \neq \vec{0}} \left| \vec{q}_1 \right|^2 \left| \oldhat{f}_1^{\sigma'}\left( \vec{q}_1 , \gamma \tau \right) \right| \left| \vec{q}_1 - \frac{\tilde{\vec{G}}_1}{\gamma} \right|^2 \left| \oldhat{f}_1^{\sigma''}\left( \vec{q}_1 - \frac{\tilde{\vec{G}}_1}{\gamma} , \gamma \tau \right) \right| \; \text{d}\vec{q}_1 \\
        &+ C \gamma^{4} \sup_{\sigma' \in \{A,B\}} \left\| (\cdot)^2 \oldhat{f}_1^{\sigma'}(\cdot;\gamma \tau) \right\|^2_{L^2(\mathbb{R}^2)}.
    \end{split}
\end{equation}
We now claim that the sum of the $\tilde{\vec{G}}_1 \neq \vec{0}$ terms is higher-order in $\gamma$, assuming sufficient regularity of the envelope functions. For fixed $\tilde{\vec{G}}_1 \neq \vec{0}$, we can partition $\mathbb{R}^2$ into half-planes $\mathbb{R}^2_+\left(\frac{\tilde{\vec{G}}_1}{\gamma}\right)$ and $\mathbb{R}^2_-\left(\frac{\tilde{\vec{G}}_1}{\gamma}\right)$, where
\begin{equation} \label{eq:regions}
    \vec{q}_1 \in \mathbb{R}^2_+\left(\frac{\tilde{\vec{G}}_1}{\gamma}\right) \implies |\vec{q}_1| \geq \left| \frac{\tilde{\vec{G}}_1}{2 \gamma} \right|, \quad \vec{q}_1 \in \mathbb{R}^2_-\left(\frac{\tilde{\vec{G}}_1}{\gamma}\right) \implies \left|\vec{q}_1 - \frac{\tilde{\vec{G}}_1}{2 \gamma}\right| \geq \left| \frac{\tilde{\vec{G}}_1}{2 \gamma} \right|,
\end{equation}
by splitting $\mathbb{R}^2$ by the line which bisects $\frac{\tilde{\vec{G}}_1}{\gamma}$ (i.e., the line which passes through $\frac{\tilde{\vec{G}}_1}{2 \gamma}$ perpendicular to $\tilde{\vec{G}}_1$). We now consider the integral over one of these regions (the other is similar) 
\begin{equation}
    \inty{\mathbb{R}^2_+\left(\frac{\tilde{\vec{G}}_1}{\gamma}\right)}{}{ \left| \vec{q}_1 \right|^2 \left| \oldhat{f}_1^{\sigma'}\left( \vec{q}_1 , \gamma \tau \right) \right| \left| \vec{q}_1 - \frac{\tilde{\vec{G}}_1}{\gamma} \right|^2 \left| \oldhat{f}_1^{\sigma''}\left( \vec{q}_1 - \frac{\tilde{\vec{G}}_1}{\gamma} , \gamma \tau \right) \right| }{\vec{q}_1}.
\end{equation}
Multiplying and dividing by $|\vec{q}_1|^4$ and using the first inequality of \eqref{eq:regions} we obtain
\begin{equation} \label{eq:trick_estimate}
    \begin{split}
        &\inty{\mathbb{R}^2_+\left(\frac{\tilde{\vec{G}}_1}{\gamma}\right)}{}{ \left| \vec{q}_1 \right|^2 \left| \oldhat{f}_1^{\sigma'}\left( \vec{q}_1 , \gamma \tau \right) \right| \left| \vec{q}_1 - \frac{\tilde{\vec{G}}_1}{\gamma} \right|^2 \left| \oldhat{f}_1^{\sigma''}\left( \vec{q}_1 - \frac{\tilde{\vec{G}}_1}{\gamma} , \gamma \tau \right) \right| }{\vec{q}_1} \\
        &\leq \frac{16 \gamma^4}{|\tilde{\vec{G}}_1|^4} \inty{\mathbb{R}^2_+\left(\frac{\tilde{\vec{G}}_1}{\gamma}\right)}{}{ \left| \vec{q}_1 \right|^6 \left| \oldhat{f}_1^{\sigma'}\left( \vec{q}_1 , \gamma \tau \right) \right| \left| \vec{q}_1 - \frac{\tilde{\vec{G}}_1}{\gamma} \right|^2 \left| \oldhat{f}_1^{\sigma''}\left( \vec{q}_1 - \frac{\tilde{\vec{G}}_1}{\gamma} , \gamma \tau \right) \right| }{\vec{q}_1} \\
        &\leq \frac{16 \gamma^4}{|\tilde{\vec{G}}_1|^4} \| (\cdot)^6 \oldhat{f}_1^{\sigma'}(\cdot,\gamma \tau) \|_{L^2(\mathbb{R}^2)} \| (\cdot)^2 \oldhat{f}_1^{\sigma''}(\cdot,\gamma \tau) \|_{L^2(\mathbb{R}^2)},
    \end{split}
\end{equation}
where we used Cauchy-Schwarz for the final inequality. Combining this with the analogous estimate for the integral over $\mathbb{R}^2_-$, we have
\begin{equation}
    \begin{split}
        &\left\| ( \tilde{r}^{\text{I}}_1 )^\sigma(\cdot,\tau) \right\|^2_{L^2(\Gamma_1^*)} \leq \\
        &C \gamma^{8} \sup_{\sigma' \in \{A,B\}} \sup_{\sigma'' \in \{A,B\}} \| (\cdot)^6 \oldhat{f}_1^{\sigma'}(\cdot,\gamma \tau) \|_{L^2(\mathbb{R}^2)} \| (\cdot)^2 \oldhat{f}_1^{\sigma''}(\cdot,\gamma \tau) \|_{L^2(\mathbb{R}^2)} \sum_{\tilde{\vec{G}}_1 \in \Lambda_1^*, \tilde{\vec{G}}_1 \neq \vec{0}} \frac{1}{|\tilde{\vec{G}}_1|^4}  \\
        &+ C \gamma^{4} \sup_{\sigma' \in \{A,B\}} \left\| (\cdot)^2 \oldhat{f}_1^{\sigma'}(\cdot;\gamma \tau) \right\|^2_{L^2(\mathbb{R}^2)}.
    \end{split}
\end{equation}
Simplifying this we end up with the estimate
\begin{equation}
    \left\| ( \tilde{r}^{\text{I}}_1 )(\cdot,\tau) \right\|^2_{L^2(\Gamma_1^*;\mathbb{C}^2)} \leq C \gamma^4 \left\| f_1(\cdot,\gamma \tau) \right\|^2_{H^2(\mathbb{R}^2;\mathbb{C}^2)} + C \gamma^8 \left\| f_1(\cdot,\gamma \tau) \right\|^2_{H^6(\mathbb{R}^2;\mathbb{C}^2)},
\end{equation}
from which \eqref{eq:diag_decomposition_remainder} follows via equivalence of finite-dimensional norms.
}

\subsection{Transforming leading terms back to real space} \label{sec:transform_back}

To derive \eqref{eq:diag_decomposition}, it remains only to transform the leading order term in \eqref{eq:diag}
\begin{equation}
    \frac{\gamma^{-1} |\Gamma^*|^{1/2}}{(2 \pi)^2} \sum_{\vec{G}_1 \in \Lambda_1^*} e^{- i \vec{G}_1 \cdot \vec{\tau}_1^\sigma} \gamma \left( \vec{\sigma}_{-\theta/2} \cdot \left(\frac{\vec{q}_1 - \vec{G}_1}{\gamma}\right) \right)^{\sigma \sigma'} \oldhat{f}^{\sigma'}_1 \left(\frac{\vec{q}_1 - \vec{G}_1}{\gamma},\gamma \tau\right),
\end{equation}
back to real space. Taking the inverse Bloch transform and inserting the Fourier transform formula the expression becomes 
\begin{equation}
    \begin{split}
        &\frac{\gamma^{-1}}{(2 \pi)^2} \sum_{\vec{G}_1 \in \Lambda_1^*} e^{- i \vec{G}_1 \cdot \vec{\tau}_1^\sigma} \gamma \\
        &\times\inty{\Gamma_1^*}{}{ e^{i \vec{k}_1 \cdot ( \vec{R}_1 + \vec{\tau}_1^\sigma )} \inty{\field{R}^2}{}{ e^{- i \left( \frac{\vec{k}_1 - \vec{\kappa}_1 - \vec{G}_1}{\gamma} \right) \cdot \vec{X}} \left( \vec{\sigma}_{-\theta/2} \cdot ( - i \nabla_{\vec{X}} ) \right)^{\sigma \sigma'} {f}^{\sigma'}_1 (\vec{X},\gamma \tau) }{\vec{X}} }{\vec{k}_1}.
    \end{split}
\end{equation}
Now, noting that $e^{i \vec{k}_1 \cdot \vec{R}_1} = e^{i (\vec{k}_1 - \vec{G}_1) \cdot \vec{R}_1}$ for any $\vec{R}_1 \in \Lambda_1$ and $\vec{G}_1 \in \Lambda_1^*$, we can write all dependence on $\vec{k}_1$ in terms of dependence on $\vec{k}_1 - \vec{G}_1$, and then re-write the sum over $\vec{G}_1$ and integral over $\vec{k}_1$ as an integral over $\field{R}^2$ as
\begin{equation}
    \begin{split}
        &\frac{\gamma^{-1}}{(2 \pi)^2} \gamma \\
        &\times\inty{\field{R}^2}{}{ e^{i \vec{k}_1 \cdot ( \vec{R}_1 + \vec{\tau}_1^\sigma )} \inty{\field{R}^2}{}{ e^{- i \left( \frac{\vec{k}_1 - \vec{\kappa}_1}{\gamma} \right) \cdot \vec{X}} \left( \vec{\sigma}_{-\theta/2} \cdot ( - i \nabla_{\vec{X}} ) \right)^{\sigma \sigma'} {f}^{\sigma'}_1 (\vec{X},\gamma \tau) }{\vec{X}} }{\vec{k}_1}.
    \end{split}
\end{equation}
Now using the identity
\begin{equation}
    \frac{\gamma^{-1}}{(2 \pi)^2} \inty{\field{R}^2}{}{ e^{i \frac{\vec{k}_1}{\gamma} \cdot \left( \gamma (\vec{R}_1 + \vec{\tau}_1^\sigma) - \vec{X} \right)} }{\vec{k}_1} = \gamma \delta( \gamma ( \vec{R}_1 + \vec{\tau}_1^\sigma ) - \vec{X} ),
\end{equation}
and then integrating over $\vec{X}$, yields the leading term of \eqref{eq:diag_decomposition}.

\section{Proof of Lemma \ref{lem:diag_theta_lemma}} \label{sec:proof_of_diag_theta_lemma}

The proof is straightforward. Substituting 
\begin{equation}
    \vec{\sigma}_{\theta_i/2} = \vec{\sigma} + \left[ \vec{\sigma}_{\theta_i/2} - \vec{\sigma} \right]
\end{equation}
into \eqref{eq:diag_decomposition} yields \eqref{eq:diag_theta_decomposition}, with
\begin{equation} \label{eq:diag_theta_remainder}
\begin{split}
    ( r^\text{II}_{\vec{R}_i} )^\sigma = \gamma \sum_{\sigma' \in \{A,B\}} \left( \gamma \left[ \vec{\sigma}_{\theta_i/2} - \vec{\sigma} \right] \cdot (- i \nabla_{\vec{X}}) \right)^{\sigma \sigma'} \left. f^{\sigma'}_i(\vec{X},\gamma \tau) \right|_{\vec{X} = \gamma ( \vec{R}_i + \vec{\tau}^{\sigma'}_i )} e^{i \vec{\kappa}_i \cdot ( \vec{R}_i + \vec{\tau}_i^\sigma )}.
\end{split}
\end{equation}
{\color{black} Following the same steps as in Section \ref{sec:estimate_residual} together with Taylor-expansion of the exponentials $e^{\pm i \theta/2}$ with respect to $\theta$ results in
\begin{equation}
    \begin{split}
        &\| (\tilde{r}^{\text{II}}_1)^\sigma(\cdot,\tau)) \|^2_{L^2(\Gamma_1^*)} \leq C \theta^2 \gamma^2 \sup_{\sigma' \in \{A,B\}} \sup_{\sigma'' \in \{A,B\}}  \\
        &\sum_{\tilde{\vec{G}}_1 \in \Lambda_1^*} \inty{\mathbb{R}^2}{}{ \left| \vec{q}_1 \right| \left| \oldhat{f}_1^{\sigma'}(\vec{q}_1,\gamma \tau) \right| \left| \vec{q}_1 - \frac{ \tilde{\vec{G}}_1 }{ \gamma } \right| \left| \oldhat{f}_1^{\sigma''}\left( \vec{q}_1 - \frac{ \tilde{\vec{G}}_1 }{ \gamma } , \gamma \tau \right) \right| }{\vec{q}_1}.
    \end{split}
\end{equation}
Just as in Section \ref{sec:estimate_residual}, the leading term comes from the $\tilde{\vec{G}}_1 = \vec{0}$ term, and the $\tilde{\vec{G}}_1 \neq \vec{0}$ terms provide a correction
\begin{equation}
    \| \tilde{r}^{\text{II}}_1(\cdot,\tau) \|^2_{L^2(\Gamma_1^*;\mathbb{C}^2)} \leq C \theta^2 \gamma^2 \| f_1(\cdot,\gamma \tau) \|^2_{H^1(\mathbb{R}^2;\mathbb{C}^2)} + C \theta^2 \gamma^6 \| f_1(\cdot,\gamma \tau) \|^2_{H^5(\mathbb{R}^2;\mathbb{C}^2)},
\end{equation}
from which \eqref{eq:diag_theta_decomposition_remainder} follows.
}

\section{Proof of Lemma \ref{lem:off_diag_lemma}} \label{sec:proof_of_off_diag_lemma}

We organize the proof of Lemma \ref{lem:off_diag_lemma} into sections just like the proof of Lemma \ref{lem:diag_lemma}. First, we will separate the momentum space expression for $H_{12} \Psi_2$ into ``leading'' and ``higher-order'' terms in Section \ref{sec:separate_leading_higherorder_2}. We will then estimate the higher-order terms in Section \ref{sec:estimate_residual_2} and transform the leading terms back to real space in Section \ref{sec:transform_back_2}.

\subsection{Separation of terms into leading and higher order in momentum space} \label{sec:separate_leading_higherorder_2}

In momentum space, $( H_{12} \Psi_2 )^\sigma$ takes the form (recall \eqref{eq:off_diag_1})
\begin{equation} 
    \begin{split}
        &\left( \mathcal{G}_1 H_{12} \mathcal{G}_2^{-1} \tilde{\Psi}_2 \right)^\sigma(\vec{\kappa}_1 + \vec{q}_1) =  \\
        &\sum_{\sigma' \in \{A,B\}} \int_{\Gamma^*_2 - \vec{\kappa}_2} \sum_{\vec{G}_1 \in \Lambda_1^*} \sum_{\vec{G}_2 \in \Lambda_2^*} \\
        &e^{i [\vec{G}_1 \cdot \vec{\tau}_1^\sigma - \vec{G}_2 \cdot \vec{\tau}_2^{\sigma'}]} \oldhat{h}( \vec{\kappa}_1 + \vec{G}_1 + \vec{q}_1 ; \ell ) \delta( \vec{\kappa}_1 - \vec{\kappa}_2 + \vec{G}_1 - \vec{G}_2 + \vec{q}_1 - \vec{q}_2 ) \tilde{\Psi}^{\sigma'}_2(\vec{\kappa}_2 + \vec{q}_2) \; \text{d}\vec{q}_2.
    \end{split}
\end{equation}
Substituting the wave-packet {\it ansatz} \eqref{eq:WP_momentum} we have
\begin{equation}
    \begin{split}
        &\frac{\gamma^{-1} |\Gamma^*|^{1/2}}{(2 \pi)^2} \sum_{\sigma' \in \{A,B\}} \int_{\Gamma^*_2 - \vec{\kappa}_2} \sum_{\vec{G}_1 \in \Lambda_1^*} \sum_{\vec{G}_2 \in \Lambda_2^*} \sum_{\vec{G}_2' \in \Lambda_2^*} \\
        &e^{i [\vec{G}_1 \cdot \vec{\tau}_1^\sigma - (\vec{G}_2 + \vec{G}_2') \cdot \vec{\tau}_2^{\sigma'}]} \oldhat{h}( \vec{\kappa}_1 + \vec{G}_1 + \vec{q}_1 ; \ell ) \delta( \vec{\kappa}_1 - \vec{\kappa}_2 + \vec{G}_1 - \vec{G}_2 + \vec{q}_1 - \vec{q}_2 ) \oldhat{f}_2^{\sigma'}\left( \frac{ \vec{q}_2 - \vec{G}_2' }{ \gamma } , \gamma \tau \right) \; \text{d}\vec{q}_2.
    \end{split}
\end{equation}
Replacing the sum over $\vec{G}_2$ with a sum over $\tilde{\vec{G}}_2 = \vec{G}_2 + \vec{G}_2'$ and then dropping the tilde we have
\begin{equation}
    \begin{split}
        &\frac{\gamma^{-1} |\Gamma^*|^{1/2}}{(2 \pi)^2} \sum_{\sigma' \in \{A,B\}} \int_{\Gamma^*_2 - \vec{\kappa}_2} \sum_{\vec{G}_1 \in \Lambda_1^*} \sum_{\vec{G}_2 \in \Lambda_2^*} \sum_{\vec{G}_2' \in \Lambda_2^*} \\
        &e^{i [\vec{G}_1 \cdot \vec{\tau}_1^\sigma - \vec{G}_2 \cdot \vec{\tau}_2^{\sigma'}]} \oldhat{h}( \vec{\kappa}_1 + \vec{G}_1 + \vec{q}_1 ; \ell ) \delta( \vec{\kappa}_1 - \vec{\kappa}_2 + \vec{G}_1 - \vec{G}_2 + \vec{q}_1 - (\vec{q}_2 - \vec{G}_2') ) \oldhat{f}_2^{\sigma'}\left( \frac{ \vec{q}_2 - \vec{G}_2' }{ \gamma } , \gamma \tau \right) \; \text{d}\vec{q}_2.
    \end{split}
\end{equation}
But now we can replace the sum over $\vec{G}_2'$ and integral with respect to $\vec{q}_2$ over $\Gamma^*_2 - \vec{\kappa}_2$ with an integral over $\mathbb{R}^2$. Carrying out this integral we have
\begin{equation}
    \begin{split}
        &\frac{\gamma^{-1} |\Gamma^*|^{1/2}}{(2 \pi)^2} \sum_{\sigma' \in \{A,B\}} \sum_{\vec{G}_1 \in \Lambda_1^*} \sum_{\vec{G}_2 \in \Lambda_2^*} \\
        &e^{i [\vec{G}_1 \cdot \vec{\tau}_1^\sigma - \vec{G}_2 \cdot \vec{\tau}_2^{\sigma'}]} \oldhat{h}( \vec{\kappa}_1 + \vec{G}_1 + \vec{q}_1 ; \ell ) \oldhat{f}_2^{\sigma'}\left(\frac{ \vec{\kappa}_1 - \vec{\kappa}_2 + \vec{G}_1 - \vec{G}_2 + \vec{q}_1 }{ \gamma } , \gamma \tau \right).
    \end{split}
\end{equation}
Making the substitution
\begin{equation} \label{eq:local_approx}
    \begin{split}
        &\oldhat{h}(\vec{\kappa}_1 + \vec{q}_1 + \vec{G}_1;\ell) = \\
        &\oldhat{h}(\vec{\kappa}_2 + \vec{G}_2;\ell) + \left[ \oldhat{h}(\vec{\kappa}_1 + \vec{q}_1 + \vec{G}_1;\ell) - \oldhat{h}(\vec{\kappa}_2 + \vec{G}_2;\ell) \right],
    \end{split}
\end{equation}
puts the off-diagonal terms into the form
\begin{equation} \label{eq:off_diag_terms}
    \begin{split}
        &\frac{\gamma^{-1} |\Gamma^*|^{1/2}}{(2 \pi)^2} \sum_{\sigma' \in \{A,B\}} \sum_{\vec{G}_1 \in \Lambda_1^*} \sum_{\vec{G}_2 \in \Lambda_2^*} \\
        &e^{i [\vec{G}_1 \cdot \vec{\tau}_1^\sigma - \vec{G}_2 \cdot \vec{\tau}_2^{\sigma'}]} \oldhat{h}( \vec{\kappa}_2 + \vec{G}_2 ; \ell ) \oldhat{f}^{\sigma'}_2\left(\frac{\vec{\kappa}_1 - \vec{\kappa}_2 + \vec{G}_1 - \vec{G}_2 + \vec{q}_1 }{\gamma},\gamma \tau\right) + \left( \tilde{r}^{\text{III}}_1 \right)^\sigma(\vec{\kappa}_1 + \vec{q}_1,\tau),
    \end{split}
\end{equation}
where
\begin{equation}
    \begin{split}
        &\left( \tilde{r}^\text{III}_1 \right)^\sigma(\vec{\kappa}_1 + \vec{q}_1,\tau) := \frac{\gamma^{-1} |\Gamma^*|^{1/2}}{(2 \pi)^2} \sum_{\sigma' \in \{A,B\}} \sum_{\vec{G}_1 \in \Lambda_1^*} \sum_{\vec{G}_2 \in \Lambda_2^*} \\
        &e^{i [\vec{G}_1 \cdot \vec{\tau}_1^\sigma - \vec{G}_2 \cdot \vec{\tau}_2^{\sigma'}]} \left[ \oldhat{h}( \vec{\kappa}_1 + \vec{G}_1 + \vec{q}_1 ; \ell ) - \oldhat{h}( \vec{\kappa}_2 + \vec{G}_2 ; \ell ) \right] \oldhat{f}^{\sigma'}_2\left(\frac{\vec{\kappa}_1 - \vec{\kappa}_2 + \vec{G}_1 - \vec{G}_2 + \vec{q}_1 }{\gamma},\gamma \tau\right).
    \end{split}
\end{equation}
    Now recall that $\oldhat{h}$ decays (Assumption \ref{as:h_regularity}). It follows that the dominant terms in the sum \eqref{eq:off_diag_terms} are those which minimize $|\vec{\kappa}_2 + \vec{G}_2|$, i.e. those with $\vec{G}_2 = \vec{0}$, $\vec{G}_2 = \vec{b}_{2,2}$, and $\vec{G}_2 = - \vec{b}_{2,1}$, which all give $|\vec{\kappa}_2 + \vec{G}_2| = |\vec{\kappa}|$. It is straightforward to see that the next largest term will have $|\vec{\kappa}_2 + \vec{G}_2| = 2 |\vec{\kappa}|$. So we can further simplify the off-diagonal terms as
    \begin{equation}  \label{eq:off_diag_again}
    \begin{split}
        &\frac{\gamma^{-1} |\Gamma^*|^{1/2}}{(2 \pi)^2} \sum_{\sigma' \in \{A,B\}} \sum_{\vec{G}_1 \in \Lambda_1^*} e^{i \vec{G}_1 \cdot \vec{\tau}_1^\sigma} \oldhat{h}( \vec{\kappa}_2 ; \ell ) \oldhat{f}^{\sigma'}_2\left(\frac{\vec{\kappa}_1 - \vec{\kappa}_2 + \vec{G}_1 + \vec{q}_1 }{\gamma},\gamma \tau\right)    \\
        &+ e^{i [\vec{G}_1 \cdot \vec{\tau}_1^\sigma - \vec{b}_{2,2} \cdot \vec{\tau}_2^{\sigma'}]} \oldhat{h}( \vec{\kappa}_2 + \vec{b}_{2,2} ; \ell ) \oldhat{f}^{\sigma'}_2\left(\frac{\vec{\kappa}_1 - \vec{\kappa}_2 + \vec{G}_1 - \vec{b}_{2,2} + \vec{q}_1 }{\gamma},\gamma \tau\right)  \\
        &+ e^{i [\vec{G}_1 \cdot \vec{\tau}_1^\sigma + \vec{b}_{2,1} \cdot \vec{\tau}_2^{\sigma'}]} \oldhat{h}( \vec{\kappa}_2 - \vec{b}_{2,1} ; \ell) \oldhat{f}^{\sigma'}_2\left(\frac{\vec{\kappa}_1 - \vec{\kappa}_2 + \vec{G}_1 + \vec{b}_{2,1} + \vec{q}_1 }{\gamma},\gamma \tau\right)  \\
        &+ \left( \tilde{r}^\text{III}_1 \right)^\sigma(\vec{\kappa}_1 + \vec{q}_1,\tau) + \left( \tilde{r}^\text{IV}_1 \right)^\sigma(\vec{\kappa}_1 + \vec{q}_1,\tau),
    \end{split}
\end{equation}
where
\begin{equation} 
    \begin{split}
        &\left( \tilde{r}^{\text{IV}}_1 \right)^\sigma(\vec{\kappa}_1 + \vec{q}_1,\tau) := \frac{\gamma^{-1} |\Gamma^*|^{1/2}}{(2 \pi)^2} \sum_{\sigma' \in \{A,B\}} \sum_{\vec{G}_1 \in \Lambda_1^*} \sum_{\substack{\vec{G}_2 \in \Lambda_2^* \\ \vec{G}_2 \notin \{ \vec{0}, \vec{b}_{2,2}, - \vec{b}_{2,1} \} }} \\
        &e^{i [\vec{G}_1 \cdot \vec{\tau}_1^\sigma - \vec{G}_2 \cdot \vec{\tau}_2^{\sigma'}]} \oldhat{h}( \vec{\kappa}_2 + \vec{G}_2 ; \ell ) \oldhat{f}^{\sigma'}_2\left(\frac{\vec{\kappa}_1 - \vec{\kappa}_2 + \vec{G}_1 - \vec{G}_2 + \vec{q}_1 }{\gamma},\gamma \tau\right).
    \end{split}
\end{equation}
We now simplify further. First, using radial symmetry of $\oldhat{h}$ (Assumption \ref{as:h_regularity}), we have
\begin{equation} \label{eq:simplify_moire_potential}
    \oldhat{h}( \vec{\kappa}_2 ; \ell ) = \oldhat{h}( \vec{\kappa}_2 + \vec{b}_{2,2} ; \ell ) = \oldhat{h}( \vec{\kappa}_2 - \vec{b}_{2,1} ; \ell ) = \oldhat{h}( |\vec{\kappa}| ; \ell ).
\end{equation}
Then, note that we can replace $\vec{G}_1$ by $- \vec{G}_1$, $\vec{b}_{1,2} - \vec{G}_1$, and $- \vec{b}_{1,1} - \vec{G}_1$ in the first, second, and third terms of \eqref{eq:off_diag_again}, respectively. Each term can then be written simply in terms of the vectors \eqref{eq:momentum_space_hops} as
\begin{equation} 
    \begin{split}
        &\frac{\gamma^{-1} |\Gamma^*|^{1/2}}{(2 \pi)^2} \oldhat{h}( |\vec{\kappa}| ; \ell ) \sum_{\sigma' \in \{A,B\}} \sum_{\vec{G}_1 \in \Lambda_1^*} e^{- i \vec{G}_1 \cdot \vec{\tau}_1^\sigma} \oldhat{f}^{\sigma'}_2\left(\frac{\vec{q}_1 + \mathfrak{s}_1 - \vec{G}_1 }{\gamma},\gamma \tau\right)    \\
        &+ e^{i [\vec{b}_{1,2} \cdot \vec{\tau}_1^\sigma - \vec{b}_{2,2} \cdot \vec{\tau}_2^{\sigma'}]} e^{- i \vec{G}_1 \cdot \vec{\tau}_1^\sigma} \oldhat{f}^{\sigma'}_2\left(\frac{\vec{q}_1 + \mathfrak{s}_2 - \vec{G}_1}{\gamma},\gamma \tau\right)  \\
        &+ e^{i [- \vec{b}_{1,1} \cdot \vec{\tau}_1^\sigma + \vec{b}_{2,1} \cdot \vec{\tau}_2^{\sigma'}]} e^{- i \vec{G}_1 \cdot \vec{\tau}_1^{\sigma}} \oldhat{f}^{\sigma'}_2\left(\frac{\vec{q}_1 + \mathfrak{s}_3 - \vec{G}_1 }{\gamma},\gamma \tau\right)  \\
        &+ \left( \tilde{r}^{\text{III}}_1 \right)^\sigma(\vec{\kappa}_1 + \vec{q}_1,\tau) + \left( \tilde{r}^{\text{IV}}_1 \right)^\sigma(\vec{\kappa}_1 + \vec{q}_1,\tau).
    \end{split}
\end{equation}
This expression can be written more simply in terms of the hopping matrices \eqref{eq:hopping_matrices} as
\begin{equation} \label{eq:off_diagonal_terms}
    \begin{split}
        &\left( \mathcal{G}_1 H_{12} \mathcal{G}_2^{-1} \tilde{\Psi}_2 \right)^\sigma(\vec{\kappa}_1 + \vec{q}_1) =  \\
        &\frac{\gamma^{-1} |\Gamma^*|^{1/2}}{(2 \pi)^2} \oldhat{h}(|\vec{\kappa}|;\ell) \sum_{\sigma' \in \{A,B\}} \sum_{\vec{G}_1 \in \Lambda_1^*} e^{- i \vec{G}_1 \cdot \vec{\tau}_1^\sigma} T^{\sigma \sigma'}_1 \oldhat{f}^{\sigma'}_2\left(\frac{\vec{q}_1 + \mathfrak{s}_1 - \vec{G}_1 }{\gamma},\gamma \tau\right)    \\
        &+ e^{- i \vec{G}_1 \cdot \vec{\tau}_1^\sigma} T^{\sigma \sigma'}_2 \oldhat{f}^{\sigma'}_2\left(\frac{\vec{q}_1 + \mathfrak{s}_2 - \vec{G}_1}{\gamma},\gamma \tau\right) + e^{- i \vec{G}_1 \cdot \vec{\tau}_1^{\sigma}} T^{\sigma \sigma'}_3 \oldhat{f}^{\sigma'}_2\left(\frac{\vec{q}_1 + \mathfrak{s}_3 - \vec{G}_1 }{\gamma},\gamma \tau\right)  \\
        &+ \left( \tilde{r}^{\text{III}}_1 \right)^\sigma(\vec{\kappa}_1 + \vec{q}_1,\tau) + \left( \tilde{r}^{\text{IV}}_1 \right)^\sigma(\vec{\kappa}_1 + \vec{q}_1,\tau).
    \end{split}
\end{equation}
We define
\begin{equation}
    ( r^{\text{III}}_{\vec{R}_1} )^\sigma(\tau) := [ \mathcal{G}^{-1}_1 \tilde{r}^\text{III}_1 ]_{\vec{R}_1}^\sigma(\tau), \quad ( r^{\text{IV}}_{\vec{R}_1} )^\sigma(\tau) := [ \mathcal{G}^{-1}_1 \tilde{r}^\text{IV}_1 ]_{\vec{R}_1}^\sigma(\tau).
\end{equation}

\subsection{Estimation of ``higher-order'' terms} \label{sec:estimate_residual_2}

We now prove \eqref{eq:r2_bound},
{\color{black} starting from
\begin{equation} 
    \begin{split}
        &\left\| \left( \tilde{r}^\text{III}_1 \right)^\sigma \right\|_{L^2(\Gamma_1^*;\mathbb{C})}^2 = \frac{\gamma^{-2} |\Gamma^*|}{(2 \pi)^4} \int_{\Gamma^*_1} \left| \sum_{\sigma' \in \{A,B\}} \sum_{\vec{G}_1 \in \Lambda_1^*} \sum_{\vec{G}_2 \in \Lambda_2^*} \right. \\
        &\left. e^{i \left[ \vec{G}_1 \cdot \vec{\tau}_1^\sigma - \vec{G}_2 \cdot \vec{\tau}_2^{\sigma'}\right]} \left[ \oldhat{h}( \vec{\kappa}_1 + \vec{q}_1 + \vec{G}_1;\ell) - \oldhat{h}( \vec{\kappa}_2 + \vec{G}_2 ; \ell ) \right] \oldhat{f}^{\sigma'}_2\left(\frac{\vec{\kappa}_1 - \vec{\kappa}_2 + \vec{G}_1 - \vec{G}_2 + \vec{q}_1 }{\gamma},\gamma \tau\right) \right|^2 \; \text{d}\vec{q}_1.
    \end{split}
\end{equation}
Using the triangle inequality we have
\begin{equation}
    \begin{split}
        &\left\| \left( \tilde{r}^\text{III}_1 \right)^\sigma \right\|_{L^2(\Gamma_1^*;\mathbb{C})}^2 \leq \frac{\gamma^{-2} |\Gamma^*|}{(2 \pi)^4} \int_{\Gamma^*_1} \sum_{\sigma' \in \{A,B\}} \sum_{\sigma'' \in \{A,B\}} \sum_{\vec{G}_1 \in \Lambda_1^*} \sum_{\vec{G}_2 \in \Lambda_2^*} \sum_{\vec{G}_1' \in \Lambda_1^*} \sum_{\vec{G}_2' \in \Lambda_2^*} \\ 
        &\left| \left[ \oldhat{h}( \vec{\kappa}_1 + \vec{q}_1 + \vec{G}_1;\ell) - \oldhat{h}( \vec{\kappa}_2 + \vec{G}_2 ; \ell ) \right] \oldhat{f}^{\sigma'}_2\left(\frac{\vec{\kappa}_1 - \vec{\kappa}_2 + \vec{G}_1 - \vec{G}_2 + \vec{q}_1 }{\gamma},\gamma \tau\right) \right|    \\
        &\left| \left[ \oldhat{h}( \vec{\kappa}_1 + \vec{q}_1 + \vec{G}_1';\ell) - \oldhat{h}( \vec{\kappa}_2 + \vec{G}_2' ; \ell ) \right] \oldhat{f}^{\sigma''}_2\left(\frac{\vec{\kappa}_1 - \vec{\kappa}_2 + \vec{G}_1' - \vec{G}_2' + \vec{q}_1 }{\gamma},\gamma \tau\right) \right| \; \text{d}\vec{q}_1.
    \end{split}
\end{equation}
Now using Assumption \ref{as:h_regularity}, we have
\begin{equation}
    \begin{split}
        &\left\| \left( \tilde{r}^\text{III}_1 \right)^\sigma \right\|_{L^2(\Gamma_1^*;\mathbb{C})}^2 \leq C \sum_{\sigma' \in \{A,B\}} \sum_{\sigma'' \in \{A,B\}} \sum_{\vec{G}_1 \in \Lambda_1^*} \sum_{\vec{G}_2 \in \Lambda_2^*} \sum_{\vec{G}_1' \in \Lambda_1^*} \sum_{\vec{G}_2' \in \Lambda_2^*} \int_{\Gamma^*_1}  \\ 
        &\left| \frac{\vec{\kappa}_1 - \vec{\kappa}_2 + \vec{G}_1 - \vec{G}_2 + \vec{q}_1}{\gamma} \right| e^{- D_2 \ell d( \vec{0} , [ \vec{\kappa}_1 + \vec{G}_1 + \vec{q}_1 , \vec{\kappa}_2 + \vec{G}_2 ] ) } \left| \oldhat{f}^{\sigma'}_2\left(\frac{\vec{\kappa}_1 - \vec{\kappa}_2 + \vec{G}_1 - \vec{G}_2 + \vec{q}_1 }{\gamma},\gamma \tau\right) \right|    \\
        &\left| \frac{\vec{\kappa}_1 - \vec{\kappa}_2 + \vec{G}_1' - \vec{G}_2' + \vec{q}_1}{\gamma} \right| e^{- D_2 \ell d( \vec{0} , [ \vec{\kappa}_1 + \vec{G}_1' + \vec{q}_1 , \vec{\kappa}_2 + \vec{G}_2' ] ) } \left| \oldhat{f}^{\sigma''}_2\left(\frac{\vec{\kappa}_1 - \vec{\kappa}_2 + \vec{G}_1' - \vec{G}_2' + \vec{q}_1 }{\gamma},\gamma \tau\right) \right| \; \text{d}\vec{q}_1.
    \end{split}
\end{equation}
Taking the sup over $\sigma', \sigma''$, writing $\vec{q}_1 + \vec{G}_1' = \vec{q}_1 + \vec{G}_1 + \tilde{\vec{G}}_1$, $\tilde{\vec{G}}_1 := \vec{G}_1' - \vec{G}_1$, and then combining the sum over $\vec{G}_1$ with the integral over $\vec{q}_1$ yields
\begin{equation}
    \begin{split}
        &\left\| \left( \tilde{r}^\text{III}_1 \right)^\sigma \right\|_{L^2(\Gamma_1^*;\mathbb{C})}^2 \leq C \sup_{\sigma' \in \{A,B\}} \sup_{\sigma'' \in \{A,B\}} \sum_{\vec{G}_2 \in \Lambda_2^*} \sum_{\tilde{\vec{G}}_1 \in \Lambda_1^*} \sum_{\vec{G}_2' \in \Lambda_2^*} \int_{\mathbb{R}^2}  \\ 
        &\left| \frac{\vec{\kappa}_1 - \vec{\kappa}_2 - \vec{G}_2 + \vec{q}_1}{\gamma} \right| e^{- D_2 \ell d( \vec{0} , [ \vec{\kappa}_1 + \vec{q}_1 , \vec{\kappa}_2 + \vec{G}_2 ] ) } \left| \oldhat{f}^{\sigma'}_2\left(\frac{\vec{\kappa}_1 - \vec{\kappa}_2 - \vec{G}_2 + \vec{q}_1 }{\gamma},\gamma \tau\right) \right|    \\
        &\left| \frac{\vec{\kappa}_1 - \vec{\kappa}_2 + \tilde{\vec{G}}_1 - \vec{G}_2' + \vec{q}_1}{\gamma} \right| e^{- D_2 \ell d( \vec{0} , [ \vec{\kappa}_1 + \tilde{\vec{G}}_1 + \vec{q}_1 , \vec{\kappa}_2 + \vec{G}_2' ] ) } \left| \oldhat{f}^{\sigma''}_2\left(\frac{\vec{\kappa}_1 - \vec{\kappa}_2 + \tilde{\vec{G}}_1 - \vec{G}_2' + \vec{q}_1 }{\gamma},\gamma \tau\right) \right| \; \text{d}\vec{q}_1.
    \end{split}
\end{equation}
Changing variables in the integral $\vec{q}_1 \rightarrow \frac{\vec{\kappa}_1 - \vec{\kappa}_2 - \vec{G}_2 + \vec{q}_1}{\gamma}$ and introducing $\tilde{\vec{G}}_2 := \vec{G}_2' - \vec{G}_2$ yields
\begin{equation} \label{eq:all_G1_G2_terms}
    \begin{split}
        &\left\| \left( \tilde{r}^\text{III}_1 \right)^\sigma \right\|_{L^2(\Gamma_1^*;\mathbb{C})}^2 \leq C \gamma^2 \sup_{\sigma' \in \{A,B\}} \sup_{\sigma'' \in \{A,B\}} \sum_{\vec{G}_2 \in \Lambda_2^*} \sum_{\tilde{\vec{G}}_1 \in \Lambda_1^*} \sum_{\tilde{\vec{G}}_2 \in \Lambda_2^*} \int_{\mathbb{R}^2}  \\ 
        &\left| \vec{q}_1 \right| e^{- D_2 \ell d( \vec{0} , [ \vec{\kappa}_2 + \vec{G}_2 + \gamma \vec{q}_1 , \vec{\kappa}_2 + \vec{G}_2 ] ) } \left| \oldhat{f}^{\sigma'}_2\left(\vec{q}_1 ,\gamma \tau\right) \right|    \\
        &\left| \vec{q}_1 +  \frac{\tilde{\vec{G}}_1 - \tilde{\vec{G}}_2}{\gamma} \right| e^{- D_2 \ell d\left( \vec{0} , \left[ \vec{\kappa}_2 + \vec{G}_2 + \tilde{\vec{G}}_2 + \gamma \left( \vec{q}_1 + \frac{\tilde{\vec{G}}_1 - \tilde{\vec{G}}_2}{\gamma} \right) , \vec{\kappa}_2 + \vec{G}_2 + \tilde{\vec{G}}_2 \right] \right) } \left| \oldhat{f}^{\sigma''}_2\left(\vec{q}_1 + \frac{\tilde{\vec{G}}_1 - \tilde{\vec{G}}_2}{\gamma} ,\gamma \tau\right) \right| \; \text{d}\vec{q}_1.
    \end{split}
\end{equation}
We now split the sum over $\tilde{\vec{G}}_1$ as $\sum_{\tilde{\vec{G}}_1 \in \Lambda_1^*, |\tilde{\vec{G}}_1 - \tilde{\vec{G}}_2| \leq L} + \sum_{\tilde{\vec{G}}_1, |\tilde{\vec{G}}_1 - \tilde{\vec{G}}_2| > L}$ for some fixed $L > 0$ fixed sufficiently small such that the first sum has at most one term. We start by bounding this term. The expression to be estimated is
\begin{equation} \label{eq:small_G1_G2_terms}
    \begin{split}
        &C \gamma^2 \sup_{\sigma' \in \{A,B\}} \sup_{\sigma'' \in \{A,B\}} \sum_{\vec{G}_2 \in \Lambda_2^*} \sum_{\tilde{\vec{G}}_2 \in \Lambda_2^*} \sum_{\substack{\tilde{\vec{G}}_1 \in \Lambda_1^* \\ |\tilde{\vec{G}}_1 - \tilde{\vec{G}}_2| \leq L}} \int_{\mathbb{R}^2} \left| \vec{q}_1 \right| e^{- D_2 \ell d( \vec{0} , [ \vec{\kappa}_2 + \vec{G}_2 + \gamma \vec{q}_1 , \vec{\kappa}_2 + \vec{G}_2 ] ) } \left| \oldhat{f}^{\sigma'}_2\left(\vec{q}_1 ,\gamma \tau\right) \right|    \\
        &\left| \vec{q}_1 +  \frac{\tilde{\vec{G}}_1 - \tilde{\vec{G}}_2}{\gamma} \right| e^{- D_2 \ell d\left( \vec{0} , \left[ \vec{\kappa}_2 + \vec{G}_2 + \tilde{\vec{G}}_2 + \gamma \left( \vec{q}_1 + \frac{\tilde{\vec{G}}_1 - \tilde{\vec{G}}_2}{\gamma} \right) , \vec{\kappa}_2 + \vec{G}_2 + \tilde{\vec{G}}_2 \right] \right) } \left| \oldhat{f}^{\sigma''}_2\left(\vec{q}_1 + \frac{\tilde{\vec{G}}_1 - \tilde{\vec{G}}_2}{\gamma} ,\gamma \tau\right) \right| \; \text{d}\vec{q}_1.
    \end{split}
\end{equation}
Using Cauchy-Schwarz, this expression can be estimated by
\begin{equation} \label{eq:product_of_L2}
    \begin{split}
        &C \gamma^2 \sup_{\sigma' \in \{A,B\}} \sup_{\sigma'' \in \{A,B\}} \sum_{\vec{G}_2 \in \Lambda_2^*} \sum_{\tilde{\vec{G}}_2 \in \Lambda_2^*} \sum_{\substack{\tilde{\vec{G}}_1 \in \Lambda_1^* \\ |\tilde{\vec{G}}_1 - \tilde{\vec{G}}_2| \leq L}}    \\ 
        &\left\| (\cdot)  e^{- D_2 \ell d( \vec{0} , [ \vec{\kappa}_2 + \vec{G}_2 + \gamma (\cdot) , \vec{\kappa}_2 + \vec{G}_2 ] ) } \oldhat{f}^{\sigma'}_2\left(\cdot ,\gamma \tau\right) \right\|_{L^2(\mathbb{R}^2)} \left\| (\cdot)  e^{- D_2 \ell d( \vec{0} , [ \vec{\kappa}_2 + \vec{G}_2 + \tilde{\vec{G}}_2 + \gamma (\cdot) , \vec{\kappa}_2 + \vec{G}_2 + \tilde{\vec{G}}_2 ] ) } \oldhat{f}^{\sigma''}_2\left(\cdot ,\gamma \tau\right) \right\|_{L^2(\mathbb{R}^2)}.
    \end{split}
\end{equation}
Note that the summand is independent of $\tilde{\vec{G}}_1$, so we can drop this sum, and that, by changing variables in the sum over $\tilde{\vec{G}}_2$, we can write the expression as a square
\begin{equation} \label{eq:product_of_L2_simplified}
    = C \gamma^2 \sup_{\sigma' \in \{A,B\}} \left( \sum_{\vec{G}_2 \in \Lambda_2^*} \left\| (\cdot)  e^{- D_2 \ell d( \vec{0} , [ \vec{\kappa}_2 + \vec{G}_2 + \gamma (\cdot) , \vec{\kappa}_2 + \vec{G}_2 ] ) } \oldhat{f}^{\sigma'}_2\left(\cdot ,\gamma \tau\right) \right\|_{L^2(\mathbb{R}^2)} \right)^2.
\end{equation}
Let us now consider the $L^2$ norm in \eqref{eq:product_of_L2_simplified}
\begin{equation}
    \begin{split}
        &\left\| (\cdot)  e^{- D_2 \ell d( \vec{0} , [ \vec{\kappa}_2 + \vec{G}_2 + \gamma (\cdot) , \vec{\kappa}_2 + \vec{G}_2 ] ) } \oldhat{f}^{\sigma'}_2\left(\cdot ,\gamma \tau\right) \right\|_{L^2}^2 \\
        &= \inty{\mathbb{R}^2}{}{ | \vec{q}_1 |^2 e^{- 2 D_2 \ell d( \vec{0} , [ \vec{\kappa}_2 + \vec{G}_2 + \gamma \vec{q}_1 , \vec{\kappa}_2 + \vec{G}_2 ] ) } \left| \oldhat{f}^{\sigma'}_2\left(\vec{q}_1 ,\gamma \tau\right) \right|^2 }{\vec{q}_1}.
    \end{split}
\end{equation}
We can split the integral into integrals over the regions $|\vec{q}_1| \leq \frac{1}{\gamma} \left| \frac{\vec{\kappa}_2 + \vec{G}_2}{M} \right|$ and $|\vec{q}_1| > \frac{1}{\gamma} \left| \frac{\vec{\kappa}_2 + \vec{G}_2}{M} \right|$ where $M > 1$ but is otherwise arbitrary and will be fixed later to optimize the estimate. In the first integration region we have
\begin{equation}
    \begin{split}
        &|\vec{q}_1| \leq \frac{1}{\gamma} \left| \frac{\vec{\kappa}_2 + \vec{G}_2}{M} \right|  \\
        &\implies d\left(\vec{0},\left[\vec{\kappa}_2 + \vec{G}_2 + \gamma \vec{q}_1,\vec{\kappa}_2 + \vec{G}_2\right]\right) \geq \left(1 - \frac{1}{M}\right) \left| \vec{\kappa}_2 + \vec{G}_2 \right|,
    \end{split}
\end{equation}
and so, using \eqref{eq:Lipschitz}, we have
\begin{equation} \label{eq:estt}
    \begin{split}
        &\inty{|\vec{q}_1| \leq \frac{1}{\gamma} \left| \frac{\vec{\kappa}_2 + \vec{G}_2}{M} \right|}{}{ e^{- 2 D_2 \ell d\left( \vec{0}, \left[ \vec{\kappa}_2 + \vec{G}_2 +\gamma \vec{q}_1 , \vec{\kappa}_2 + \vec{G}_2 \right] \right) } \left|\vec{q}_1 \oldhat{f}^{\sigma'}_2\left(\vec{q}_1,\gamma \tau\right) \right|^2 }{\vec{q}_1}   \\
        &\leq C e^{- 2 D_2 \ell \left( 1 - \frac{1}{M} \right) \left| \vec{\kappa}_2 + \vec{G}_2 \right|} \left\| f^{\sigma'}_2(\cdot,\gamma \tau) \right\|^2_{H^1(\field{R}^2)},
    \end{split}
\end{equation}
where $C > 0$ is a constant depending only on $\oldhat{h}$, and $D_2$ is as in Assumption \ref{as:h_regularity}. The second integral can be bounded using decay of $\oldhat{f}^{\sigma'}_2$ as
\begin{equation} \label{eq:third_sob_estimate}
    \begin{split}
        &\inty{|\vec{q}_1| > \frac{1}{\gamma}\left| \frac{\vec{\kappa}_2 + \vec{G}_2}{M} \right|}{}{ e^{- 2 D_2 \ell d \left( \vec{0}, \left[ \vec{\kappa}_2 + \vec{G}_2 + \gamma \vec{q}_1 , \vec{\kappa}_2 + \vec{G}_2 \right] \right)} \left|\vec{q}_1 \oldhat{f}^{\sigma'}_2\left(\vec{q}_1,\gamma \tau\right) \right|^2 }{\vec{q}_1}   \\
        &\leq \frac{C M^6 \gamma^6}{\left| \vec{\kappa}_2 + \vec{G}_2 \right|^6} \left\| f^{\sigma'}_2(\cdot,\gamma \tau) \right\|^2_{H^4(\field{R}^2)}.
    \end{split}
\end{equation}
We can thus bound \eqref{eq:product_of_L2_simplified} by
\begin{equation}
    = C \gamma^2 \sup_{\sigma' \in \{A,B\}} \left( \sum_{\vec{G}_2 \in \Lambda_2^*} \left( \frac{M^6 \gamma^6}{\left| \vec{\kappa}_2 + \vec{G}_2 \right|^6} \left\| f^{\sigma'}_2(\cdot,\gamma \tau) \right\|^2_{H^4(\field{R}^2)} + e^{- 2 D_2 \ell \left( 1 - \frac{1}{M} \right) \left| \vec{\kappa}_2 + \vec{G}_2 \right|} \left\| f^{\sigma'}_2(\cdot,\gamma \tau) \right\|^2_{H^1(\field{R}^2)} \right)^{1/2} \right)^2,
\end{equation}
and then, using equivalence of finite-dimensional norms, by 
\begin{equation}
    = C \gamma^2 \sup_{\sigma' \in \{A,B\}} \left( \sum_{\vec{G}_2 \in \Lambda_2^*} \frac{M^3 \gamma^3}{\left| \vec{\kappa}_2 + \vec{G}_2 \right|^3} \left\| f^{\sigma'}_2(\cdot,\gamma \tau) \right\|_{H^4(\field{R}^2)} + e^{- D_2 \ell \left( 1 - \frac{1}{M} \right) \left| \vec{\kappa}_2 + \vec{G}_2 \right|} \left\| f^{\sigma'}_2(\cdot,\gamma \tau) \right\|_{H^1(\field{R}^2)} \right)^2.
\end{equation}
Bounding the sums by
\begin{equation}
    \begin{split}
        &\sum_{\vec{G}_2 \in \Lambda_2^*} \frac{M^3 \gamma^3}{\left| \vec{\kappa}_2 + \vec{G}_2 \right|^3} \left\| f^{\sigma'}_2(\cdot,\gamma \tau) \right\|_{H^4(\field{R}^2)} \leq C M^3 \gamma^3 \left\| f^{\sigma'}_2(\cdot,\gamma \tau) \right\|_{H^4(\field{R}^2)}, \\ 
        &\sum_{\vec{G}_2 \in \Lambda_2^*} e^{- D_2 \ell \left( 1 - \frac{1}{M} \right) \left| \vec{\kappa}_2 + \vec{G}_2 \right|} \left\| f^{\sigma'}_2(\cdot,\gamma \tau) \right\|_{H^1(\field{R}^2)} \leq C e^{- D_2 \ell \left( 1 - \frac{1}{M} \right) \left| \vec{\kappa} \right|} \left\| f^{\sigma'}_2(\cdot,\gamma \tau) \right\|_{H^1(\field{R}^2)},
    \end{split}
\end{equation}
we can finally bound \eqref{eq:small_G1_G2_terms} by
\begin{equation} \label{eq:small_G1_G2_terms_bound}
    \leq C \gamma^2 \sup_{\sigma' \in \{A,B\}} \left( M^3 \gamma^3 \left\| f^{\sigma'}_2(\cdot,\gamma \tau) \right\|_{H^4(\field{R}^2)} + e^{- D_2 \ell \left( 1 - \frac{1}{M} \right) \left| \vec{\kappa} \right|} \left\| f^{\sigma'}_2(\cdot,\gamma \tau) \right\|_{H^1(\field{R}^2)} \right)^2.
\end{equation}
It remains to bound the other terms in \eqref{eq:all_G1_G2_terms}, i.e., to bound the sum
\begin{equation} \label{eq:large_G1_G2_terms}
    \begin{split}
        &C \gamma^2 \sup_{\sigma' \in \{A,B\}} \sup_{\sigma'' \in \{A,B\}} \sum_{\vec{G}_2 \in \Lambda_2^*} \sum_{\tilde{\vec{G}}_2 \in \Lambda_2^*} \sum_{\substack{\tilde{\vec{G}}_1 \in \Lambda_1^* \\ |\tilde{\vec{G}}_1 - \tilde{\vec{G}}_2| > L}} \int_{\mathbb{R}^2} \left| \vec{q}_1 \right| e^{- D_2 \ell d( \vec{0} , [ \vec{\kappa}_2 + \vec{G}_2 + \gamma \vec{q}_1 , \vec{\kappa}_2 + \vec{G}_2 ] ) } \left| \oldhat{f}^{\sigma'}_2\left(\vec{q}_1 ,\gamma \tau\right) \right|    \\
        &\left| \vec{q}_1 +  \frac{\tilde{\vec{G}}_1 - \tilde{\vec{G}}_2}{\gamma} \right| e^{- D_2 \ell d\left( \vec{0} , \left[ \vec{\kappa}_2 + \vec{G}_2 + \tilde{\vec{G}}_2 + \gamma \left( \vec{q}_1 + \frac{\tilde{\vec{G}}_1 - \tilde{\vec{G}}_2}{\gamma} \right) , \vec{\kappa}_2 + \vec{G}_2 + \tilde{\vec{G}}_2 \right] \right) } \left| \oldhat{f}^{\sigma''}_2\left(\vec{q}_1 + \frac{\tilde{\vec{G}}_1 - \tilde{\vec{G}}_2}{\gamma} ,\gamma \tau\right) \right| \; \text{d}\vec{q}_1.
    \end{split}
\end{equation}
The idea to bound these terms is just as in equation \eqref{eq:regions}. We split the plane into two half-planes, as
\begin{equation} \label{eq:regions_2}
    \begin{split}
        &\vec{q}_1 \in \mathbb{R}^2_+\left(\frac{\tilde{\vec{G}}_1 - \tilde{\vec{G}}_2}{\gamma}\right) \implies |\vec{q}_1| \geq \left| \frac{\tilde{\vec{G}}_1 - \tilde{\vec{G}}_2}{2 \gamma} \right|, \\
        &\vec{q}_1 \in \mathbb{R}^2_-\left(\frac{\tilde{\vec{G}}_1 - \tilde{\vec{G}}_2}{\gamma}\right) \implies \left|\vec{q}_1 - \frac{\tilde{\vec{G}}_1 - \tilde{\vec{G}}_2}{2 \gamma}\right| \geq \left| \frac{\tilde{\vec{G}}_1 - \tilde{\vec{G}}_2}{2 \gamma} \right|.
    \end{split}
\end{equation}
By this procedure we can bound \eqref{eq:large_G1_G2_terms} by
\begin{equation}
    \begin{split}
        &C \gamma^2 \sup_{\sigma' \in \{A,B\}} \sup_{\sigma'' \in \{A,B\}} \sum_{\vec{G}_2 \in \Lambda_2^*} \sum_{\tilde{\vec{G}}_2 \in \Lambda_2^*} \sum_{\substack{\tilde{\vec{G}}_1 \in \Lambda_1^* \\ |\tilde{\vec{G}}_1 - \tilde{\vec{G}}_2| > L}} \frac{\gamma^4}{|\tilde{\vec{G}}_1 - \tilde{\vec{G}}_2|^4} \int_{\mathbb{R}^2}  \\ 
        &\left( \left| \vec{q}_1 \right|^5 \left| \vec{q}_1 +  \frac{\tilde{\vec{G}}_1 - \tilde{\vec{G}}_2}{\gamma} \right| + \left| \vec{q}_1 \right| \left| \vec{q}_1 +  \frac{\tilde{\vec{G}}_1 - \tilde{\vec{G}}_2}{\gamma} \right|^5 \right) e^{- D_2 \ell d( \vec{0} , [ \vec{\kappa}_2 + \vec{G}_2 + \gamma \vec{q}_1 , \vec{\kappa}_2 + \vec{G}_2 ] ) } \left| \oldhat{f}^{\sigma'}_2\left(\vec{q}_1 ,\gamma \tau\right) \right|    \\
        &e^{- D_2 \ell d\left( \vec{0} , \left[ \vec{\kappa}_2 + \vec{G}_2 + \tilde{\vec{G}}_2 + \gamma \left( \vec{q}_1 + \frac{\tilde{\vec{G}}_1 - \tilde{\vec{G}}_2}{\gamma} \right) , \vec{\kappa}_2 + \vec{G}_2 + \tilde{\vec{G}}_2 \right] \right) } \left| \oldhat{f}^{\sigma''}_2\left(\vec{q}_1 + \frac{\tilde{\vec{G}}_1 - \tilde{\vec{G}}_2}{\gamma} ,\gamma \tau\right) \right| \; \text{d}\vec{q}_1.
    \end{split}
\end{equation}
We now repeat the argument of \eqref{eq:small_G1_G2_terms}-\eqref{eq:product_of_L2} to bound this expression by 
\begin{equation} \label{eq:big_G1_G2_terms_again}
    \begin{split}
        &C \gamma^2 \sup_{\sigma' \in \{A,B\}} \sup_{\sigma'' \in \{A,B\}} \sum_{\vec{G}_2 \in \Lambda_2^*} \sum_{\tilde{\vec{G}}_2 \in \Lambda_2^*} \sum_{\substack{\tilde{\vec{G}}_1 \in \Lambda_1^* \\ |\tilde{\vec{G}}_1 - \tilde{\vec{G}}_2| > L}} \frac{\gamma^4}{|\tilde{\vec{G}}_1 - \tilde{\vec{G}}_2|^4}  \\
        &\left\| (\cdot)^5 e^{- D_2 \ell d( \vec{0} , [\vec{\kappa}_2 + \vec{G}_2 + \gamma(\cdot),\vec{\kappa}_2 + \vec{G}_2] ) } \oldhat{f}^{\sigma'}_2(\cdot,\gamma \tau) \right\|_{L^2(\mathbb{R}^2)} \left\| (\cdot) e^{- D_2 \ell d( \vec{0} , [\vec{\kappa}_2 + \vec{G}_2 + \tilde{\vec{G}}_2 + \gamma(\cdot),\vec{\kappa}_2 + \vec{G}_2 + \tilde{\vec{G}}_2] ) } \oldhat{f}^{\sigma''}_2(\cdot,\gamma \tau) \right\|_{L^2(\mathbb{R}^2)} \\
        &+ \left\| (\cdot) e^{- D_2 \ell d( \vec{0} , [\vec{\kappa}_2 + \vec{G}_2 + \gamma(\cdot),\vec{\kappa}_2 + \vec{G}_2] ) } \oldhat{f}^{\sigma'}_2(\cdot,\gamma \tau) \right\|_{L^2(\mathbb{R}^2)} \left\| (\cdot)^5 e^{- D_2 \ell d( \vec{0} , [\vec{\kappa}_2 + \vec{G}_2 + \tilde{\vec{G}}_2 + \gamma(\cdot),\vec{\kappa}_2 + \vec{G}_2 + \tilde{\vec{G}}_2] ) } \oldhat{f}^{\sigma''}_2(\cdot,\gamma \tau) \right\|_{L^2(\mathbb{R}^2)}.
    \end{split}
\end{equation}
Again, the summand is independent of $\tilde{\vec{G}}_1$, so we can subsume that sum into the constant $C > 0$. We are then back to the setting of \eqref{eq:product_of_L2}. Following the same steps we can bound \eqref{eq:big_G1_G2_terms_again} by
\begin{equation} \label{eq:big_G1_G2_terms_bound}
    \begin{split}
        &\leq C \gamma^6 \sup_{\sigma' \in \{A,B\}} \sup_{\sigma'' \in \{A,B\}} \left( M^3 \gamma^3 \left\| f_2^{\sigma'}(\cdot,\gamma \tau) \right\|_{H^8(\mathbb{R}^2)} + e^{- D_2 \ell \left(1 - \frac{1}{M}\right)|\vec{\kappa}|} \left\| f_2^{\sigma'}(\cdot,\gamma \tau) \right\|_{H^5(\mathbb{R}^2)} \right) \\
        &\quad \quad \quad \quad \times \left( M^3 \gamma^3 \left\| f_2^{\sigma''}(\cdot,\gamma \tau) \right\|_{H^4(\mathbb{R}^2)} + e^{- D_2 \ell \left(1 - \frac{1}{M}\right)|\vec{\kappa}|} \left\| f_2^{\sigma''}(\cdot,\gamma \tau) \right\|_{H^1(\mathbb{R}^2)} \right).
    \end{split}
\end{equation}
Adding \eqref{eq:small_G1_G2_terms_bound} to \eqref{eq:big_G1_G2_terms_bound} we have the following estimate for \eqref{eq:all_G1_G2_terms}
\begin{equation}
    \begin{split}
        &\left\| \left( \tilde{r}^\text{III}_1 \right)^\sigma \right\|_{L^2(\Gamma_1^*;\mathbb{C})}^2 \leq  \\
        &C \gamma^2 \sup_{\sigma' \in \{A,B\}} \left( M^3 \gamma^3 \left\| f^{\sigma'}_2(\cdot,\gamma \tau) \right\|_{H^4(\field{R}^2)} + e^{- D_2 \ell \left( 1 - \frac{1}{M} \right) \left| \vec{\kappa} \right|} \left\| f^{\sigma'}_2(\cdot,\gamma \tau) \right\|_{H^1(\field{R}^2)} \right)^2  \\
        &+ C \gamma^6 \sup_{\sigma' \in \{A,B\}} \sup_{\sigma'' \in \{A,B\}} \left( M^3 \gamma^3 \left\| f_2^{\sigma'}(\cdot,\gamma \tau) \right\|_{H^8(\mathbb{R}^2)} + e^{- D_2 \ell \left(1 - \frac{1}{M}\right)|\vec{\kappa}|} \left\| f_2^{\sigma'}(\cdot,\gamma \tau) \right\|_{H^5(\mathbb{R}^2)} \right)    \\
        &\quad \quad \quad \quad \times \left( M^3 \gamma^3 \left\| f_2^{\sigma''}(\cdot,\gamma \tau) \right\|_{H^4(\mathbb{R}^2)} + e^{- D_2 \ell \left(1 - \frac{1}{M}\right)|\vec{\kappa}|} \left\| f_2^{\sigma''}(\cdot,\gamma \tau) \right\|_{H^1(\mathbb{R}^2)} \right),
    \end{split}
\end{equation}
It will be convenient in what follows to simplify this expression, and then use Assumption \ref{as:h_regularity} to factor out $\oldhat{h}( |\vec{\kappa}| ; \ell )^2$, to obtain
\begin{equation}
    \begin{split}
        &\left\| \left( \tilde{r}^\text{III}_1 \right)^\sigma \right\|_{L^2(\Gamma_1^*;\mathbb{C})}^2 \leq  \\
        &C \oldhat{h}(|\vec{\kappa}|;\ell)^2 \sup_{\sigma' \in \{A,B\}} \left( M^3 \gamma^4 e^{D_1 \ell |\vec{\kappa}|} \left\| f^{\sigma'}_2(\cdot,\gamma \tau) \right\|_{H^4(\field{R}^2)} + \gamma e^{\left(D_1 - D_2 \left( 1 - \frac{1}{M} \right) \right) \ell \left| \vec{\kappa} \right|} \left\| f^{\sigma'}_2(\cdot,\gamma \tau) \right\|_{H^1(\field{R}^2)} \right)^2  \\
        &+ C \oldhat{h}(|\vec{\kappa}|;\ell)^2 \sup_{\sigma' \in \{A,B\}} \left( M^3 \gamma^6 e^{D_1 \ell |\vec{\kappa}|} \left\| f_2^{\sigma'}(\cdot,\gamma \tau) \right\|_{H^8(\mathbb{R}^2)} + \gamma^3 e^{\left( D_1 - D_2 \left(1 - \frac{1}{M}\right) \right) \ell |\vec{\kappa}|} \left\| f_2^{\sigma'}(\cdot,\gamma \tau) \right\|_{H^5(\mathbb{R}^2)} \right)^2,
    \end{split}
\end{equation}
from which \eqref{eq:r2_bound} follows.
}

{\color{black}
We now prove \eqref{eq:r3_bound}. After applying the triangle inequality, the quantity to bound is
\begin{equation} 
    \begin{split}
        &\left\| \left( \tilde{r}^{\text{IV}}_1 \right)^\sigma \right\|^2_{L^2(\Gamma^*_1)} \leq C \gamma^{-2} \sup_{\sigma' \in \{A,B\}} \sup_{\sigma'' \in \{A,B\}} \int_{\Gamma_1^*} \sum_{\vec{G}_1 \in \Lambda_1^*} \sum_{\substack{\vec{G}_2 \in \Lambda_2^* \\ \vec{G}_2 \notin \{ \vec{0}, \vec{b}_{2,2}, - \vec{b}_{2,1} \} }} \sum_{\vec{G}_1' \in \Lambda_1^*} \sum_{\substack{\vec{G}_2' \in \Lambda_2^* \\ \vec{G}_2' \notin \{ \vec{0}, \vec{b}_{2,2}, - \vec{b}_{2,1} \} }} \\
        &\left| \oldhat{h}( \vec{\kappa}_2 + \vec{G}_2 ; \ell ) \right| \left| \oldhat{f}^{\sigma'}_2\left(\frac{\vec{\kappa}_1 - \vec{\kappa}_2 + \vec{G}_1 - \vec{G}_2 + \vec{q}_1 }{\gamma},\gamma \tau\right) \right|   \\
        &\left| \oldhat{h}( \vec{\kappa}_2 + \vec{G}_2' ; \ell ) \right| \left| \oldhat{f}^{\sigma''}_2\left(\frac{\vec{\kappa}_1 - \vec{\kappa}_2 + \vec{G}_1' - \vec{G}_2' + \vec{q}_1 }{\gamma},\gamma \tau\right) \right| \; \text{d}\vec{q}_1.
    \end{split}
\end{equation}
We follow similar steps as in previous sections. We replace the integral over $\Gamma_1^*$ and sum over $\vec{G}_1$ by an integral over $\mathbb{R}^2$, introduce $\tilde{\vec{G}}_1 := \vec{G}_1' - \vec{G}_1$, and then change variables $\vec{q}_1 \rightarrow \frac{\vec{\kappa}_1 - \vec{\kappa}_2 - \vec{G}_2 + \vec{q}_1}{\gamma}$ to obtain 
\begin{equation} 
    \begin{split}
        &\left\| \left( \tilde{r}^{\text{IV}}_1 \right)^\sigma \right\|^2_{L^2(\Gamma^*_1)} \leq C \sup_{\sigma' \in \{A,B\}} \sup_{\sigma'' \in \{A,B\}} \int_{\mathbb{R}^2} \sum_{\substack{\vec{G}_2 \in \Lambda_2^* \\ \vec{G}_2 \notin \{ \vec{0}, \vec{b}_{2,2}, - \vec{b}_{2,1} \} }} \sum_{\tilde{\vec{G}}_1 \in \Lambda_1^*} \sum_{\substack{\vec{G}_2' \in \Lambda_2^* \\ \vec{G}_2' \notin \{ \vec{0}, \vec{b}_{2,2}, - \vec{b}_{2,1} \} }} \\
        &\left| \oldhat{h}( \vec{\kappa}_2 + \vec{G}_2 ; \ell ) \right| \left| \oldhat{f}^{\sigma'}_2( \vec{q}_1, \gamma \tau ) \right| \left| \oldhat{h}( \vec{\kappa}_2 + \vec{G}_2' ; \ell ) \right| \left| \oldhat{f}^{\sigma''}_2\left(\vec{q}_1 + \frac{\tilde{\vec{G}}_1 + \vec{G}_2 - \vec{G}_2'}{\gamma},\gamma \tau\right) \right| \; \text{d}\vec{q}_1.
    \end{split}
\end{equation}
Just as in \eqref{eq:all_G1_G2_terms}, we split the sum over $\tilde{\vec{G}}_1$ as $\sum_{\tilde{\vec{G}}_1 \in \Lambda^*_1, |\tilde{\vec{G}}_1 + \vec{G}_2 - \vec{G}_2'| \leq L} + \sum_{\tilde{\vec{G}}_1 \in \Lambda^*_1, |\tilde{\vec{G}}_1 + \vec{G}_2 - \vec{G}_2'| > L}$, fixing $L > 0$ sufficiently small that the first sum only involves one term for each $\vec{G}_2, \vec{G}_2'$. The first sum is then straightforward to estimate using Cauchy-Schwarz as
\begin{equation} \label{eq:small_G_terms}
    \leq C \sup_{\sigma' \in \{A,B\}} \sum_{\substack{\vec{G}_2 \in \Lambda_2^* \\ \vec{G}_2 \notin \{ \vec{0}, \vec{b}_{2,2}, - \vec{b}_{2,1} \} }} \sum_{\substack{\vec{G}_2' \in \Lambda_2^* \\ \vec{G}_2' \notin \{ \vec{0}, \vec{b}_{2,2}, - \vec{b}_{2,1} \} }} \left| \oldhat{h}( \vec{\kappa}_2 + \vec{G}_2 ; \ell ) \right| \left| \oldhat{h}( \vec{\kappa}_2 + \vec{G}_2' ; \ell ) \right| \left\| \oldhat{f}^{\sigma'}_2(\cdot,\gamma \tau) \right\|^2_{L^2(\mathbb{R}^2)}.
\end{equation}
It is straightforward to check that
\begin{equation}
    \min_{\substack{\vec{G}_2 \in \Lambda_2^* \\ \vec{G}_2 \notin \{ \vec{0}, \vec{b}_{2,2}, - \vec{b}_{2,1} \} }} | \vec{\kappa}_2 + \vec{G}_2 | = 2 |\vec{\kappa}|,
\end{equation}
from which it follows immediately that the terms \eqref{eq:small_G_terms} are bounded by
\begin{equation} \label{eq:terms_1}
    \leq C e^{- 4 D_2 \ell |\vec{\kappa}|} \sup_{\sigma' \in \{A,B\}} \| f_2^{\sigma'}(\cdot,\gamma \tau) \|^2_{L^2(\mathbb{R}^2)}.
\end{equation}
It remains to bound the terms
\begin{equation} \label{eq:large_G_terms}
    \begin{split}
        &C \sup_{\sigma' \in \{A,B\}} \sup_{\sigma'' \in \{A,B\}} \int_{\mathbb{R}^2} \sum_{\substack{\vec{G}_2 \in \Lambda_2^* \\ \vec{G}_2 \notin \{ \vec{0}, \vec{b}_{2,2}, - \vec{b}_{2,1} \} }} \sum_{\substack{\vec{G}_2' \in \Lambda_2^* \\ \vec{G}_2' \notin \{ \vec{0}, \vec{b}_{2,2}, - \vec{b}_{2,1} \} }} \sum_{\substack{\tilde{\vec{G}}_1 \in \Lambda_1^* \\ |\tilde{\vec{G}}_1 + \vec{G}_2 - \vec{G}_2'| > L}} \\
        &\left| \oldhat{h}( \vec{\kappa}_2 + \vec{G}_2 ; \ell ) \right| \left| \oldhat{f}^{\sigma'}_2( \vec{q}_1, \gamma \tau ) \right| \left| \oldhat{h}( \vec{\kappa}_2 + \vec{G}_2' ; \ell ) \right| \left| \oldhat{f}^{\sigma''}_2\left(\vec{q}_1 + \frac{\tilde{\vec{G}}_1 + \vec{G}_2 - \vec{G}_2'}{\gamma},\gamma \tau\right) \right| \; \text{d}\vec{q}_1.
    \end{split}
\end{equation}
Just as in \eqref{eq:regions_2}, we can bound \eqref{eq:large_G_terms} by
\begin{equation} \label{eq:these}
    \begin{split}
        &C \sup_{\sigma' \in \{A,B\}} \sup_{\sigma'' \in \{A,B\}} \int_{\mathbb{R}^2} \sum_{\substack{\vec{G}_2 \in \Lambda_2^* \\ \vec{G}_2 \notin \{ \vec{0}, \vec{b}_{2,2}, - \vec{b}_{2,1} \} }} \sum_{\substack{\vec{G}_2' \in \Lambda_2^* \\ \vec{G}_2' \notin \{ \vec{0}, \vec{b}_{2,2}, - \vec{b}_{2,1} \} }} \sum_{\substack{\tilde{\vec{G}}_1 \in \Lambda_1^* \\ |\tilde{\vec{G}}_1 + \vec{G}_2 - \vec{G}_2'| > L}} \frac{\gamma^4}{|\tilde{\vec{G}}_1 + \vec{G}_2 - \vec{G}_2'|^4}  \\
        &\left( |\vec{q}_1|^4 + \left| \vec{q}_1 + \frac{\tilde{\vec{G}}_1 + \vec{G}_2 - \vec{G}_2'}{\gamma} \right|^4 \right) \\
        &\left| \oldhat{h}( \vec{\kappa}_2 + \vec{G}_2 ; \ell ) \right| \left| \oldhat{f}^{\sigma'}_2( \vec{q}_1, \gamma \tau ) \right| \left| \oldhat{h}( \vec{\kappa}_2 + \vec{G}_2' ; \ell ) \right| \left| \oldhat{f}^{\sigma''}_2\left(\vec{q}_1 + \frac{\tilde{\vec{G}}_1 + \vec{G}_2 - \vec{G}_2'}{\gamma},\gamma \tau\right) \right| \; \text{d}\vec{q}_1.
    \end{split}
\end{equation}
We can bound each term in the sum \eqref{eq:these} using Cauchy-Schwarz, and then sum over $\tilde{\vec{G}}_1$ to obtain
\begin{equation} \label{eq:terms_2}
    \leq C \gamma^4 e^{- 4 D_2 \ell |\vec{\kappa}|} \sup_{\sigma' \in \{A,B\}} \sup_{\sigma'' \in \{A,B\}} \| f^{\sigma'}_2(\cdot,\gamma \tau) \|_{L^2(\mathbb{R}^2)} \| f^{\sigma''}_2(\cdot,\gamma \tau) \|_{H^4(\mathbb{R}^2)}.
\end{equation}
Adding \eqref{eq:terms_1} to \eqref{eq:terms_2} we conclude that
\begin{equation}
    \begin{split}
        &\left\| \left( \tilde{r}^{\text{IV}}_1 \right)^\sigma \right\|^2_{L^2(\Gamma^*_1)} \leq C e^{- 4 D_2 \ell |\vec{\kappa}|} \sup_{\sigma' \in \{A,B\}} \| f_2^{\sigma'}(\cdot,\gamma \tau) \|^2_{L^2(\mathbb{R}^2)} \\
        &\quad \quad \quad \quad \quad \quad \quad \quad \quad + C \gamma^4 e^{- 4 D_2 \ell |\vec{\kappa}|} \sup_{\sigma' \in \{A,B\}} \sup_{\sigma'' \in \{A,B\}} \| f^{\sigma'}_2(\cdot,\gamma \tau) \|_{L^2(\mathbb{R}^2)} \| f^{\sigma''}_2(\cdot,\gamma \tau) \|_{H^4(\mathbb{R}^2)}.
    \end{split}
\end{equation}
We can simplify and again use Assumption \ref{as:h_regularity} to factor out $\oldhat{h}(|\vec{\kappa}|;\ell)^2$ to obtain
\begin{equation}
    \begin{split}
        &\left\| \left( \tilde{r}^{\text{IV}}_1 \right)^\sigma \right\|^2_{L^2(\Gamma^*_1)} \leq C \oldhat{h}(|\vec{\kappa}|;\ell)^2 e^{( 2 D_1 - 4 D_2 ) \ell |\vec{\kappa}|} \sup_{\sigma' \in \{A,B\}} \| f_2^{\sigma'}(\cdot,\gamma \tau) \|^2_{L^2(\mathbb{R}^2)} \\
        &\quad \quad \quad \quad \quad \quad \quad \quad \quad + C \gamma^4 \oldhat{h}(|\vec{\kappa}|;\ell)^2 e^{(2 D_1 - 4 D_2) \ell |\vec{\kappa}|} \sup_{\sigma' \in \{A,B\}} \| f^{\sigma'}_2(\cdot,\gamma \tau) \|^2_{H^4(\mathbb{R}^2)},
    \end{split}
\end{equation}
from which \eqref{eq:r3_bound} follows.
}

\subsection{Transforming leading terms back to real space} \label{sec:transform_back_2}

It remains only to transform the first three terms of the right-hand side of \eqref{eq:off_diagonal_terms} back into real space. These terms are
\begin{equation}
    \begin{split}
        &\frac{\gamma^{-1} |\Gamma^*|^{1/2}}{(2 \pi)^2} \oldhat{h}(|\vec{\kappa}| ; \ell) \sum_{\sigma' \in \{A,B\}} \sum_{\vec{G}_1 \in \Lambda_1^*} e^{- i \vec{G}_1 \cdot \vec{\tau}_1^\sigma} T^{\sigma \sigma'}_1 \oldhat{f}^{\sigma'}_2\left(\frac{\vec{q}_1 + \mathfrak{s}_1 - \vec{G}_1 }{\gamma},\gamma \tau\right)    \\
        &+ e^{- i \vec{G}_1 \cdot \vec{\tau}_1^\sigma} T^{\sigma \sigma'}_2 \oldhat{f}^{\sigma'}_2\left(\frac{\vec{q}_1 + \mathfrak{s}_2 - \vec{G}_1}{\gamma},\gamma \tau\right) + e^{- i \vec{G}_1 \cdot \vec{\tau}_1^{\sigma}} T^{\sigma \sigma'}_3 \oldhat{f}^{\sigma'}_2\left(\frac{\vec{q}_1 + \mathfrak{s}_3 - \vec{G}_1 }{\gamma},\gamma \tau\right).
    \end{split}
\end{equation}
Taking the inverse Bloch transform and inserting the Fourier transform formula we have
\begin{equation}
    \begin{split}
        &\frac{\gamma^{-1}}{(2 \pi)^2} \oldhat{h}(|\vec{\kappa}|;\ell) \int_{\Gamma_1^* - \vec{\kappa}_1} e^{i (\vec{q}_1 + \vec{\kappa}_1) \cdot ( \vec{R}_1 + \vec{\tau}_1^\sigma)} \sum_{\sigma' \in \{A,B\}} \sum_{\vec{G}_1 \in \Lambda_1^*} \\
        &\sum_{j = 1}^3 e^{- i \vec{G}_1 \cdot \vec{\tau}_1^\sigma} T^{\sigma \sigma'}_j \int_{\field{R}^2} e^{- i \left( \frac{\vec{q}_1 + \mathfrak{s}_j - \vec{G}_1}{\gamma} \right) \cdot \vec{X}} f^{\sigma'}_2\left(\vec{X} ,\gamma \tau\right) \; \text{d}\vec{X} \; \text{d}\vec{q}_1.
    \end{split}
\end{equation}
Since $e^{i \vec{G}_1 \cdot \vec{R}_1} = 1$ for all $\vec{R}_1 \in \Lambda_1$ and $\vec{G}_1 \in \Lambda_1^*$, this can be re-written as
\begin{equation}
    \begin{split}
        &\frac{\gamma^{-1}}{(2 \pi)^2} \oldhat{h}(|\vec{\kappa}|;\ell) \sum_{\sigma' \in \{A,B\}} \sum_{\vec{G}_1 \in \Lambda_1^*} \int_{\Gamma_1^* - \vec{\kappa}_1} e^{i (\vec{q}_1 - \vec{G}_1) \cdot ( \vec{R}_1 + \vec{\tau}_1^\sigma)} \\
        &\sum_{j = 1}^3 T^{\sigma \sigma'}_j \int_{\field{R}^2} e^{- i \left( \frac{\vec{q}_1 - \vec{G}_1}{\gamma} \right) \cdot \vec{X}} e^{- i \frac{\mathfrak{s}_j \cdot \vec{X}}{\gamma}} f^{\sigma'}_2\left(\vec{X} ,\gamma \tau\right) \; \text{d}\vec{X} \; \text{d}\vec{q}_1 e^{i \vec{\kappa}_1 \cdot (\vec{R}_1 + \vec{\tau}_1^\sigma)}.
    \end{split}
\end{equation}
Replacing the sum over $\vec{G}_1$ and integral over $\vec{q}_1$ by a single integral over $\field{R}^2$, and then evaluating this integral, we find 
\begin{equation}
    \gamma^{-1} \oldhat{h}(|\vec{\kappa}|;\ell) \sum_{\sigma' \in \{A,B\}} \int_{\field{R}^2} \delta \left( \vec{R}_1 + \vec{\tau}_1^\sigma - \frac{\vec{X}}{\gamma} \right) \sum_{j = 1}^3 T^{\sigma \sigma'}_j e^{- i \frac{\mathfrak{s}_j \cdot \vec{X}}{\gamma}} f^{\sigma'}_2\left(\vec{X} ,\gamma \tau\right) \; \text{d}\vec{X} \; e^{i \vec{\kappa}_1 \cdot (\vec{R}_1 + \vec{\tau}_1^\sigma)}.
\end{equation}
Carrying out the integral over $\vec{X}$ after changing variables to $\frac{\vec{X}}{\gamma}$ yields \eqref{eq:off_diag_decomp}.

\section{Proof of Lemma \ref{lem:BM_sols}} \label{sec:sobolev_bounds}

First, note that the operator $H_{\text{BM}}$ with the domain $H^1(\mathbb{R}^2;\mathbb{C}^4)$ is an unbounded self-adjoint operator in $L^2(\mathbb{R}^2;\mathbb{C}^4)$. To see this, note that the diagonal part of $H_{\text{BM}}$ is straightforwardly self-adjoint using the Fourier transform (see, for example, \cite{1975ReedSimon}), while the off-diagonal terms are a bounded symmetric perturbation and hence do not affect this property (see, for example, \cite{CyconFroeseKirschSimon}). So \eqref{eq:unitary_propagator} now follows since $- i H_{\text{BM}}$ generates a unitary propagator in $L^2(\mathbb{R}^2;\mathbb{C}^4)$ (see, for example, \cite{Pazy}).

To see \eqref{eq:preservation_of_regularity}, we first note that, using Lemma \ref{lem:scaled_moire}, we can find $\mu > 0$ depending on $\lambda_0$ and $\lambda_1$, but independent of $\gamma$, such that
    \begin{equation}
        \| H_{\text{BM}}^2 + \mu \|_{L^2(\mathbb{R}^2;\mathbb{C}^4) \rightarrow L^2(\mathbb{R}^2;\mathbb{C}^4)} \geq \frac{\mu}{2},
    \end{equation}
so that $H_{\text{BM}}^2 + \mu$ is invertible. It is also, clearly, a second-order elliptic operator. Hence, we now have, by elliptic theory \cite{Taylor1}, the existence of constants $C_{s,1}, C_{s,2} > 0$ for each positive integer $s$ such that
\begin{equation}
    C_{s,1} \| f \|_{H^s(\mathbb{R}^2;\mathbb{C}^4)} \leq \| ( H^2_{\text{BM}} + \mu )^{s/2} f \|_{L^2(\mathbb{R}^2;\mathbb{C}^4)} \leq C_{s,2} \| f \|_{H^s(\mathbb{R}^2;\mathbb{C}^4)}.
\end{equation}
Using Lemma \ref{lem:scaled_moire} again, these constants may depend on $\lambda_0$ and $\lambda_1$, but can be chosen independently of $\gamma$. We now use the fact that the unitary propagator generated by $- i H_{\text{BM}}$ commutes with functions of $H_{\text{BM}}$ to see that, for any positive integer $s$,
    \begin{equation}
    \begin{split}
        &\| f(\cdot,T) \|_{H^s(\mathbb{R}^2;\mathbb{C}^4)} \leq \\
        &C_{1,s}^{-1} \left\| ( H_{\text{BM}}^2 + \mu )^{s/2} f(\cdot,T) \right\|_{L^2(\mathbb{R}^2;\mathbb{C}^4)} = C_{1,s}^{-1} \left\| ( H_{\text{BM}}^2 + \mu )^{s/2} f_0(\cdot) \right\|_{L^2(\mathbb{R}^2;\mathbb{C}^4)} \\
        &\quad \quad \quad \quad \leq C_{1,s}^{-1} C_{2,s} \| f_0 \|_{H^s(\mathbb{R}^2;\mathbb{C}^4)},
    \end{split}
    \end{equation}
which proves \eqref{eq:preservation_of_regularity}. We emphasize again the importance of Lemma \ref{lem:scaled_moire}, which follows from the scaling of $\theta$ with respect to $\gamma$ \eqref{eq:balance}. Without this scaling, \eqref{eq:preservation_of_regularity} would not hold with constants independent of $\gamma$ because $H_{\text{BM}}^2$ involves derivatives of the moir\'e potential which grow without bound for fixed $\theta$ as $\gamma \downarrow 0$ (see \eqref{eq:bad_scaled_moire}).

\end{document}